\def\blfootnote{\gdef\@thefnmark{}\@footnotetext}
\definecolor{Red}{HTML}{E53E30} 
\definecolor{Green}{HTML}{00AD69} 
\definecolor{Blue}{HTML}{2171b5}
\definecolor{Purple}{HTML}{652F6C} 
\newcommand{\R}{{\mathbb R}}
\newcommand{\C}{{\mathbb C}}
\newcommand{\diag}{\text{diag}}
\newcommand{\Id}{\text{Id}}
\newcommand{\innerproduct}[1]{ \langle #1 \rangle }
\newcommand{\vertiii}[1]{{\left\vert\kern-0.25ex\left\vert\kern-0.25ex\left\vert #1 
		\right\vert\kern-0.25ex\right\vert\kern-0.25ex\right\vert}}
\DeclareMathOperator{\trace}{tr}
\DeclareMathOperator{\rank}{rk}
\newtheorem{theorem}{Theorem}[section]
\newtheorem{lemma}{Lemma}[section]
\newtheorem{corollary}{Corollary}[section]
\newtheorem{definition}{Definition}[section]
\newtheorem{proposition}{Proposition}[section]
\theoremstyle{plain}
\newtheorem*{proof}{Proof} 
\newtheorem*{sproof}{Proof sketch}
\newcites{LG}{References}
\begin{document}
\title{Proof methods for robust low-rank matrix recovery}
\author{Tim Fuchs\thanks{\quad Corresponding author: {tim.fuchs@tum.de}}\; \thanks{\quad Department of Mathematics, Technical University of Munich, Germany}, \ David Gross\thanks{\quad Institute for Theoretical Physics, University of Cologne, Germany}, \ Peter Jung\thanks{\quad Communications and Information Theory Group, Technische Universität Berlin, Germany}\; \thanks{\quad Data Science in Earth Observation, Technical University of Munich, Germany}, \ Felix Krahmer\footnotemark[2],\\ \ Richard Kueng\thanks{\quad Institute for Integrated Circuits, Johannes Kepler University Linz, Austria}, \ Dominik Stöger\thanks{\quad Ming Hsieh Department of Electrical and Computer Engineering, University of Southern California, USA}}

\maketitle

\abstract{
Low-rank matrix recovery problems arise naturally as mathematical formulations of various
inverse problems, such as matrix completion, blind deconvolution, and phase retrieval. 
 Over the last two decades, a number of works have rigorously analyzed the reconstruction performance for such scenarios, giving rise to a rather general understanding of the potential and the limitations of low-rank matrix models in sensing problems. 
  In this article, we compare the two main proof techniques that have been paving the way to a rigorous analysis, discuss their potential and limitations, and survey their successful applications.
  On the one hand, we review approaches  based on descent cone analysis, showing that they often lead to strong  guarantees even in the presence of adversarial noise, but face limitations when it comes to structured observations.
  On the other hand, we discuss techniques using approximate dual certificates and the golfing scheme, which are often better suited to deal with practical measurement structures, but sometimes lead to weaker guarantees.
  Lastly, we review recent progress towards analyzing descent cones also for structured scenarios---exploiting the idea of splitting the cones into multiple parts that are analyzed via different techniques.
}

\clearpage
\tableofcontents
\clearpage
\section{Introduction}

Computationally tractable data acquisition in high dimensions is a fundamental problem in various real-world applications in signal processing, data science, and physics.
Nyquist sampling or scanning the data in full is often unfeasible.
This motivates the use of compressive observation schemes, which employ regularization methods to revover as much of the signal as possible from seemingly incomplete observations.
Thus, quantifying the trade-off between sample complexity and reconstruction accuracy has become a key task for identification of feasible regimes and the design of efficient approaches for sensing and reconstruction.
These questions have been central in the area of inverse problems for many years. 
However, starting in the early 2000's, a highly successful novel viewpoint has emerged. 
Namely initiated by the influential works on compressed sensing \cite{Candes2005,candes:stablesignalrecovery}, various authors have studied the problem of what can be gained when the measurements 
can be optimized over all vectors or within a structural measurement framework \cite{foucart2013mathematical}.
Commonly, the term compressed sensing is used nowadays also for more general sensing scenarios beyond the initial setup that follow this paradigm. 

In this generality, compressed sensing is therefore concerned with the recovery of structured data, i.e., data that lives on a low-dimensional subset embedded in a high-dimensional space, from a number of 
observations that scales with the intrinsic dimension, rather than the
ambient dimension. 
As it turns out, for a large class of different
measurement models combined with various structural constraints,
choosing the free parameters of the measurement scheme at random leads to near-optimal performance. 

Initially, the model of a non-trivial but relevant low-dimensional set was given by \emph{sparse vectors}.
For matrices, a natural basis-independent notion of sparsity is ``sparsity in the eigenbasis'', i.e.\ \emph{low rank}, and we are thus led to studying the \emph{low-rank matrix recovery problem}:
Estimate an unknown $n_1\times n_2$-matrix $X_0$ from $m$ observations of the form
\begin{align}\label{eqn:measurement model}
y= \mathcal{A} \left(X_0\right) + e \in \mathbb{C}^m,
\end{align}
where $\mathcal{A}$ is a known linear measurement operator and $e$ is additive noise. 
Here and in the following, we will use the notation
\begin{align}\label{eq:measurement-operator}
\mathcal{A}(X_0)(i) := \langle A_i, X_0 \rangle \quad A_i \in \mathbb{C}^{n_1 \times n_2}
\end{align}
which expresses the $i$-th component of the measurement as the Frobenius inner product with a matrix $A_i$.
The problem is interesting in the regime 
\begin{align*}
  \rank(X_0) \max \{n_1, n_2\} \leq m \ll n_1 n_2,
\end{align*}
where $\mathrm{rk}(X_0)$ denotes the matrix rank.
Assume for concreteness that we have the bound
$\Vert e \Vert_2 \le \tau $ on the noise strength ($\|\cdot\|_2$
denotes the $\ell_2$-norm of a vector).
As $X_0$ has low-rank one could naively try to estimate $X_0$ by solving the following minimization problem
\begin{align*}
\begin{split}
\underset{X \in \mathbb{C}^{n_1 \times n_2}}{\text{minimize}} \quad  & \rank \left(X\right) \\
\text{subject to} \quad & \Vert  \mathcal{A} \left(X\right) -y  \Vert_2 \le \tau. 
\end{split}
\end{align*}
Unfortunately, problems of this type are NP-hard in general, as
minimizing the support size of a vector (i.e.\ finding the sparsest solution) can be
considered as a special case \cite{natarajan1995sparse}. Therefore, in \cite{fazel2001rank} it was proposed to use the nuclear
norm $\Vert \cdot \Vert_{\ast} $ (sum of singular values) as a proxy for the rank.  
For this reason, 
the following approach was has been suggested
\cite{candes2009exact,candes2010power,gross2011recovering,recht2011simpler,chen_coherenceoptimal}
for matrix completion:
\begin{align}\label{SDP_MC}
\begin{split}
\underset{X \in \mathbb{C}^{n_1 \times n_2}}{\text{minimize}} \quad  &\Vert X \Vert_{\ast} \\
\text{subject to} \quad & \Vert  \mathcal{A} \left(X\right) -y  \Vert_2 \le \tau. 
\end{split}
\end{align}
It is the analysis of this semi-definite program (SDP) we are concerned with in the present article.
Before tackling the technical details, we briefly list three important applications of the framework of low-rank matrix recovery.

\subsection{Sample applications}
In this section we highlight three famous applications of low-rank matrix recovery which have been investigated intensively in the last years.

\subsubsection{Matrix completion}\label{subsub:matrixcompletion}
Maybe the most natural instantiation of the general model (\ref{eq:measurement-operator}) is the case where the measurements reveal individual matrix elements
\begin{align}\label{equ:MCopdefinition}
  \mathcal{A} \left(X\right) \left(i\right)
  := 
  \sqrt{ \tfrac{n_1 n_2}{m}} 
  \innerproduct{X,e_{a_i}e^*_{b_i}}_F
  = 
  \sqrt{ \tfrac{n_1 n_2}{m}} 
  X_{a_i, b_i},
\end{align}
where $\{e_{a_i}\}$ and $\{e_{b_i} \}$ denote the standard basis of $\mathbb{C}^{n_1}$ and $\mathbb{C}^{n_2}$, respectively.
This is the \emph{matrix completion problem}. 
Since it arises in many different applications such as multiclass
learning \cite{argyriou2008convex}, collaborative filtering
\cite{collabfiltering} and as distance matrix completion problem in
sensor localization tasks \cite{Javanmard2013}, see here also Figure
\ref{fig:localization}, it has become very popular in the last decade and has been studied intensively in the statistics, machine learning, and signal processing literature.

\begin{figure}
  \begin{center}
   \includegraphics[width=0.85\linewidth]{./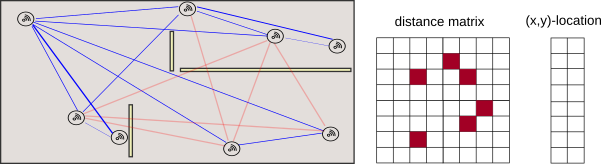}
 \end{center}
  \captionsetup{width=0.85\linewidth}
 \caption{{\em Distance matrix completion:}   
One source of low-rank matrices are \emph{Gram matrices}
   which encode the Euclidean geometries of point sets.  The task of
   recovering Gram matrices is related to the one of finding
   \emph{distance matrices} from few pairwise distances. This problem
   appears for example in sensor localization, see
   e.g. \cite{Javanmard2013}. Wireless sensors are distributed in an
   area (indoor room, industry hall etc.) and measure individual
   signal strengths but obstacles like walls block certain path
   directions (red).
   The goal is to complete the matrix of pairwise distances and
   compute the sensor locations.
   Recall that given $n$ points
   $\{x_i\}_{i=1}^n\subset\mathbb{R}^d$, 
   the Gram matrix $G_{i,j} = \langle x_i, x_j\rangle$ has rank
   upper-bounded by $d$, independent of $n$.}
\label{fig:localization}
\end{figure}

Assume that the matrix elements $(a_i, b_i)$ for $i\in[m]:=\{1\dots m\}$ to be revealed are chosen independently and uniformly among all $n_1 \times n_2$ possibilities.
It is clear that not \emph{all} low-rank matrices can be efficiently recovered from few such measurements.
For example, if $X$ has a single non-zero entry, then unless $m=O(n_1 n_2)$, the probability that any non-zero information is obtained is small.

To identify a set of well-behaved instances,
Ref.~\cite{candes2009exact} introduced the following two \emph{coherence parameters} 
\begin{align*}
\mu \left( U \right)&:= \sqrt{\tfrac{n_1}{r}} \underset{i \in \left[n_1\right]}{\max} \Vert U^* e_i \Vert_2 \\
\mu \left( V \right)&:= \sqrt{\tfrac{n_2}{r}} \underset{i \in \left[n_2\right]}{\max} \Vert V^* e_i \Vert_2
\end{align*}
where $X_0=U\Sigma V^*$ with $U\in \C^{n_1 \times r} $ and $V\in \C^{n_2 \times r} $   denotes the singular value decomposition (SVD). 
Indeed, it was shown in \cite{candes2010power} that 
\begin{align}
   m\gtrsim n_1 r \log n_1 \max \left\{ \mu^2 \left( U \right), \mu^2 \left( V \right)    \right\} \label{eq:necessary}
\end{align} 
observations are necessary for an rank-$r$ matrix $X_0$ to be uniquely determined from the revealed entries.\\
Subsequently, a series of works \cite{gross2011recovering,recht2011simpler} established that this sampling rate is almost sufficient as well.
Compared to Eq.~\eqref{eq:necessary}, an additional $\log (n)$-factor and a third incoherence parameter suffice to ensure exact recovery via nuclear norm minimization~\eqref{SDP_MC}. See Section~\ref{sec:golfing} for a detailed statement and proof sketch.

\subsubsection{Blind deconvolution}\label{section:blinddeconv}

Blind deconvolution \cite{Haykin1994} refers to the problem of
recovering a signal $x \in \mathbb{C}^L$ from the noisy convolution $w\ast x + e \in \mathbb{C}^L$,
where $w \in \mathbb{C}^L$ is an unknown kernel and $e \in \mathbb{C}^L$ refers to additive noise. When
using appropriate cyclic extensions or considering zero-padding the
convolution can rewritten as a circular convolution
\begin{align}
(w\ast x)(i):= \sum_{j=1}^L w_j x_{i-j} \quad \text{for $i\in[L]$.} \label{eq:convolution}
\end{align}
The difference $i-j $ is considered modulo $L$. As prototypical
example of a bilinear inverse problem, blind deconvolution refers to
recovering $(x,w)$ from a noisy version of $w\ast x$ and the precise role of $x$ and $w$ depends
on the underlying application.
In imaging, for example, one considers such a problem for the
two-dimensional convolution. The signal $x$ typically
represents the image and $w$ is an unknown blurring kernel
\cite{Stockham1975a}. In communication engineering, the discrete model
above describes the effective convolution in complex baseband. Hence, $w$
represents the sampled impulse response of the transmission channel
and the task is to demodulate and decode information from the signal
vector $x$, only having access to the noisy channel output $w\ast x+e$, see
Figure \ref{fig:bd}. The conventional coherent approach in this application is
to send known pilot signals, to first estimate $w$ and then demodulate
later information-bearing signals $x$. However, this approach is not
feasible for short signals $x$ and for communication at low latency or
high mobility. For communication engineers, the important
question is then how much overhead is required for coping with the
unknown impulse response $w$ of the communication channel
\cite{Godard1980} when using non-coherent strategies \cite{Walk:TWC:MOCZ:basicprinciples}.

Of course, blind deconvolution is a highly underdetermined bilinear inverse
problem.  Without further
assumptions, recovery is only possible up to inherent ambiguities
\cite{Choudhary2013,walk2016ambiguities}. To avoid non-trivial ambiguities one
has to further constrain the vectors, for example by assuming that $x$
and $w$ lie in $N$ and $K$-dimensional subspaces,
respectively. As we will outline below, this yields to the problem of
recovering $N\times K$ matrices from $L$ observations. To be more
compliant with existing works in the literature we will stick to this
notation implying that $n_1=N$, $n_2=K$ and $m=L$. In formulas, we assume that $w=B h_0$ and $x= C \overline{m}_0$ for given $ B \in
\C^{L \times K}$, $ C \in
\C^{L \times N}$ and unknown $h_0\in \C^{K}$, $m_0\in \C^{N}$. Then the measurement operator acting on $h_0$ and $m_0$ is known to be generically injective up to the unavoidable scaling ambiguity if and only if $L\geq 2(N+K-4)$ \cite{LLB15, KK17}. That is, one aims for sampling complexities that are near-linear in $N+K$.

Following \cite{ahmed2014blind}, we
consider the case that $B$ is a fixed matrix such that $B^* B = \Id$ and $C$ is a random matrix with i.i.d complex normal entries $ C_{ij} \overset{\textit{iid}}{\sim} \mathcal{CN} (0, 1/\sqrt{L}) $.
The choice of the \textit{random} matrix $C$ is motivated by the
success of randomization in compressed sensing as well as by
applications in wireless communications. Here $m$ contains a message
to be transmitted and $C$ is a coding matrix. The signal
$x=C\overline{m}_0$ gets transmitted through a time-invariant channel,
which can be modeled as a circular convolution with impulse response
$w$ when using an appropriate cyclic prefix.

In many applications it is reasonable to assume that only the first
few entries of $w$ are non-zero as the path delays are often much
shorter than the length of the signals $x$. In this case $B$ would be
the matrix which extends $h_0 \in \mathbb{C}^K$ by zeros. Hence, the
receiver observes $w\ast x +e$, where $e$ represents additive noise,
and the goal is to reconstruct the original message contained in
the vector $m_0$.\\
\begin{figure}[ht]
	\centering
	\includegraphics[width=.7\linewidth]{./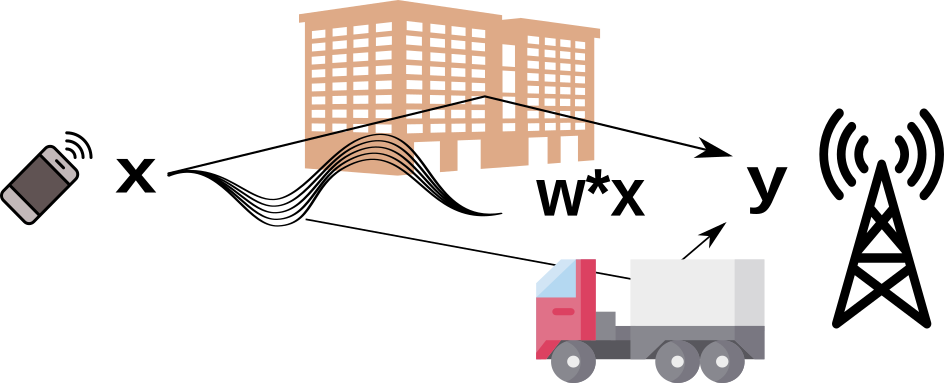}
	\captionsetup{width=0.85\linewidth}
	\caption{\emph{Blind deconvolution}: a wireless device transmits a
          signal $x$ received by the basestation. Due to reflections in the
          environment the signal experiences an unknown channel
          distortion, represented by a convolution $w\ast x$ with the
          impulse response $w$.}
	\label{fig:bd}
\end{figure}

Now let $F\in \C^{L \times L}$ be the unitary discrete Fourier
transformation matrix. It is well-known that $F$ diagonalizes the
circular convolution, i.e.,
\begin{align*}
\widehat{w \ast x} := F \left( w\ast x \right)  = \sqrt{L}  \diag\left(FBh_0\right)FC\overline{m}_0.
\end{align*}
Let $b_{\ell}$ denote the $\ell$th row of $\overline{   FB } $ and let $c_{\ell}$ denote the $\ell$th row of $\sqrt{L}FC$. Note that this implies that all the entries of $ \left\{ c_{\ell} \right\}_{\ell=1}^L $ are jointly independent and have distribution $ \mathcal{CN} \left(0,1\right) $.  Moreover, we obtain that
\begin{align*}
\left( \widehat{w \ast x} \right)_{\ell}= b^*_{\ell} h_0 m^*_0 c_{\ell}  = \innerproduct{ b_{\ell} c^*_{\ell}, h_0 m^*_0 }.
\end{align*}
We observe that $\widehat{w \ast x}  $ is linear in the $K\times N$
matrix $h_0 m^*_0$. This motivates the definition of the linear operator $\mathcal{A}: \C^{K \times N} \rightarrow \mathbb{C}^L $ by
\begin{align}\label{def:Ablinddeconv}
\left(\mathcal{A} \left(X\right) \right) \left(\ell\right) := \innerproduct{b_{\ell} c_{\ell}^*, X} \quad \text{where 
$\ell \in[L]$.}
\end{align}
Hence, we obtain the model
\begin{align*}
y:=\widehat{w \ast x} +e= \mathcal{A} \left(X_0\right) + e,
\end{align*}
where $X_0=h_0 m^*_0$ and $e\in \C^L$ represents noise with $\Vert e \Vert_2 \le \tau $. Note that $X_0$ is a rank-one matrix. 
This reformulation effectively reduces blind deconvolution to a low rank matrix recovery problem, where measurement matrices correspond to outer products $A_{\ell} = b_\ell c_\ell^*$.

If in addition, a sparsity constraint is to be imposed, the problem becomes considerably more difficult. In particular, linear combinations of the convex regularizers no longer lead to sample-efficient recovery guarantees \cite{oymak2015simultaneously} even when using optimal tuning \cite{Kliesch2019}. Only under additional structural assumptions, recovery guarantees are available using an alternating minimization approach
\cite{lee2017blind,geppert2018sparse}. This, however, is beyond the scope of this article.

\subsubsection{Phase retrieval}

\begin{figure}
    \centering
\begin{tikzpicture}[baseline,scale=0.7]
\draw[] (6,-3.375) rectangle (6.5,3.375);
\foreach \y in {-3.125,-2.875,...,3.125}
{\draw (6,\y) -- (6.5,\y);};
\draw[<-] (6.75,0) -- (7.25,0);
\node at (7.5,0) {$y_k$};
\node[text width=3cm] at (6,-4.25) {detector in image plane};
\fill[Blue,opacity=0.75,path fading=north] (-8,0) rectangle (-4,0.5);
\fill[Blue,opacity=0.75,path fading=south] (-8,0) rectangle (-4,-0.5);
\fill[Blue,opacity=0.5,path fading=north] (-4,0) -- (6,0) -- (6,3.375) -- (-4,0.5) -- (-4,0);
\fill[Blue,opacity=0.5,path fading=south] (-4,0) -- (6,0) -- (6,-3.375) -- (-4,-0.5) -- (-4,0);
\node at (-7,-1) {\textcolor{Blue}{X-ray}};
\fill[white] (-4,0) ellipse (0.125cm and 0.5cm);
\draw[fill=Green] (-4,0) ellipse (0.125cm and 0.5cm);
\node at (-4,1) {\textcolor{Green}{probe}};
\fill[white,opacity=0.5] (-3,-2) rectangle (-2.75,2);
\draw[fill=black,opacity=0.25] (-3,-2) rectangle (-2.75,2);
\foreach \y in {0,1,2,3,4,5,6,7}
{\draw[fill=black,opacity=0.25] (-3,-2+0.5*\y) rectangle (-2.75,-1.75+0.5*\y);
};
\node at (-3,-2.5) {illumination mask};
\end{tikzpicture}
	\captionsetup{width=0.85\linewidth}
    \caption{\emph{
    Phase retrieval:}
    In diffraction imaging, a probe 
    is illuminated by coherent X-ray light. 
    The resulting diffraction pattern is first modulated by an illumination mask 
    and recorded at detectors in the 2D image plane. Importantly, these detectors can only record intensities, not phases: $y_k = \left|\langle f_k, D^* x \rangle \right|^2$, where $x \in \mathbb{C}^n$ encodes the microscopic structure of the probe, $D=\mathrm{diag}(d_1,\ldots,d_n)$ describes the illumination mask and $f_k \in \mathbb{C}^n$ is a Fourier vector (Fraunhofer approximation to the diffraction equation).
}
    \label{fig:phase-retrieval}
\end{figure}
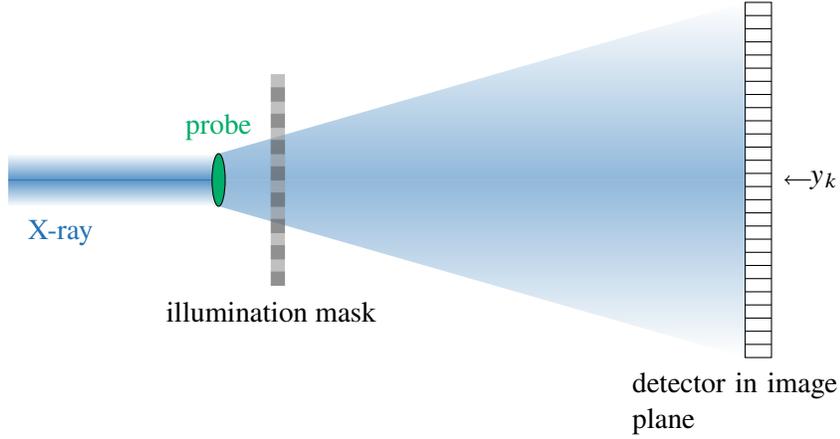

Another instance of a challenging inverse problem is \emph{phase retrieval} -- an important problem with a long history that dates back to the 60s \cite{Wal63}. It occurs naturally in X-ray crystallography \cite{harrison1993phase,millane1990phase}, astronomy \cite{fienup1987phase}, ptychography \cite{rodenburg2008ptychography,HCO+15ptychography}, and quantum tomography \cite{kueng2015orthonormal,kueng2017low}.
We refer to Figure~\ref{fig:phase-retrieval} for a visual illustration.
Mathematically speaking, the discrete phase retrieval problem asks for
inferring a complex signal vector $x \in \mathbb{C}^n$ from $m$
measurements of the form (noiseless for simplicity)
\begin{align}
\tilde{y}_i = \left| \langle a_i,  x_0 \rangle \right|, \quad i\in  [m]. \label{eq:phaseless_measurements1}
\end{align}

This problem 
cannot be solved unless the measurement system is overcomplete because all phase information is lost in the measurement process. More precisely, it has been shown that one needs $m\geq 4n-4$ generic measurements to ensure that there is a unique solution \cite{CONCA2015346}.

If in contrast, one instead had access to the complex phases $\phi_k$ of $\langle
a_k, x_0 \rangle$, this problem would reduce to solving a linear system of equations:
\begin{align}
\Phi \tilde{y} = A x_0,	\label{eq:phaseless_simplified}
\end{align}
where $\Phi = \sum_{k=1}^m \bar{\phi}_k e_k e_k^*$ and $A = \sum_{k=1}^m e_k a_k^*$ subsumes the measurement process. 
Crucially for phase retrieval, we do not know $\Phi$ in
\eqref{eq:phaseless_simplified}. One intuitive approach to recovering $x_0$ is performing a least-squares minimization over both unknowns:
\begin{align}
\underset{\Phi \in U(m),x \in \mathbb{C}^n}{\textrm{minimize}} \quad \left\| \Phi \tilde{y} - Ax \right\|_2, \label{eq:alternating_minimization}
\end{align}
$\Phi \in U(m)$ is unitary and diagonal in the standard basis.
Although NP hard in general, heuristic approaches exist for solving non-convex problems of this form. One such heuristics is \emph{alternating minimization}, see e.g\ \cite{Fien82,MiaChaKirAy99}.
This is an iterative algorithm, where one alternates between keeping $x$ fixed and minimizing $\Phi$ and, vice-versa: fixing $\Phi$ and optimizing over $x$. Very few theoretical guarantees regarding its performance are known. 

Given the importance of the problem and the lack of mathematical
understanding, obtaining theoretical guarantees for phase retrieval is
highly desirable. In order to do so, we will follow a different
direction pioneered by Balan, Bodmann, Casazza and Eddidin
\cite{BalBodCasEdi09}: lift the quadratic phase retrieval problem to a
linear inverse problem on
positive-semidefinite $n\times n$ matrices:
\begin{align}
y_i = \left| \langle a_i, x_0 \rangle \right|^2 = \mathrm{tr} \langle a_i a_i^*\; X_0 \rangle \quad \text{where} \quad X_0 = x_0x^*_0 \in \mathbb{C}^{n \times n} \label{eq:lifted-measurements}
\end{align}
is proportional to the orthoprojector onto $\mathrm{span}(x_0) \subset \mathbb{C}^n$.
By construction, the desired solution is a Hermitian $n \times n$ matrix with minimal rank ($\mathrm{rk}(X_0)=1$). 
Following Refs.~\cite{CanEldStrVor15,CanStrVor13}, we can exploit this intrinsic rank constraint via constrained nuclear norm minimization~\eqref{SDP_MC}.
This approach effectively reduces the phase retrieval problem to a Hermitian low rank matrix recovery problem, where each linear measurement~\eqref{eq:measurement-operator} must only involve (Hermitian) outer products:
\begin{align*}
y_i = \mathcal{A}\left(x_0x^*_0\right)(i) = \langle A_k, x_0x^*_0 \rangle, \quad \text{where $A_i = a_i a_i^* \in \mathbb{H}_n$ and $i\in[m]$}.
\end{align*}
The reformulation of phase retrieval as a low rank matrix recovery problem has led to the establishment of rigorous recovery guarantees.
By and large, these apply to randomly selected measurement vectors that are sufficiently ``generic''. Exemplary is the main result from Ref.~\cite{candes2014solving}: already $m \gtrsim n$ standard complex Gaussian measurements $a_1,\ldots,a_m \overset{\textit{iid}}{\sim}\mathcal{CN}(0,I)$ suffice to ensure correct recovery. 
Subsequent research has led to similar recovery guarantees for phaseless measurements that are less generic 
\cite{kech15deterministic,candes2015phase,gross2017improved}. We will present two such arguments further below. In Section~\ref{sub:phase-retrieval} we partially derandomize the recovery guarantee for Gaussian measurements by executing a descent-cone analysis. 

We conclude by emphasizing that the phase retrieval problem admits a clean reformulation in terms of low rank matrix recovery. 
This is an ideal starting point for developing rigorous convergence guarantees, but might come with a considerable algorithmic overhead.
After all, we have replaced a non-convex problem over $n$-dimensional vectors by a convex problem over (Hermitian) $n \times n$ matrices~\eqref{SDP_MC}. General purpose solvers, like CVX, quickly run into storage issues as the problem dimension $n$ increases. 
This motivated the development and rigorous analysis of non-convex phase retrieval algorithms.
These include gradient-descent type algorithms on $\mathbb{C}^n$ \cite{candes2015wirtinger,cai2016optimal,truncatedwirtinger}, as well as non-convex approaches based on matrix factorization \cite{BM03Burer-Monteiro,HCO+15ptychography}.
In parallel, the development of matrix sketching algorithms led to substantial storage \& runtime improvements for solving certain convex optimization problems \cite{TYUC17practical-sketching,YUTC17sketchy-decisions}. Importantly, these also apply to lifted phase retrieval and ensure algorithmic tractability even for moderate to large problem sizes \cite{YUTC17sketchy-decisions}.
So, recovery guarantees for lifted phase retrieval -- like the ones presented in this article -- are also of algorithmic relevance.

\subsection{This work}

In this article, we take a look back at more than a decade of rapid progress concerning randomized 
inverse problems for matrix recovery. A complete treatment of all interesting developments would go way beyond the scope of a single article and we choose to focus on one aspect: mathematically rigorous recovery guarantees for the reconstruction of low-rank matrices from generic as well as structured measurements. 

With the benefit of hindsight, we review two versatile proof techniques and put them into context, namely
the descent cone analysis, as well as the construction of approximate dual certificates.

Section~\ref{sec:descent-cone} deals with the descent cone analysis. That is, low-rank matrix recovery guarantees are obtained by analyzing the relative geometric orientation of the optimization problem's feasible space with respect to the objective function's descent cone anchored at the signal $X_0$ of interest. Exact and unique recovery happens if and only if the intersection of these two convex objects only contains a single point.
Deep results from high-dimensional probability theory

show that this desirable event happens with overwhelming probability, provided that the measurements are sampled independently from sufficiently generic ensembles.
Prominent example applications include optimal generic low-rank matrix recovery (Sub.~\ref{sub:generic-matrix-recovery}), as well as phase retrieval from generic measurement vectors (Sub.~\ref{sub:phase-retrieval}). Although geometrically appealing, this proof technique is not without limitations. It struggles to handle less generic problems, where additional structure -- like incoherence of the unknown signals -- is essential to rule out exceptional problem instances where the reconstruction must necessarily fail. Moreover, this technique does not always give precise insights into the noise-robustness of the reconstruction schemes (Sub.~\ref{sub:limitations}).

Section~\ref{sec:golfing} introduces an alternative proof technique based on duality of convex optimization. 
Convex optimization problems -- like nuclear norm minimization~\eqref{SDP_MC} -- come in pairs and the two problems have a duality gap: objective function values of the primal problem are always smaller or equal to objective function values of the dual problem. Equality occurs if and only if both primal and dual solution are optimal. This, in turn implies, that optimality of a certain feasible point, say $X_0$, can be certified by constructing a dual feasible point that achieves the same objective function value. 
What is more, exact feasibility is not required to certify optimality of $X_0$ for constrained nuclear norm minimization. 
An \emph{approximate dual cerfiticate} suffices, provided that the measurement operator fulfills certain additional properties (Sub.~\ref{sec:dual}).

We will then describe how to construct approximate dual certificates via a probabilistic method -- the so-called golfing scheme (Sub.~\ref{sec:golfing_scheme}). A key advantage of the golfing scheme is that it can be applied to problems with incoherence constraints, where it is not immediately clear how to apply the methods described in Section~\ref{sec:descent-cone}. 
Concrete example applications are matrix completion (Sub.~\ref{sec:app matrix completion}), blind deconvolution and demixing (Sub.~\ref{sec:blind_demixing}), and phase retrieval with incoherence (Sub.~\ref{sec_PR_incoherence}).

Approximate dual certificates do also have their downsides, however. Chief among them is noise robustness. 
In Section~\ref{sec:descent_cone_refined},  
we refine the descent cone arguments introduced in Sub.~\ref{sec:descent-cone} 
to precisely handle noise corruptions.
This leads to near-optimal blind deconvolution guarantees in the high-noise regime (Sub.~\ref{sec:blinddeconv_refined}), as well as novel insights into the phase retrieval problem (Sub.~\ref{sec:phase_incoherence2}).

\section{Recovery guarantees via a descent cone analysis} \label{sec:descent-cone}

\subsection{Descent cone analysis}\label{section:DCA}

Recalling the linear inverse problem \eqref{eqn:measurement model}
$y= \mathcal{A} \left(X_0\right) + e \in \mathbb{C}^m$, there usually is a large set of possible solutions for which $\mathcal{A}$ does not deviate too much from $y$. 
Further properties, such as low rank, can be obtained by minimizing an appropriate function $f: \mathbb{C}^{n_1 \times n_2}\rightarrow \R$ over this set.
If $f$ yields low values only for a small subset of $\{X\in\C^{n_1\times n_2}: \|\mathcal{A}\left(X\right)-y\|_2\le \tau\}$ (or $\{X\in\C^{n_1\times n_2}: \mathcal{A}\left(X\right)=y\}$ in the noiseless case), recovery guarantees can be obtained. This motivates descent cone analysis.

The descent cone $\mathcal{D}(f,X_0)$ of a proper convex function $f: \mathbb{C}^{n_1 \times n_2}\rightarrow \R$ at a point $X_0 \in \mathbb{C}^{n_1 \times n_2}$ is the conic hull of directions in which $f$ decreases near $X_0$: 
\begin{align*}
    \mathcal{D}(f,X_0) :=\{Z\in \mathbb{C}^{n_1 \times n_2}: f(X_0 + \epsilon Z) \le f(X_0) \text{ for some } \epsilon > 0\}.
\end{align*}

Descent cone analysis can facilitate the estimation of probability of success for solving linear inverse problems with optimization.
Consider the following two convex optimization problems (left: noiseless, right: noisy measurements) 
\begin{multicols}{2}
\noindent
\begin{align}\label{OP_nonoise}
\begin{split}
\text{minimize } \quad  & f \left(X\right) \\
\text{subject to} \quad & \mathcal{A} \left(X\right) = y.
\end{split}
\end{align}
\begin{align}\label{OP_noise}
\begin{split}
\text{minimize } \quad  & f \left(X\right) \\
\text{subject to} \quad & \|\mathcal{A} \left(X\right) - y\|_2\le \tau.
\end{split}
\end{align}
\end{multicols}

\begin{figure}[h!]
\centering
\begin{tabular}{ccc}
\begin{tikzpicture}[baseline,scale=0.7]
\begin{scope}[rotate=12]
\draw[thick,Red] (-3,0) -- (3,0); 
\node at (3,0.5) {\textcolor{Red}{$\mathrm{ker}(\mathcal{A})$}};
\fill[black] (0,0) circle(0.075);
\node at (0,0.25) {$0$};
\fill[Blue,opacity=0.25] (0,0) -- (-1.75,-2) -- (1.75,-2) -- (0,0);
\draw[Blue] (0,0) -- (-1.75,-2);
\draw[Blue] (0,0) -- (1.75,-2);
\fill[Blue,opacity=0.25] (1.75,-2) arc(-45:-(180-45):2.5);
\node at (0,-2) {\textcolor{Blue}{$\mathcal{D}(f,X_0)$}};
\end{scope}
\end{tikzpicture}
& \hspace{1cm} & 
\begin{tikzpicture}[baseline,scale=0.7]
\begin{scope}[rotate=12]
\draw[thick,Red] (-3,0) -- (3,0); 
\draw[Red] (-2.5,0.5) -- (2.5,0.5);
\draw[Red] (-2.5,-0.5) -- (2.5,-0.5);
\fill[Red,opacity=0.125] (-2.5,-0.5) rectangle (2.5,0.5);
\node at (2,1) {\textcolor{Red}{$\| \mathcal{A}(Z)\|_2\leq 2 \tau$}};
\draw[<->,gray] (-2.5,-0.5) -- (-2.5,0.5);
\node at (-3,0.25) {\textcolor{gray}{$2\delta$}};
\fill[black] (0,0) circle(0.075);
\node at (0,0.25) {$0$};
\fill[Blue,opacity=0.25] (0,0) -- (-1.75,-2) -- (1.75,-2) -- (0,0);
\draw[Blue] (0,0) -- (-1.75,-2);
\draw[Blue] (0,0) -- (1.75,-2);
\fill[Blue,opacity=0.25] (1.75,-2) arc(-45:-(180-45):2.5);
\node at (0,-2) {\textcolor{Blue}{$\mathcal{D}(f,X_0)$}};
\end{scope}
\end{tikzpicture}
\end{tabular}
\captionsetup{width=0.85\linewidth}
\caption{\emph{Illustration of a descent cone analysis:} The intersection between the nullspace of $\mathcal{A}$ (resp. the set for which $\|\mathcal{A}(Z)\|_2$ is low) with the descent cone $\mathcal{D}(f,X_0)$, i.e., the set of directions $Z$ in which $f$ is decreasing at $X_0$, contains all perturbations $Z$ such that $X_0 + Z$ is a minimizer of the noiseless (resp. noisy) convex optimization problem. (left: noiseless, right: noisy)}
\label{fig:descent-cone}
\end{figure}
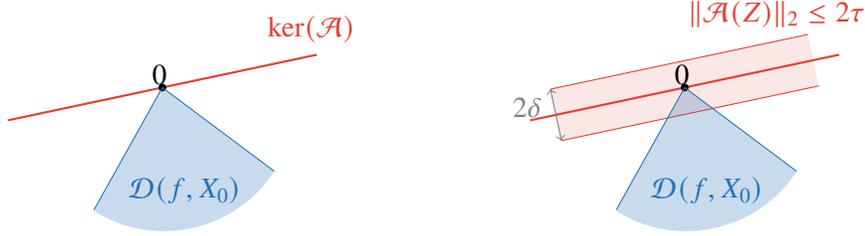

Let us first discuss the noiseless case.
If $X_0$ is the ground truth of the measurements $\mathcal{A} \left(X_0\right)=y$, any minimizer $\hat{X}$ of (\ref{OP_nonoise}) has to fulfill $f(\hat{X})\le f(X_0)$ and $\mathcal{A}\left(\hat{X}\right) = y$, and therefore, can be decomposed as the sum of $X_0$ and a perturbation $Z\in \mathcal{D}(f,X_0)\cap\mathrm{ker}(\mathcal{A})$.
If the intersection between the nullspace $\mathrm{ker}(\mathcal{A})$ and the descent cone $\mathcal{D}(f,X_0)$ only contains the zero element, $X_0$ is the unique optimal solution of (\ref{OP_nonoise}). This is illustrated in Figure~\ref{fig:descent-cone} (left).

This clean geometric picture can be extended to the noisy case.
In this setting, exact recovery cannot be expected. Therefore, we will bound the reconstrution error $\Vert \hat{X} - X_0 \Vert_F = \|Z\|_F$ between a feasible minimizer $\hat{X} = X_0 + Z$ of (\ref{OP_noise}) and the ground truth $X_0$.
Since $\|\mathcal{A} \left(X_0+Z\right) - y\|_2\le \tau$ implies that $\|\mathcal{A} \left(Z\right)\|_2 \le 2\tau$,
the intersection of $\mathcal{D}(f,X_0)$ and $\{Z:\|\mathcal{A} \left(Z\right)\|_2\le 2\tau \}$ has to be analyzed, see Figure~\ref{fig:descent-cone} (right). In order to control the size of this intersection we will need the following quantity, which we refer to as \textit{smallest conic singular value}
   \begin{align*}
    \lambda_{\min} \left( \mathcal{A},  \mathcal{D}(f,X_0)  \right) := \underset{Z \in \mathcal{D}(f,X_0)  \setminus \left\{ 0     \right\}}{\inf} \tfrac{\Vert \mathcal{A} \left(Z\right) \Vert_2}{\Vert Z \Vert_F} \ .
    \end{align*}
If the conic singular value is larger, we expect the intersection to be smaller and, hence, we should obtain stronger noise bound. This intuition is made precise by the following lemma by \cite{chandrasekaran2012convex}, see also \cite{tropp2015convex}.

\begin{lemma}\label{lem:chandrasekaran}\cite[Proposition 2.2]{chandrasekaran2012convex}
	Let $\mathcal A: \mathbb{C}^{n_1 \times n_2}\rightarrow \R^m$ be a linear operator and assume that $y= \mathcal{A} \left(X_0\right) + e $ with $ \Vert e \Vert_2 \le \tau $.
	Then, any minimizer $ \hat{X} $ of the convex optimization problem (\ref{OP_noise}) satisfies
	\begin{align*}
	\Vert \hat{X} - X_0 \Vert_F \le \tfrac{2 \tau}{\lambda_{\min} \left( \mathcal{A},  \mathcal{D}(f,X_0)  \right) }.
	\end{align*}
\end{lemma}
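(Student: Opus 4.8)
The plan is to use the definition of the smallest conic singular value directly, combined with the two geometric facts already noted in the text. First I would set $Z := \hat X - X_0$, where $\hat X$ is any minimizer of \eqref{OP_noise}. The goal is to show $\|Z\|_F \le \tfrac{2\tau}{\lambda_{\min}(\mathcal A, \mathcal D(f,X_0))}$, and I will do this by establishing two things: that $Z$ lies in the descent cone $\mathcal D(f,X_0)$, and that $\|\mathcal A(Z)\|_2 \le 2\tau$.

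For the first claim, observe that $X_0$ is feasible for \eqref{OP_noise}, since $\|\mathcal A(X_0) - y\|_2 = \|e\|_2 \le \tau$. Because $\hat X$ is a minimizer, $f(\hat X) \le f(X_0)$, i.e.\ $f(X_0 + Z) \le f(X_0)$. Taking $\epsilon = 1$ in the definition of $\mathcal D(f,X_0)$ then shows $Z \in \mathcal D(f,X_0)$. For the second claim, both $\hat X$ and $X_0$ satisfy the constraint, so by the triangle inequality $\|\mathcal A(Z)\|_2 = \|\mathcal A(\hat X) - \mathcal A(X_0)\|_2 \le \|\mathcal A(\hat X) - y\|_2 + \|y - \mathcal A(X_0)\|_2 \le \tau + \tau = 2\tau$.

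Now I would split into two cases. If $Z = 0$, the bound holds trivially (the error is zero). If $Z \neq 0$, then $Z \in \mathcal D(f,X_0) \setminus \{0\}$, so by the definition of $\lambda_{\min}$ as an infimum,
\begin{align*}
\lambda_{\min}\left(\mathcal A, \mathcal D(f,X_0)\right) \le \frac{\|\mathcal A(Z)\|_2}{\|Z\|_F} \le \frac{2\tau}{\|Z\|_F}.
\end{align*}
Rearranging gives $\|Z\|_F \le \tfrac{2\tau}{\lambda_{\min}(\mathcal A, \mathcal D(f,X_0))}$, which is the claim. (Here one implicitly uses $\lambda_{\min} > 0$; if $\lambda_{\min} = 0$ the bound is vacuous.)

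There is no real obstacle here — the argument is essentially just unwinding definitions and one triangle inequality. The only subtlety worth flagging is the case distinction on $Z = 0$ and the degenerate case $\lambda_{\min} = 0$, since the infimum in the definition of $\lambda_{\min}$ ranges over the punctured cone; but neither of these causes genuine difficulty. The conceptual content of the lemma lies not in its proof but in the observation that it reduces the entire noisy recovery analysis to lower-bounding a single geometric quantity, $\lambda_{\min}(\mathcal A, \mathcal D(f,X_0))$, which is where the probabilistic heavy lifting will happen in the subsequent sections.
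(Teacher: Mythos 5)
Your proof is correct and takes essentially the same route as the paper's own sketch: show $Z=\hat X-X_0$ lies in the descent cone and satisfies $\|\mathcal A(Z)\|_2\le 2\tau$, then apply the infimum definition of $\lambda_{\min}$. You merely spell out the details (feasibility of $X_0$, the triangle inequality, the $Z=0$ case) that the paper's sketch leaves implicit.
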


\begin{sproof}
By definition, $\lambda_{\min}\left( \mathcal{A},  \mathcal{D}(f,X_0)  \right) \le \tfrac{\Vert \mathcal{A} \left(Z\right) \Vert_{2}}{\Vert Z \Vert_F} \le \tfrac{2\tau}{\Vert Z \Vert_F}$ for any feasible $Z$. The first inequality follows from the definition of $\lambda_{\min}\left( \mathcal{A},  \mathcal{D}(f,X_0)  \right)$ and $Z \in \mathcal{D}(f,X_0)$, the second inequality follows from $\|\mathcal{A}(Z)\|_2\le 2\tau$, which concludes the proof.
\end{sproof}

In the following, we will discuss applications with various underlying random operators $\mathcal{A}$. 
We will show how one can obtain lower bounds for the minimum conic singular value, which by Lemma \ref{lem:chandrasekaran} will yield recovery  guarantees, both in the noise-free and in the noisy case.

\subsection{Application 1: generic low rank matrix recovery}
\label{sub:generic-matrix-recovery}

Low-rank matrix recovery describes the problem of recovering a low-rank matrix $X_0 \in \mathbb{R}^{n_1 \times n_2}$ from measurements of the form
\begin{align*}
y_i= \langle A_i, X_0 \rangle \quad \text{where  $A_i \in \R^{n_1\times n_2}$ and $i \in [m]$.}    
\end{align*}
It is useful to introduce the measurement operator $ \mathcal{A}: \R^{n_1 \times n_2}  \rightarrow \R^m$ by
\begin{align}\label{operator:generic_matrix}
    \mathcal{A}(X)(i) := \langle A_i, X \rangle \quad \text{where  $A_i \in \R^{n_1\times n_2}$ and $i\in[m]$.}
\end{align}
for $ X \in \R^{n_1 \times n_2}$. 

(This setting can be extended to the complex-valued setting. However, for simplicity of the exposition, we will only discuss the real-valued setting in this section.)
In this subsection, we will focus on independent random measurement matrices $A_i$ with independent standard normal entries. In order to recover a low-rank matrix $X_0$, we will consider the convex optimization problems (\ref{OP_nonoise}) and (\ref{OP_noise}) with $f=\|\cdot\|_*$.

Recall from the last section that by setting $E:= \{Z\in \mathcal{D}(\|\cdot\|_*,X_0): \|Z\|_F = 1\}$ and bounding $\inf_{Z\in E} \|\mathcal{A}\left(Z\right)\|_2$ the smallest conic singular value from below would guarantee that $\mathcal{D}(\|\cdot\|_*,X_0) \cap \mathrm{ker}(\mathcal{A})$ only contains the zero element and, therefore, exact recovery in the noiseless scenario.

Adjusting Fourcart's and Rauhut's formulation of Gordon's escape through a mesh \cite[Theorem 9.21]{foucart2013mathematical} (originally due to Gordon \cite{Gordon1988}) to the real-valued vector space $\mathbb{R}^{n_1 \times n_2}$, one obtains a powerful lower bound that can exploit the randomness of $\mathcal{A}$.

\begin{theorem}[Gordon's escape through a mesh]\label{thm:Gordon}
Let $\mathcal{A}: \R^{n_1\times n_2}\rightarrow \R^m$ be a Gaussian measurement operator as defined in (\ref{operator:generic_matrix}) and let $E$ be a subset of the Frobenius unit sphere $S_F(\R^{n_1\times n_2}):= \{Z\in \R^{n_1\times n_2}: \|Z\|_F=1\}$. Further, define the Gaussian width of $E$ as 
\begin{align}
    \ell(E) := \mathbb{E}\sup_{Z\in E} \langle A, Z\rangle, \label{eq:gaussian-width}
\end{align}
where $A \in \mathbb{R}^{n_1 \times n_2}$ is a standard normal matrix ($A_{ij} \overset{\textit{iid}}{\sim} \mathcal{N}(0,1)$).
Then, for t>0:
    \begin{align*}
        \inf_{Z\in E}\|\mathcal{A}(Z)\|_2 \ge \sqrt{m-1} - \ell(E)-t
    \end{align*}
with probability at least $1-e^{-t^2/2}$.
\end{theorem}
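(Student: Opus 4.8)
The plan is to recognise the left-hand side as the value of a min--max over a Gaussian process and to invoke Gordon's comparison principle. Since $\|v\|_2=\max_{u\in\Sphere^{m-1}}\langle u,v\rangle$ for $v\in\R^m$, we may write $\inf_{Z\in E}\|\mathcal{A}(Z)\|_2=\inf_{Z\in E}\sup_{u\in\Sphere^{m-1}}X_{u,Z}$, where $X_{u,Z}:=\sum_{i=1}^{m}u_i\langle A_i,Z\rangle$ is a centred Gaussian process on $\Sphere^{m-1}\times E$. The quickest route is to observe that $(\R^{n_1\times n_2},\langle\cdot,\cdot\rangle_F)$ is isometric to Euclidean $\R^{n_1n_2}$ and that $\ell(E)$ is invariant under this identification, so that \cite[Theorem~9.21]{foucart2013mathematical} (the vectorial escape-through-a-mesh bound) applies verbatim. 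For completeness I would also spell out the underlying comparison argument, as follows.

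Introduce the auxiliary Gaussian process $Y_{u,Z}:=\langle g,u\rangle+\langle H,Z\rangle$ on the same index set, with $g\sim\mathcal{N}(0,I_m)$ and $H\in\R^{n_1\times n_2}$ a standard Gaussian matrix independent of $g$, and enlarge $X$ to $\widetilde X_{u,Z}:=X_{u,Z}+\gamma$ with $\gamma\sim\mathcal{N}(0,1)$ independent of everything; since $\gamma$ does not depend on $(u,Z)$ this matches second moments without changing the min--max value. Using independence of the $A_i$ and $\mathbb{E}\langle A_i,Z\rangle\langle A_i,Z'\rangle=\langle Z,Z'\rangle_F$, one computes $\mathbb{E}\widetilde X_{u,Z}\widetilde X_{u',Z'}=\langle u,u'\rangle\langle Z,Z'\rangle_F+1$ and $\mathbb{E}Y_{u,Z}Y_{u',Z'}=\langle u,u'\rangle+\langle Z,Z'\rangle_F$, both of variance $2$ on the index set. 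The hypotheses of Gordon's Gaussian min--max inequality then reduce to the elementary fact $(1-\langle u,u'\rangle)(1-\langle Z,Z'\rangle_F)\ge 0$ (Cauchy--Schwarz on the two unit spheres), with equality whenever $Z=Z'$. Hence
\[
\mathbb{E}\inf_{Z\in E}\|\mathcal{A}(Z)\|_2=\mathbb{E}\inf_{Z\in E}\sup_{u}\widetilde X_{u,Z}\;\ge\;\mathbb{E}\inf_{Z\in E}\sup_{u}Y_{u,Z}=\mathbb{E}\|g\|_2-\ell(E),
\]
where the suprema run over $\Sphere^{m-1}$ and the last step uses $\max_u\langle g,u\rangle=\|g\|_2$ together with $\mathbb{E}\inf_{Z\in E}\langle H,Z\rangle=-\mathbb{E}\sup_{Z\in E}\langle -H,Z\rangle=-\ell(E)$, as $-H$ is again standard Gaussian.

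It remains to lower-bound the mean and to pass to a tail bound. Since $v\mapsto\|v\|_2$ is $1$-Lipschitz, the Gaussian Poincaré inequality gives $\operatorname{Var}(\|g\|_2)\le 1$, so $(\mathbb{E}\|g\|_2)^2\ge\mathbb{E}\|g\|_2^2-1=m-1$, i.e.\ $\mathbb{E}\|g\|_2\ge\sqrt{m-1}$ and therefore $\mathbb{E}\inf_{Z\in E}\|\mathcal{A}(Z)\|_2\ge\sqrt{m-1}-\ell(E)$. Finally, $F(A_1,\dots,A_m):=\inf_{Z\in E}\|\mathcal{A}(Z)\|_2$ is a $1$-Lipschitz function of the Gaussian vector $(A_1,\dots,A_m)\in\R^{mn_1n_2}$: for each $Z\in E$ one has $\big|\,\|\mathcal{A}(Z)\|_2-\|\mathcal{A}'(Z)\|_2\,\big|\le\big(\sum_{i=1}^m\langle A_i-A_i',Z\rangle^2\big)^{1/2}\le\big(\sum_{i=1}^m\|A_i-A_i'\|_F^2\big)^{1/2}$ by the reverse triangle inequality, Cauchy--Schwarz, and $\|Z\|_F=1$, and taking infima over $Z\in E$ preserves this. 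Gaussian concentration for Lipschitz functions then yields $\mathbb{P}\big(F<\mathbb{E}F-t\big)\le e^{-t^2/2}$, which combined with the mean bound gives the claim.

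The one point requiring care, and the only genuine ingredient, is Gordon's Gaussian min--max comparison inequality itself. If it may be quoted (as in \cite{foucart2013mathematical}, the result going back to \cite{Gordon1988}), everything above is routine — the remaining subtleties being the second-moment matching that licenses the comparison and the standard measurability/separability of the processes over the compact index sets, which justifies reducing to finite index sets. If instead one wanted a self-contained argument, proving Gordon's inequality via Gaussian interpolation would be the main technical obstacle.
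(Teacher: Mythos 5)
Your proof is correct, and its first step---identifying $(\R^{n_1\times n_2},\langle\cdot,\cdot\rangle_F)$ isometrically with $\R^{n_1 n_2}$ and quoting the vectorial escape-through-the-mesh bound of Foucart--Rauhut (originally Gordon)---is exactly what the paper does, since the theorem is stated there purely as a citation with no further proof given. The additional self-contained argument you sketch (second-moment matching via the auxiliary $\gamma$, Gordon's min--max comparison reducing to $(1-\langle u,u'\rangle)(1-\langle Z,Z'\rangle_F)\ge 0$, the bound $\mathbb{E}\|g\|_2\ge\sqrt{m-1}$ from the Poincar\'e inequality, and concentration for the $1$-Lipschitz function $\inf_{Z\in E}\|\mathcal{A}(Z)\|_2$) is the standard textbook derivation and checks out.
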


The Gaussian width is actually a reasonable summary parameter for the size of a convex cone. It is also closely related to the statistical dimension \cite{amelunxen2014}. If $\ell(E)$ does not exceed $\sqrt{m-1}$ recovery guarantees can be obtained.

Theorem \ref{thm:Gordon} only requires $E$ to be a subset of the Frobenius unit sphere and, therefore, one is not restricted to a specific descent cone but one can instead choose the union over all possible descent cones corresponding to rank-$r$ matrices in order to obtain uniform recovery guarantees:
\begin{align*}
E_r = S_F(\R^{n_1\times n_2}) \cap K_r \quad \text{and} \quad K_r = \bigcup_{X \in \R^{n_1\times n_2}: \mathrm{rk}(X)=r} \mathcal{D} \left( \| \cdot \|_*, X \right).
\end{align*}

H\"older's inequality yields $\sup_{Z\in E_r} \langle A, Z\rangle \le \|A\| \sup_{Z\in E_r}\|Z\|_*$. 
Tight bounds on the operator norm of a standard Gaussian matrix are readily available (more on that later), but it seems plausible that the largest nuclear norm of $Z \in E_r$ could scale unfavorably with the ambient dimension ($\|Z\|_* \leq \sqrt{\min \left\{n_1,n_2\right\}} \|Z\|_F$ which is sharp). The geometry of descent cones, however, excludes such worst-case instances.
The following lemma highlights that the effective rank of descent cone elements is proportional to the rank of the anchor point. It is a generalization of \cite[Lemma~10]{kueng2017low} to rectangular matrices.
To increase accessability, we write $x \lesssim y$ if there is a postive constant $C>0$ such that $x \leq Cy$.

\begin{lemma}\label{lem:effective-rank}
Suppose that $Z \in \mathbb{C}^{n_1 \times n_2}$  is contained in the nuclear norm descent cone of a rank-$r$ matrix $X \in \mathbb{C}^{n_1 \times n_2}$. Then,
\begin{align*}
\|Z\|_* 
\lesssim \sqrt{r} \|Z\|_F.
\end{align*}
\end{lemma}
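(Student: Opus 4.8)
The plan is to use the variational (subdifferential) description of the nuclear norm descent cone and then split $Z$ into the part that "lives on the support of $X$" and the part that is orthogonal to it. Write $X = U\Sigma V^*$ with $U \in \mathbb{C}^{n_1\times r}$, $V\in\mathbb{C}^{n_2\times r}$, let $P_U = UU^*$ and $P_V = VV^*$ be the orthoprojectors onto the column and row spaces, and decompose any $Z$ as $Z = Z_T + Z_{T^\perp}$, where $Z_T := P_U Z + Z P_V - P_U Z P_V$ lies in the tangent space $T$ of the rank-$r$ manifold at $X$ and $Z_{T^\perp} := (\mathrm{Id}-P_U)Z(\mathrm{Id}-P_V)$ is its orthogonal complement. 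The key point is that $\mathrm{rk}(Z_T) \le 2r$, so $\|Z_T\|_* \le \sqrt{2r}\,\|Z_T\|_F \le \sqrt{2r}\,\|Z\|_F$. Hence it suffices to control $\|Z_{T^\perp}\|_*$.

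To control $\|Z_{T^\perp}\|_*$, I would recall the standard characterization of the subdifferential $\partial\|\cdot\|_*$ at $X$: $G \in \partial\|X\|_*$ iff $G = UV^* + W$ with $P_U W = 0$, $W P_V = 0$ and $\|W\|_{\mathrm{op}} \le 1$. Since $Z$ lies in the descent cone of $\|\cdot\|_*$ at $X$, for every subgradient $G$ we have $\langle G, Z\rangle \le 0$ (this is the elementary fact that descent directions of a convex function make a nonpositive inner product with every subgradient). Now choose $W$ to be the matrix whose action is to align with $Z_{T^\perp}$: take $W = - \tilde{U}\tilde{V}^*$ where $Z_{T^\perp} = \tilde U \tilde\Sigma \tilde V^*$ is an SVD, so that $W$ is supported orthogonally to $U$ and $V$, has operator norm $\le 1$, and satisfies $\langle W, Z_{T^\perp}\rangle = -\|Z_{T^\perp}\|_*$. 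Plugging $G = UV^* + W$ into $\langle G, Z\rangle \le 0$ and using that $\langle UV^*, Z\rangle = \langle UV^*, Z_T\rangle$ and $\langle W, Z\rangle = \langle W, Z_{T^\perp}\rangle = -\|Z_{T^\perp}\|_*$ yields
\begin{align*}
\|Z_{T^\perp}\|_* \le \langle UV^*, Z_T\rangle \le \|UV^*\|_F \|Z_T\|_F = \sqrt{r}\,\|Z_T\|_F \le \sqrt{r}\,\|Z\|_F,
\end{align*}
where I used Cauchy–Schwarz and $\|UV^*\|_F = \sqrt{r}$.

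Combining the two bounds via the triangle inequality for the nuclear norm gives $\|Z\|_* \le \|Z_T\|_* + \|Z_{T^\perp}\|_* \le (\sqrt{2r} + \sqrt{r})\,\|Z\|_F \lesssim \sqrt{r}\,\|Z\|_F$, which is the claim. A minor technical point worth handling carefully is that the defining inequality for the descent cone is $f(X_0 + \epsilon Z) \le f(X_0)$ for \emph{some} $\epsilon > 0$ rather than the closed conic hull; but convexity of $\|\cdot\|_*$ upgrades this to the subgradient inequality $\langle G, Z\rangle \le 0$ directly (if $\|X + \epsilon Z\|_* \le \|X\|_*$ then $0 \ge \|X+\epsilon Z\|_* - \|X\|_* \ge \langle G, \epsilon Z\rangle$ for any $G \in \partial\|X\|_*$). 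The main obstacle — really the only nontrivial ingredient — is invoking the correct form of the nuclear norm subdifferential and making the optimal choice of the free block $W$; once that is in place the rest is Cauchy–Schwarz and a rank count. One should also be slightly careful in the complex case that the relevant inner product is the real part of the Frobenius inner product, so that subgradient inequalities and Cauchy–Schwarz apply verbatim.
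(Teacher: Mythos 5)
Your overall strategy is sound and it is a genuinely different route from the paper's. The paper proves Lemma~\ref{lem:effective-rank} by extending the pinching inequality to rectangular matrices via the self-adjoint dilation and then playing the sign matrix $\mathrm{sign}(X)=UV^*$ against the descent condition $\|X\|_*\ge\|X+\tau Z\|_*$; you instead invoke Watson's characterization of the subdifferential (recorded in the paper as Eq.~\eqref{equ:charsubdifferential}) together with the elementary fact that a descent direction has nonpositive real inner product with every subgradient, and then optimize over the free block $W$. Both arguments share the same skeleton -- the tangent-space split $Z=Z_T+Z_{T^\perp}$, the rank count $\mathrm{rk}(Z_T)\le 2r$, and Cauchy--Schwarz against $UV^*$ with $\|UV^*\|_F=\sqrt r$ -- and both give the constant $1+\sqrt2$; yours avoids the pinching/dilation machinery entirely at the price of quoting the full subdifferential formula, which is arguably the more standard and more elementary path.

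One step needs repair as written: the sign of your certificate block. With $W=-\tilde U\tilde V^*$ you have $\mathrm{Re}\langle W,Z\rangle=-\|Z_{T^\perp}\|_*$, so the subgradient inequality $\mathrm{Re}\langle UV^*+W,Z\rangle\le 0$ yields $\mathrm{Re}\langle UV^*,Z_T\rangle\le\|Z_{T^\perp}\|_*$, which is the \emph{reverse} of your displayed bound and gives no control on $\|Z_{T^\perp}\|_*$. The choice that actually ``aligns'' with $Z_{T^\perp}$ is $W=+\tilde U\tilde V^*$, i.e.\ the sign matrix of $Z_{T^\perp}$; it is admissible because the factors of $Z_{T^\perp}=(\Id-P_U)Z(\Id-P_V)$ lie in the ranges of $\Id-P_U$ and $\Id-P_V$ and $\|W\|\le1$. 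With this choice the descent condition reads $\mathrm{Re}\langle UV^*,Z_T\rangle+\|Z_{T^\perp}\|_*\le 0$, hence
\begin{align*}
\|Z_{T^\perp}\|_* \le -\mathrm{Re}\langle UV^*,Z_T\rangle \le \|UV^*\|_F\,\|Z_T\|_F \le \sqrt{r}\,\|Z\|_F,
\end{align*}
and the rest of your argument goes through verbatim to give $\|Z\|_*\le(1+\sqrt2)\sqrt r\,\|Z\|_F$, matching the paper's constant.
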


The suppressed proportionality constant is small ($C \leq 1+\sqrt{2}$), but probably not optimal.
The proof is novel and uses ideas from dual certificates (Section~\ref{sec:golfing}), as well as pinching, see e.g.\ \cite[Problem~II.5.4]{bhatia}. We refer to Appendix~\ref{sec:effective-rank} for details. With this lemma at hand, we can bound the Gaussian width of $E_r$.

\begin{corollary}\label{col:matrixrec}
    The Gaussian width of $E_r$, the union over all possible descent cones with an anchor point of rank-$r$ can be bounded by
    \begin{align*}
        \ell(E_r) 
        \lesssim \sqrt{r} \left( \sqrt{n_1} + \sqrt{n_2}\right).
    \end{align*}
    Further, let $\mathcal{A}: \R^{n_1\times n_2}\rightarrow \R^m$ be a Gaussian measurement operator as defined in (\ref{operator:generic_matrix}). Then, $\lambda_{\min} \left( \mathcal{A},  \mathcal{D}(f,X)\right)$ is bounded away from zero for any rank-$r$ matrix X w.h.p. if
    \begin{align*}
        m \gtrsim  r (n_1 + n_2).
    \end{align*}
\end{corollary}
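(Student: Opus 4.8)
The plan is to feed a bound on the Gaussian width $\ell(E_r)$ into Gordon's escape theorem (Theorem~\ref{thm:Gordon}), using Lemma~\ref{lem:effective-rank} to control the nuclear norm on $E_r$ and a classical estimate on the operator norm of a Gaussian matrix. Concretely, since the nuclear norm and the operator norm are dual, Hölder's inequality gives $\langle A,Z\rangle\le\|A\|\,\|Z\|_*$ for every $Z$, hence
\begin{align*}
\ell(E_r)=\mathbb{E}\sup_{Z\in E_r}\langle A,Z\rangle\ \le\ \mathbb{E}\|A\|\cdot\sup_{Z\in E_r}\|Z\|_*.
\end{align*}
Every $Z\in E_r$ lies in the descent cone of some rank-$r$ matrix and satisfies $\|Z\|_F=1$, so Lemma~\ref{lem:effective-rank} yields $\sup_{Z\in E_r}\|Z\|_*\lesssim\sqrt r$, while the classical bound $\mathbb{E}\|A\|\le\sqrt{n_1}+\sqrt{n_2}$ for a standard Gaussian matrix (a consequence of Slepian--Gordon comparison; see e.g.\ \cite{foucart2013mathematical}) handles the other factor. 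Together these give $\ell(E_r)\lesssim\sqrt r(\sqrt{n_1}+\sqrt{n_2})$, which is the first assertion.

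For the second assertion, note that $E_r$ contains the Frobenius-unit-sphere slice of $\mathcal{D}(\|\cdot\|_*,X)$ for every rank-$r$ matrix $X$, so $\lambda_{\min}(\mathcal{A},\mathcal{D}(\|\cdot\|_*,X))\ge\inf_{Z\in E_r}\|\mathcal{A}(Z)\|_2$ holds simultaneously for all such $X$. Applying Theorem~\ref{thm:Gordon} with $E=E_r$ and $t$ a small constant multiple of $\sqrt m$ gives, with probability at least $1-e^{-cm}$,
\begin{align*}
\inf_{Z\in E_r}\|\mathcal{A}(Z)\|_2\ \ge\ \sqrt{m-1}-\ell(E_r)-t.
\end{align*}
If $m\ge C\,r(n_1+n_2)$ for a large enough absolute constant $C$, then the width bound forces $\ell(E_r)+t\le\tfrac12\sqrt m$, so the right-hand side is at least a fixed positive multiple of $\sqrt m$. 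Combined with Lemma~\ref{lem:chandrasekaran}, this yields uniform exact recovery in the noiseless case and the robust error bound $\|\hat X-X_0\|_F\lesssim\tau/\sqrt m$ in the noisy case.

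The point needing a little care is that $E_r$ is a union of spherical descent-cone sections and is in general \emph{not} convex, so one may not invoke any convexity of $E_r$; this is harmless because Theorem~\ref{thm:Gordon} only requires $E_r\subseteq S_F(\R^{n_1\times n_2})$, which holds by construction. Beyond that, the only genuinely substantial ingredient is Lemma~\ref{lem:effective-rank} -- that descent-cone elements of a rank-$r$ anchor have effective rank $O(r)$ -- whose proof is deferred to Appendix~\ref{sec:effective-rank}; everything else is bookkeeping around Gordon's inequality and the standard Gaussian operator-norm estimate.
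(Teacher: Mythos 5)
Your proposal is correct and follows essentially the same route as the paper's own proof sketch: Hölder's inequality plus Lemma~\ref{lem:effective-rank} to bound $\sup_{Z\in E_r}\|Z\|_*\lesssim\sqrt r$, the standard estimate $\mathbb{E}\|A\|\le\sqrt{n_1}+\sqrt{n_2}$, and then Gordon's escape through a mesh (Theorem~\ref{thm:Gordon}) with a suitable choice of $t$ to make the infimum over $E_r$ positive when $m\gtrsim r(n_1+n_2)$. The remarks on the non-convexity of $E_r$ being harmless and on the resulting uniform/noisy guarantees via Lemma~\ref{lem:chandrasekaran} are accurate but do not change the argument.
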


\begin{sproof}
    Using H\"older's inequality and Lemma \ref{lem:effective-rank}, the Gaussian width $\ell(E_r)$ can be bounded in terms of the expected operator norm of a standard Gaussian matrix:
\begin{align*}
    \ell(E_r) = \mathbb{E}\sup_{Z\in E_r} \langle A, Z\rangle \leq \sup_{Z \in E_r} \|Z \|_* \,\mathbb{E} \|A \| \lesssim 
    \sqrt{r} \mathbb{E}\|A\|.
\end{align*}
A tight upper bound $\mathbb{E}\|A\| \le(\sqrt{n_1}+\sqrt{n_2})$ can be found, e.g. in \cite[p.292]{foucart2013mathematical}. By Theorem \ref{thm:Gordon},
\begin{align*}
    \inf_{X\in E_r}\|\mathcal{A}(Z)\|_2 = \inf_{X\in\R^{n_1\times n_2}: \mathrm{rk}(X)=r} \lambda_{\min} \left( \mathcal{A},  \mathcal{D}(\Vert \cdot \Vert_{\ast},X_0)  \right) \ge \sqrt{m-1} - \ell(E) - t
\end{align*}
 with probability at least $1-e^{-t^2/2}$. Therefore, if
$
    m \gtrsim r (n_1+n_2)
$.
we can pick $t>0$, such that $\inf_{X\in E_r}\|\mathcal{A}(Z)\|_2 $ is positive w.h.p.
\end{sproof}

Even when measuring multiple matrices of rank-$r$ via the same measurement operator $\mathcal{A}$, Corollary \ref{col:matrixrec} uniformly bounds $\lambda_{\min} \left( \mathcal{A},  \mathcal{D}(\Vert \cdot \Vert_{\ast},X)  \right)$ from below and, therefore, gives a uniform recovery guarantee for recovering not only one but all possible rank-$r$ matrices. 

\subsection{Application 2: phase retrieval} \label{sub:phase-retrieval}

Recall that $\mathbb{H}_n \subset \mathbb{C}^{n \times n}$ denotes the (real-valued) vector space of Hermitian $n \times n$ matrices.
The lifted reformulation of the phase retrieval problem is based on the measurement operator
\begin{align*}
\mathcal{A}(X_0)(i) = \langle A_i, X_0 \rangle \quad A_i = a_i a_i^* \in \mathbb{H}_n, \; X_0 = x_0 x_0^* \in \mathbb{H}_n, \; i \in \left[m\right]. 
\end{align*}
This bears strong similarities with the measurement operator for generic low rank matrix recovery~\eqref{operator:generic_matrix}, but there is one crucial distinction. Each measurement matrix $A_i = a_i a_i^*$ is itself a rank-one orthoprojector. These are everything but generic random matrices (c.f.\ a matrix with standard normal entries is almost surely \emph{not} rank-deficient) and a clean descent cone analysis based on Gordon's escape through a mesh (Theorem~\ref{thm:Gordon}) seems out of reach.
Fortunately, Mendelson and co-authors \cite{mendelson2014learning,koltchinskii2015bounding} developed a weaker variant of Theorem~\ref{thm:Gordon}. Known as Mendelson's small ball method, this result only requires i.i.d.\ measurement matrices that also obey a small ball property. 
We refer to Tropp \cite{tropp2015convex} for a user-friendly exposition and proof and state it directly in terms of measurement operators on Hermitian $n \times n$ matrices. 

\begin{theorem}[Mendelson's small ball method] \label{thm:mendelson}
Suppose that $\mathcal{A}:\mathbb{H}_n \to \mathbb{R}^m$ is a measurement operator~\eqref{eq:measurement-operator} whose measurements correspond to independent realizations of a Hermitian random matrix $A \in \mathbb{H}_n$.
Fix a subset $E \subset \mathbb{H}_n$ and for $\xi >0$ define
\begin{align*}
Q_\xi (E; A)=& \inf_{Y \in E} \mathrm{Pr} \left[ \left| \langle A, Y \rangle \right| \geq \xi \right],\\
W_m (E;A) =& \mathbb{E} \sup_{Y \in E} \langle Y,H \rangle \quad H = \tfrac{1}{\sqrt{m}}\sum_{i=1}^m \epsilon_i A_i,
\end{align*}
where $\epsilon_1,\ldots,\epsilon_m \overset{\textit{iid}}{\sim}\left\{\pm 1 \right\}$ is a Rademacher sequence. Then, for any $\xi>0$ and $t>0$,
\begin{align}
\inf_{Y \in E} \| \mathcal{A}(Y) \|_2 \geq \xi \sqrt{m} Q_{2\xi}(E;\Phi) - 2 W_m (E;\Phi)-\xi t \label{eq:mendelson}
\end{align}
with probability at least $1-\mathrm{e}^{-2t^2}$.
\end{theorem}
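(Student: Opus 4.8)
The plan is to deduce the uniform lower bound on $\|\mathcal{A}(Y)\|_2$ from a one-sided deviation estimate for a bounded empirical process, following the template of Mendelson's small-ball method as streamlined by Tropp~\cite{tropp2015convex}. The starting point is the elementary observation that, since $\|\mathcal{A}(Y)\|_2^2=\sum_{i=1}^m\langle A_i,Y\rangle^2$, for any threshold $\xi>0$
\begin{align*}
\|\mathcal{A}(Y)\|_2 \;\geq\; \xi\,\sqrt{\#\{\, i : |\langle A_i,Y\rangle|\geq\xi \,\}} \;\geq\; \frac{\xi}{\sqrt{m}}\sum_{i=1}^m\mathbf{1}\!\left[\,|\langle A_i,Y\rangle|\geq\xi\,\right],
\end{align*}
the second step using $\sqrt{s}\geq s$ for $s\in[0,1]$. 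To make the indicator amenable to symmetrization I would replace it by a soft version: let $\varphi$ be piecewise linear with $\varphi(s)=0$ for $s\leq\xi$, $\varphi(s)=1$ for $s\geq 2\xi$, and slope $1/\xi$ in between, so that $\mathbf{1}[s\geq 2\xi]\leq\varphi(s)\leq\mathbf{1}[s\geq\xi]$, the map $s\mapsto\varphi(|s|)$ is $(1/\xi)$-Lipschitz, and $\varphi(0)=0$.

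Combining the squeeze with the display above gives $\inf_{Y\in E}\|\mathcal{A}(Y)\|_2\geq\xi\sqrt{m}\,\inf_{Y\in E}\frac{1}{m}\sum_{i=1}^m\varphi(|\langle A_i,Y\rangle|)$. Because $\mathbb{E}\,\varphi(|\langle A,Y\rangle|)\geq\mathrm{Pr}[|\langle A,Y\rangle|\geq 2\xi]\geq Q_{2\xi}(E;A)$, centring the empirical average reduces the whole problem to controlling one supremum:
\begin{align*}
\inf_{Y\in E}\|\mathcal{A}(Y)\|_2 \;\geq\; \xi\sqrt{m}\,\big(Q_{2\xi}(E;A)-Z\big),\qquad Z:=\sup_{Y\in E}\Big(\mathbb{E}\,\varphi(|\langle A,Y\rangle|)-\frac{1}{m}\sum_{i=1}^m\varphi(|\langle A_i,Y\rangle|)\Big).
\end{align*}
Since $\varphi\in[0,1]$, altering a single $A_i$ perturbs $Z$ by at most $1/m$, so McDiarmid's bounded-difference inequality yields $Z\leq\mathbb{E}Z+t/\sqrt{m}$ with probability at least $1-\mathrm{e}^{-2t^2}$; inserting this is exactly what produces the tail $\mathrm{e}^{-2t^2}$ and the error term $\xi\sqrt{m}\cdot t/\sqrt{m}=\xi t$ in the claimed bound.

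It remains to show $\xi\sqrt{m}\,\mathbb{E}Z\leq 2W_m(E;A)$. Here I would first symmetrize, $\mathbb{E}Z\leq\frac{2}{m}\,\mathbb{E}\sup_{Y\in E}\sum_{i=1}^m\epsilon_i\,\varphi(|\langle A_i,Y\rangle|)$ with a Rademacher sequence $\epsilon_i$, and then invoke the Ledoux--Talagrand contraction principle: as $s\mapsto\varphi(|s|)$ is $(1/\xi)$-Lipschitz and vanishes at $0$, conditioning on $(A_i)$ bounds the Rademacher average by $\frac{1}{\xi}\,\mathbb{E}_\epsilon\sup_{Y\in E}\langle\sum_i\epsilon_i A_i,Y\rangle$. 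Taking the expectation over $(A_i)$ and recalling $H=\frac{1}{\sqrt{m}}\sum_i\epsilon_i A_i$ gives $\mathbb{E}Z\leq\frac{2}{\xi\sqrt{m}}\,W_m(E;A)$, hence $\xi\sqrt{m}\,\mathbb{E}Z\leq 2W_m(E;A)$. Assembling the three contributions $\xi\sqrt{m}\,Q_{2\xi}(E;A)$, $-2W_m(E;A)$ and $-\xi t$ yields the bound~\eqref{eq:mendelson}.

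I expect the genuinely delicate points to be bookkeeping rather than conceptual. The truncation has to sit at scale $\xi$ on the side where $\|\mathcal{A}(Y)\|_2$ is lower bounded but at scale $2\xi$ where the small-ball probability is read off, which is precisely why $Q_{2\xi}$ rather than $Q_\xi$ appears in the statement; and the absolute constant $2$ in front of $W_m$ must be tracked faithfully through symmetrization and the contraction principle (the step most prone to off-by-a-factor errors). The structural point worth stressing is that no Gaussianity or sub-Gaussianity of $A$ enters \emph{anywhere}: the distribution is used only through the one-sided small-ball estimate $Q_{2\xi}$ and the mean width $W_m$ of a Rademacher average, which is exactly what lets the argument handle the rank-one ensemble $A=aa^*$ of phase retrieval, where Gordon's escape-through-a-mesh theorem (Theorem~\ref{thm:Gordon}) is unavailable.
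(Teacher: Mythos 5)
Your proof is correct and is essentially the standard argument the paper relies on: the paper gives no proof of its own for Theorem~\ref{thm:mendelson} but defers to Tropp's exposition \cite{tropp2015convex}, and your chain — soft indicator squeeze at levels $\xi$ and $2\xi$, bounded differences for the $\xi t$ term and the $\mathrm{e}^{-2t^2}$ tail, then symmetrization plus Ledoux--Talagrand contraction to get the factor $2W_m(E;A)$ — reproduces that proof faithfully, constants included.
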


In fact, this statement is valid for all real-valued\footnote{Extensions to complex-valued inner product spaces are also possible, see e.g.\ \cite{jung2019combinatorial}.} inner product spaces with finite dimensions.
It is worthwhile to point out that for
standard normal random matrices $\Phi_1,\ldots,\Phi_m \in \mathbb{R}^{n_1 \times n_2}$ and subsets $E$ of the Frobenius unit sphere, this
result recovers Theorem~\ref{thm:Gordon} up to constants. Fix $\xi >0$ of appropriate size. Then, $E \subset \left\{Y \in \mathbb{H}_n:\; \|Y \|_2=1\right\}$ ensures that $\xi Q_{2 \xi}(A;E)$ is constant. What is more, $W_m (A,E)$ reduces to the usual Gaussian width~\eqref{eq:gaussian-width}.

We obtain a recovery guarantee for phase retrieval by appropriately analyzing both contributions to Eq.~\eqref{eq:mendelson}.
Similar to before, we can actually obtain a uniform recovery guarantee by taking into account all possible descent cones in one go:
\begin{align}
E_1 = \left\{Y \in \mathbb{H}_n:\; \|Y\|_F=1\right\} \cap K_1, \quad \text{where} \quad K_1 = \bigcup_{x \in \mathbb{C}^n} \mathcal{D} \left( \| \cdot \|_*, x x^* \right).
\label{eq:phase-retrieval-descent-cone}
\end{align}

Let us start with controlling the empirical width.

\begin{lemma}[empirical width for non-generic phase retrieval] \label{lem:phase-retrieval-width}
Let $E_1 \subset \mathbb{H}_n$ be the union of descent cones defined in Eq.~\eqref{eq:phase-retrieval-descent-cone} and
suppose that $a \in \mathbb{C}^n$ is an isotropic, sub-normalized random vector, i.e.\  $\mathbb{E} a a^* = \Id$, $\|a \|_2 \leq \sqrt{2 n}$. 
Then, 
\begin{align}
W_m (E_1) \lesssim \sqrt{n \log (n)} \quad \text{provided that $m \lesssim n \log (n)$.}\label{eq:phaseless-width}
\end{align}
\end{lemma}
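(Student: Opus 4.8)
The plan is to strip the supremum over $E_1$ by Schatten duality, reducing $W_m(E_1)$ to the expected operator norm of the symmetrised measurement matrix, and then to estimate that operator norm by two nested applications of matrix concentration.

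First I would record that, by definition, $W_m(E_1) = \mathbb{E}\,\sup_{Y\in E_1}\langle Y,H\rangle$ with $H = \tfrac{1}{\sqrt m}\sum_{i=1}^{m}\epsilon_i a_i a_i^*$, the expectation running over both the Rademacher signs $\epsilon_i$ and the random vectors $a_i$. By the duality between the nuclear and the operator norm, $\langle Y,H\rangle \le \|Y\|_*\,\|H\|$, and since every $Y\in E_1$ lies in the nuclear-norm descent cone of a rank-one matrix $xx^*$ and has $\|Y\|_F=1$, Lemma~\ref{lem:effective-rank} applied with $r=1$ gives $\|Y\|_*\lesssim 1$ uniformly over $E_1$. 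Hence $W_m(E_1)\lesssim \mathbb{E}\|H\| = \tfrac{1}{\sqrt m}\,\mathbb{E}\bigl\|\sum_{i=1}^{m}\epsilon_i a_i a_i^*\bigr\|$, and only an operator-norm bound is left to prove.

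Next I would condition on $a_1,\dots,a_m$ and invoke the non-commutative (matrix) Khintchine inequality for a Rademacher series of Hermitian matrices, which gives $\mathbb{E}_\epsilon\bigl\|\sum_i\epsilon_i a_i a_i^*\bigr\|\lesssim \sqrt{\log n}\,\bigl\|\sum_i (a_i a_i^*)^2\bigr\|^{1/2}$. The decisive simplification is $(a_i a_i^*)^2 = \|a_i\|_2^2\, a_i a_i^*$, which together with the sub-normalisation $\|a_i\|_2^2\le 2n$ yields the operator inequality $\sum_i (a_i a_i^*)^2 \preceq 2n\sum_i a_i a_i^*$; therefore $\mathbb{E}_\epsilon\bigl\|\sum_i\epsilon_i a_i a_i^*\bigr\|\lesssim \sqrt{n\log n}\,\bigl\|\sum_i a_i a_i^*\bigr\|^{1/2}$. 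Then I would take the expectation over the $a_i$, pull the square root outside by Jensen, and control $\mathbb{E}\bigl\|\sum_i a_i a_i^*\bigr\|$ via the matrix Chernoff bound: the $a_i a_i^*$ are i.i.d., positive semidefinite, have mean $\Id$ by isotropy, and satisfy $\lambda_{\max}(a_i a_i^*)=\|a_i\|_2^2\le 2n$ almost surely, so $\mathbb{E}\bigl\|\sum_i a_i a_i^*\bigr\|\lesssim m + n\log n$, which is of order $n\log n$ under the hypothesis $m\lesssim n\log n$. Chaining the three estimates gives $W_m(E_1)\lesssim \tfrac{1}{\sqrt m}\sqrt{n\log n}\cdot\sqrt{n\log n}=\tfrac{n\log n}{\sqrt m}$, which is of the claimed order $\sqrt{n\log n}$ once $m$ itself is of order $n\log n$ — precisely the regime in which Theorem~\ref{thm:mendelson} is applied.

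The hard part is the operator-norm estimate for the symmetrised rank-one sum $\sum_i\epsilon_i a_i a_i^*$: a term-by-term triangle inequality only gives $\sum_i\|a_i\|_2^2\lesssim nm$, which is far too lossy, so one genuinely has to extract the sign cancellations through matrix Khintchine (or, at the cost of an extra harmless $n\log n$ additive term, matrix Bernstein) rather than bounding each summand separately. The one place where the heaviness of the measurement vectors — we only assume $\|a\|_2\le\sqrt{2n}$, not a sub-gaussian tail — could bite is the pinch $(a_i a_i^*)^2=\|a_i\|_2^2\, a_i a_i^* \preceq 2n\, a_i a_i^*$, and there the sub-normalisation is exactly what keeps the loss down to the benign dimensional factor $2n$; the same bound $\|a_i\|_2^2\le 2n$ is also what makes the Chernoff step produce only an $n\log n$ deviation term.
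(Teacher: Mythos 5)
Your argument is correct and follows essentially the same route as the paper's proof sketch: H\"older duality plus Lemma~\ref{lem:effective-rank} with $r=1$ to reduce to $\mathbb{E}\|H\|$, then matrix Khintchine conditionally on the $a_i$, the pinch $(a_ia_i^*)^2=\|a_i\|_2^2\,a_ia_i^*\preceq 2n\,a_ia_i^*$ from sub-normalization, and finally Jensen together with the matrix Chernoff bound $\mathbb{E}\bigl\|\sum_i a_ia_i^*\bigr\|\lesssim\max\{m,n\log n\}$. Your closing remark that the resulting bound $n\log n/\sqrt{m}$ is of order $\sqrt{n\log n}$ exactly in the regime $m\asymp n\log n$ where Theorem~\ref{thm:mendelson} is invoked matches how the estimate is used in the paper.
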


The assumption $m \lesssim n \log (n)$ is not essential, but will simplify exposition later on.
Similar arguments apply to standard complex Gaussian measurement vectors $g \in \mathbb{C}^n$ (which are not sub-normalized) and produce tighter bounds  \cite{kueng2017low}: $W_m (E_1,aa^*) \lesssim \sqrt{n}$ (no $\log(n)$-factor), provided that $m \lesssim n$. The following proof sketch summarizes arguments presented in Ref.~\cite{kueng2017low}.

\begin{sproof}[Lemma~\ref{lem:phase-retrieval-width}]
We will show the slightly more general bound 
\begin{align*}
\mathbb{E} \|H \| \lesssim \sqrt{\max \left\{m, n \log (n) \right\}}.
\end{align*}
Apply Lemma~\ref{lem:effective-rank} to obtain
\begin{align*}
W_m (E_1,A) = \mathbb{E} \sup_{Y \in E_1} \langle Y, H \rangle 
\lesssim \sup_{Y \in E_1} \|Y\|_* \mathbb{E} \| H \| \leq 
\sqrt{r} \mathbb{E} \|H\|.
\end{align*}
The remaining expression is an operator norm of a random matrix $H=\tfrac{1}{\sqrt{m}}\sum_{i=1}^m \epsilon_i a_i a_i^*$ that features two types of randomness. 
The matrix Khintchine inequality, see e.g.\ \cite[Exercise 8.6(d)]{foucart2013mathematical} allows us to trade the Rademacher randomness against an additional square root. More precisely, 
\begin{align*}
\mathbb{E} \|H \|=
\mathbb{E}_{a} \mathbb{E}_{\epsilon} \| H \| 
\lesssim  \mathbb{E}_{a} \sqrt{\tfrac{\log (n)}{m}} \Big\| \sum_{j=1}^m (a_j a_j^*)^2 \Big\|^{1/2} 
\lesssim  \sqrt{\tfrac{ n \log (n)}{m}}  \mathbb{E}_{a} \Big\| \sum_{j=1}^m a_j a_j^* \Big\|^{1/2}
\end{align*}
where the last inequality follows from $(a_j a_j^*) (a_j a_j^*) = \|a_j\|_2 \ a_j a_j^*\lesssim \sqrt{n} \ a_j a_j^*$. We now face an operator norm of a sum of  random matrices $X_j=a_j a_j^*$ that are positive semidefinite and obey $\|X_i \| = \|a_j \|_2^2 \leq 2n$ each. Isotropy also asserts $\Big\| \sum_{j=1}^m \mathbb{E} X_j \Big\| = \| m \Id \| =m$ and we can apply the matrix Chernoff inequality \cite{tropp2012user-friendly} to obtain for any $\tau >0$
\begin{align*}
\mathbb{E}_{a} \Big\| \sum_{j=1}^m a_j a_j^* \Big\| \leq \tfrac{\mathrm{e}^\tau-1}{\tau}m + \tfrac{\sqrt{2}}{\tau} n \log (n) 
\lesssim  \max \left\{m,n \log (n) \right\}.
\end{align*}
{}
\end{sproof}

The empirical width bound~\eqref{eq:phaseless-width} suggests that an order of $n \log (n)$ non-generic phaseless measurements may suffice to establish strong uniform recovery guarantees for phase retrieval via low rank matrix reconstruction. However, this is only true if the measurement matrices $a_i a_i^*$ are not too spikey. More precisely, we need that $Q_{2\xi}(aa^*,E_1)$ -- the second quantity in Mendelson's small ball method~\eqref{eq:mendelson} -- is lower-bounded by a constant. 

\begin{lemma}[marginal tail function for non-generic phase retrieval] \label{lem:marginal}
Suppose $a\in\mathbb{C}^n$ is a random vector that obeys 
$\mathbb{E} \langle a, Y a \rangle^2 \gtrsim \langle Y,Y \rangle$ and 
$\mathbb{E} \langle a, Y a \rangle^4 \lesssim \left(\mathbb{E} \langle a, Y a \rangle^2\right)^2$ for all $Y \in E_1$. Then, 
\begin{align*}
Q_{2\xi}(E_1;aa^*) \gtrsim \left(1-\tfrac{4 \xi^2}{\mathrm{const}}\right)^2
\quad \text{for all} \quad 0 < \xi < \sqrt{\mathrm{const}/4}.
\end{align*}
\end{lemma}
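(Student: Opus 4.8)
The plan is to reduce the claim to a single application of the Paley--Zygmund (second moment) inequality to the nonnegative random variable $Z := \langle a, Y a\rangle^2$. First note that, since $A = aa^*$, one has $\langle A, Y\rangle = \mathrm{tr}(aa^* Y) = \langle a, Y a\rangle$, which is real because $Y \in E_1 \subset \mathbb{H}_n$. Hence the event $\{|\langle aa^*, Y\rangle| \geq 2\xi\}$ coincides with $\{Z \geq 4\xi^2\}$, and $Q_{2\xi}(E_1; aa^*) = \inf_{Y\in E_1} \mathrm{Pr}[Z \geq 4\xi^2]$; it therefore suffices to bound $\mathrm{Pr}[Z \geq 4\xi^2]$ from below by $(1 - 4\xi^2/\mathrm{const})^2$ up to an absolute constant, uniformly over $Y \in E_1$.

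Next I would record the two hypotheses in the form needed. Every $Y\in E_1$ lies on the Frobenius unit sphere, so $\langle Y, Y\rangle = \|Y\|_F^2 = 1$, and the first assumption gives $\mathbb{E} Z = \mathbb{E}\langle a, Y a\rangle^2 \geq c$ for an absolute constant $c > 0$; this $c$ is the quantity we take as $\mathrm{const}$. The second assumption gives $\mathbb{E} Z^2 = \mathbb{E}\langle a, Y a\rangle^4 \lesssim (\mathbb{E}\langle a, Y a\rangle^2)^2 = (\mathbb{E} Z)^2$, so that $(\mathbb{E} Z)^2 / \mathbb{E} Z^2$ is bounded below by an absolute constant. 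Both bounds are uniform over $Y\in E_1$, which is exactly what will let the final infimum pass through.

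Now apply Paley--Zygmund: for any nonnegative random variable $Z$ with $0 < \mathbb{E} Z^2 < \infty$ and any $\theta \in [0,1]$, $\mathrm{Pr}[Z \geq \theta\, \mathbb{E} Z] \geq (1-\theta)^2 (\mathbb{E} Z)^2 / \mathbb{E} Z^2$. Take $\theta = 4\xi^2/c$, which lies in $[0,1]$ precisely when $0 < \xi < \sqrt{c/4} = \sqrt{\mathrm{const}/4}$. Since $\theta\, \mathbb{E} Z \geq \theta c = 4\xi^2$, we obtain $\mathrm{Pr}[Z \geq 4\xi^2] \geq \mathrm{Pr}[Z \geq \theta\, \mathbb{E} Z] \geq (1 - 4\xi^2/c)^2\, (\mathbb{E} Z)^2/\mathbb{E} Z^2 \gtrsim (1 - 4\xi^2/\mathrm{const})^2$, with the same constant $c = \mathrm{const}$ throughout. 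Taking the infimum over $Y \in E_1$ yields the claim. There is no real obstacle here: this is the standard small-ball argument used in Mendelson's method, and the only points requiring care are the bookkeeping of the absolute constants (ensuring the constant in the threshold $\xi < \sqrt{\mathrm{const}/4}$ is the same one appearing in $1 - 4\xi^2/\mathrm{const}$) and the observation that nothing about the geometry of $E_1$ is used beyond $E_1 \subset \{Y \in \mathbb{H}_n : \|Y\|_F = 1\}$, so that the two moment hypotheses are doing all of the work.
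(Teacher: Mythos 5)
Your proposal is correct and matches the paper's own argument: both reduce the claim to a Paley--Zygmund bound for $Z=\langle a,Ya\rangle^2$ with threshold $\theta=4\xi^2/\mathrm{const}$, use the second-moment lower bound (via $\|Y\|_F=1$ on $E_1$) to relate the event $\{Z\geq 4\xi^2\}$ to $\{Z\geq\theta\,\mathbb{E}Z\}$, and invoke the fourth-moment condition to bound the ratio $(\mathbb{E}Z)^2/\mathbb{E}Z^2$ uniformly before passing to the infimum over $Y\in E_1$. Your bookkeeping of the constants and the remark that only the unit-Frobenius-norm property of $E_1$ is used are consistent with the paper's proof.
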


\begin{proof}
Fix $Y \in E_1$ and use $\mathbb{E} \langle a, Z a \rangle^2 \gtrsim \langle Y,Y\rangle = \mathrm{const}$ to apply a Paley-Zygmund type argument:
\begin{align*}
\mathrm{Pr} \left[ \left| \langle aa^*, Y \rangle \right| \geq 2 \xi \right] \geq \mathrm{Pr} \left[ \langle a, Y a \rangle^2 \geq \tfrac{4 \xi^2}{\mathrm{const}} \mathbb{E} \langle a, Y a \rangle^2 \right]
\geq \left( 1-  \tfrac{4\xi^2}{\mathrm{const}} \right)^2 \tfrac{\left(\mathbb{E} \langle a, Z a \rangle^2\right)^2}{\mathbb{E} \langle a,Z,a \rangle^4}.
\end{align*}
The moment assumption $\mathbb{E} \langle a, Y a \rangle^4 \lesssim \left( \mathbb{E} \langle a, Y a \rangle^2 \right)^2$ ensures that the final ratio is lower-bounded by a constant. Such a lower bound is valid, regardless of $Y \in E_r$. Hence, it also applies to the infimum $Q_{2\xi} = \inf_{Y \in E_1} \mathrm{Pr} \left[ | \langle aa^*,Y \rangle | \geq 2 \xi \right]$.
\end{proof}

We now have gathered all the auxiliary statements we need to carry out a descent-cone analysis for phase retrieval with non-generic measurements.

\begin{theorem}[phase retrieval from non-generic measurements] \label{thm:phaseless}
Let $a\in \mathbb{C}^n$ be a random vector that is isotropic ($\mathbb{E} aa^*=\Id$), sub-normalized ($\|a \|_2 \leq \sqrt{2n}$) and also obeys
\begin{align}
\mathbb{E} \langle a,Y a \rangle^2  \gtrsim \langle Y,Y \rangle, 
\quad \text{as well as} \quad \left( \mathbb{E} \langle a,Y a \rangle^2 \right)^2 \lesssim \mathbb{E} \langle a,Y A \rangle^4, \label{eq:phaseless-assumption}
\end{align}
for every $Y \in K_1$.
Then, with high probability, a total of 
\begin{align*}
m \gtrsim  n \log (n)
\end{align*}
randomly selected phaseless measurements $a_1,\ldots,a_m \sim a \in \mathbb{C}^n$ suffice to reconstruct signals $x_0 \in \mathbb{C}^n$ via constrained nuclear norm minimization~\eqref{SDP_MC}. 
\end{theorem}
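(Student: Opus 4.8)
The plan is to feed the two auxiliary estimates --- the empirical width bound (Lemma~\ref{lem:phase-retrieval-width}) and the marginal tail bound (Lemma~\ref{lem:marginal}) --- into Mendelson's small ball method (Theorem~\ref{thm:mendelson}) in order to lower-bound the smallest conic singular value $\lambda_{\min}\!\left(\mathcal{A},\mathcal{D}(\|\cdot\|_*,x_0x_0^*)\right)$ \emph{simultaneously} for every signal $x_0$, and then to read off recovery from Lemma~\ref{lem:chandrasekaran}. The device making this uniform is the set $E_1$ of \eqref{eq:phase-retrieval-descent-cone}: since $E_1$ is the union, over all $x_0$, of the rank-one descent cones intersected with the Frobenius sphere, a single lower bound on $\inf_{Y\in E_1}\|\mathcal{A}(Y)\|_2$ controls $\lambda_{\min}$ at every anchor point at once. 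Because $\lambda_{\min}$ can only increase when rows are appended to $\mathcal{A}$, I would first reduce to $m=\lceil C\,n\log(n)\rceil$ for a large absolute constant $C$ and recover the case of larger $m$ by simply discarding measurements.

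With that reduction in place, I would apply Theorem~\ref{thm:mendelson} with $E=E_1$ and $A_i=a_ia_i^*$, giving
\[
\inf_{Y\in E_1}\|\mathcal{A}(Y)\|_2 \;\ge\; \xi\sqrt{m}\,Q_{2\xi}(E_1;aa^*)\;-\;2W_m(E_1;aa^*)\;-\;\xi t
\]
with probability at least $1-e^{-2t^2}$. The moment conditions \eqref{eq:phaseless-assumption} are precisely the hypotheses of Lemma~\ref{lem:marginal}, so $Q_{2\xi}(E_1;aa^*)\gtrsim(1-4\xi^2/\mathrm{const})^2$ once $\xi$ is below an absolute threshold; I would freeze such a $\xi$ as an absolute constant, so that $Q_{2\xi}\ge c_1>0$. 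Isotropy and sub-normalization are the hypotheses of Lemma~\ref{lem:phase-retrieval-width}, which for $m\asymp n\log(n)$ gives $W_m(E_1;aa^*)\lesssim\sqrt{n\log(n)}$. Thus the right-hand side is at least $c_1\xi\sqrt{C\,n\log(n)}-c_2\sqrt{n\log(n)}-\xi t$; choosing $C$ large enough (depending on $\xi,c_1,c_2$) makes the width term negligible against the first term, and taking $t\asymp\sqrt{n\log(n)}$ absorbs a further fraction, leaving $\inf_{Y\in E_1}\|\mathcal{A}(Y)\|_2\gtrsim\sqrt{n\log(n)}\asymp\sqrt{m}$ with probability at least $1-e^{-c\,n\log(n)}$. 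On this event $\lambda_{\min}\!\left(\mathcal{A},\mathcal{D}(\|\cdot\|_*,x_0x_0^*)\right)\gtrsim\sqrt{m}$ for every $x_0$; in the noiseless case ($\tau=0$) this forces $\mathcal{D}(\|\cdot\|_*,x_0x_0^*)\cap\ker(\mathcal{A})=\{0\}$, so $X_0=x_0x_0^*$ is the unique minimizer of \eqref{SDP_MC} and factoring the rank-one positive semidefinite solution yields $x_0$ up to a global phase, while for $\tau>0$ Lemma~\ref{lem:chandrasekaran} gives $\|\hat X-X_0\|_F\le 2\tau/\lambda_{\min}\lesssim\tau/\sqrt{m}$.

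The genuinely substantive work sits in Lemmas~\ref{lem:phase-retrieval-width} and~\ref{lem:marginal}, which I am allowed to assume here, so the remaining difficulty is mostly one of bookkeeping with the order of the quantifiers: $\xi$ must be chosen as an absolute constant \emph{before} the empirical width is compared with $\xi\sqrt{m}\,Q_{2\xi}$, and the implicit constant in $m\gtrsim n\log(n)$ must then be taken large enough (as a function of $\xi$ and the constants $c_1,c_2$) for the width term to be dominated. The one subtle point I would flag is that Lemma~\ref{lem:phase-retrieval-width} only gives a width bound of order $\sqrt{\max\{m,n\log(n)\}}$, so one cannot make $m$ arbitrarily large inside the main inequality --- hence the initial reduction to $m=\Theta(n\log(n))$ via monotonicity of $\lambda_{\min}$ in the number of measurements, which is what lets the argument go through for all $m\gtrsim n\log(n)$.
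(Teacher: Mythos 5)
Your proposal is correct and follows essentially the same route as the paper's own proof: plugging the width bound (Lemma~\ref{lem:phase-retrieval-width}) and the marginal tail bound (Lemma~\ref{lem:marginal}) into Mendelson's small ball method over the union $E_1$ of all rank-one descent cones, then invoking Lemma~\ref{lem:chandrasekaran} to get uniform (and noise-robust) recovery with $\lambda_{\min}\gtrsim\sqrt{m}$. Your extra reduction to $m=\Theta(n\log n)$ via monotonicity of $\lambda_{\min}$ is a clean way to handle the technical assumption $m\lesssim n\log(n)$ in the width lemma, which the paper only remarks is ``not essential.''
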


In fact, this recovery guarantee is actually \emph{uniform}. That is, with high probability a single collection of randomly sampled phaseless measurements allows for reconstructing \emph{all} phaseless signals via nuclear norm minimization \eqref{SDP_MC}. Conditioned on this event, the actual reconstruction is also stable with respect to noise corruption. 
Suppose that $y = \mathcal{A}(xx^*)+e$, where $\|e \|_{\ell_2} \leq
\tau$ and the noise bound is known. Then, the solution $\hat{X}$ of
the convex optimization problem~\eqref{SDP_MC} is guaranteed to obey
$\|\hat{X} - x_0{x_0}^* \|_F \lesssim \tau/\sqrt{m}$.
Up to constants, this assertion is on par with some of the strongest stablility guarantees for low rank matrix reconstruction in general \cite{candes2011probabilistic,candes2014solving,kabanava2016stable}.

\begin{sproof}[Theorem~\ref{thm:phaseless}]
Let us start by reformulating phase retrieval as a low rank matrix recovery problem ($r=1$). The general descent cone analysis presented in Section~\ref{section:DCA} identifies the minimum conic singular value as an important summary parameter. If it is positive, the current set of measurements allows to recover $X_0=x_0x_0^*$ via nuclear norm minimization under idealized circumstances (no noise). The size of the minimum conic singular value also captures noise robustness (the larger the better). Theorem~\ref{thm:mendelson} (Mendelson's small ball method) achieves just that. Fix $\xi = \mathrm{const}$ sufficiently small and insert the bounds from Lemma~\ref{lem:marginal} and Lemma~\ref{lem:phase-retrieval-width} into the assertion of Theorem~\ref{thm:mendelson}:
\begin{align*}
\inf_{Y \in E_1} \| \mathcal{A}(Y) \|_2 \geq & \xi \sqrt{m} Q_{2\xi} \left(E_1;aa^*\right) - 2 W_m \left(E_1; aa^*\right) - \xi t \\
\gtrsim & \sqrt{m} - \mathrm{const} \left(\sqrt{n \log (n)} + t\right),
\end{align*}
with probability at least $1- \mathrm{e}^{-2t^2}$. Assigning $m = C n \log (n)$ and $t = \gamma/2 \sqrt{m}$, where $C>0$ ($\gamma>0$) is a sufficienlty large (small) constant, allows us to conclude $\inf_{Y \in E_1} \| \mathcal{A} (Y) \|_2 \gtrsim \sqrt{m}$ with probability at least $1-\mathrm{e}^{-\gamma \sqrt{m}}$. This ensures that the minimum conic singular value is of (optimal) order $\sqrt{m}$.

There is one additional twist. In Eq.~\eqref{eq:phase-retrieval-descent-cone} we have defined the set $E_1$ as the union of all possible descent cones anchored at all possible lifted signals $X=xx^*$. Consequently, Theorem~\ref{thm:mendelson} produces a lower bound of $\sqrt{m}$ on the infimum over all possible descent cones, not just a single one. This allows us to effectively treat all possible signals at once and establish a uniform recovery guarantee.
\end{sproof}

Let us conclude this section with discussing the extra assumptions~\eqref{eq:phaseless-assumption}. They formulate conditions on the second- and fourth moment of the measurement matrices $A=aa^*$. The second moment condition ensures that the expected measurement operator is non-singular on the union $K_1$ of all descent cones: 
\begin{align}
\tfrac{1}{m}\mathbb{E} \langle Y, \mathbb{E} \mathcal{A}^*\mathcal{A} (Y) \rangle = \mathbb{E} \langle aa^*,Y \rangle_F = \mathbb{E} \langle a,Y a \rangle^2 \gtrsim \langle Y,Y \rangle 
\quad \text{for all $Y \in K_1$.} \label{eq:lifted-isotropy}
\end{align}
Viewed from this angle, it actually captures (sub-)isotropy on the relevant parts of $\mathbb{H}_n$ -- a natural requirement for any low rank matrix recovery procedure.
Alas, by itself it is not sufficient to derive nontrivial recovery guarantees \cite{kueng2015spherical,gross2015partial} and extra assumptions are required.
Theorem~\ref{thm:phaseless}, for instance, requires that (certain) fourth moments of $A=aa^*$ are comparable to their second moment squared. It should be viewed as a relaxation of (sub-)Gaussian moment growth conditions, but only up to order four. Suitable measurement ensembles only need to mimic (outer products of) Gaussian measurement vectors up to 4th moments. This condition is much weaker than subgaussianity, and vector distributions that satisfy Eq.~\eqref{eq:phaseless-assumption} can admit a lot of structure. 
A concrete example are orbits of certain symplectic symmetry groups that arise naturally in quantum information (Clifford group) and time-frequency analysis (oscillator group) \cite{kueng2016low}.
A more refined analysis also allows for replacing constrained nuclear norm minimization~\eqref{SDP_MC} by a simple least-squares or $\ell_p$-fit over the cone of positive semidefinite matrices \cite{kabanava2016stable}, such as the convex optimization problem
\begin{align}\label{opt:noSDP}
\begin{split}
\text{minimize } \quad & \sum_{i=1}^{m} \big\vert \trace \left( \xi^{\left(i\right)} (\xi^{\left(i\right)})^* X  \right) -y_i \big\vert\\
\text{ subject to} \quad & X \in \mathcal{S}^n_{+},
\end{split}
\end{align}
where $ \mathcal{S}^n \subset \R^{n \times n} $ denotes the set of real-valued symmetric matrices and $\mathcal{S}^n_{+} \subseteq \mathcal{S}^n$ its positive definite subset.
Such reformulations have the added benefit of being tuning-free. In
particular, no a priori noise bound $\tau$ is required, see
\cite{kueng2018nonnegative} for related arguments addressing sparse
vector recovery and \cite{Fengler:TIT:Bayesian:2021} sparse covariance
matching.

\subsection{Limitations}\label{sub:limitations}

As we have seen, a descent-cone analysis combined with probabilistic tools such as Mendelson's small ball method yields essentially near-optimal uniform recovery results for low-rank matrix recovery from Gaussian measurement matrices or phase retrieval measurements with Gaussian measurement vectors. 
A key observation of the proof is that the union of all descent sets is contained in a suitably large nuclear norm ball, so it suffices to estimate the Gaussian width of this ball. 

This approach, however, has significant limitations when it comes to problems with more structure such as matrix completion and  blind deconvolution.
The reason is that in these problems, as explained in Section~\ref{subsub:matrixcompletion}, recovery guarantees will necessarily fail for some exceptional signals that violate certain incoherence conditions.
Thus it will necessarily be impossible to bound the minimum conic singular values for the descent cones anchored at these signals and estimating a general superset cannot be sufficient.

However, one may wonder whether it is possible to obtain a comparable result to Corollary \ref{col:matrixrec} and Theorem \ref{thm:phaseless} by considering the \textit{the union of all descent cones of all incoherent rank-$r$ matrices} instead. However, this turns out to be more delicate. In particular, it is unclear how to mathematically formulate a property that captures the fact that matrices in the descent cone anchored at incoherent signals are better conditioned with respect to the measurements. 
A direct connection to the notion of incoherence is difficult, as matrices in the descent cone anchored at coherent signals will not necessarily be incoherent. As a consequence, also the minimum conic singular values can become provably very small \cite{asilomar2018,krahmer2019convex}, which makes it difficult to bound them from below, which would be necessary for recovery guarantees based on the strategy explained above even for the noiseless case. 

In the next section, we present an alternative analysis strategy that is better suited to deal with incoherence conditions, as it is based on (approximate) dual certificates rather than the descent cone and relies on the signal alone rather than differences to alternative solutions. In certain cases, however, as we will see in Section \ref{sec:refined}, it will also be possible to adapt the descent cone analysis to such scenarios.

\section{Recovery guarantees via the golfing scheme} \label{sec:golfing}

\subsection{Recovery guarantees via dual certificates}
\label{sec:dual}

Maybe the most natural way of proving that a convex optimization attains its optimal value at a given argument is by exhibiting a \emph{dual certificate} 
-- the generalization to possibly non-smooth convex functions of the familiar gradient condition for optimality.
Let us start by considering the noiseless nuclear norm problem ($\tau=0$)
\begin{align}
\begin{split}
\underset{X \in \mathbb{C}^{n_1 \times n_2}}{\text{minimize}} & \quad \| X \|_* \label{SDP3}\\
\text{subject to }& \quad \mathcal{A}(X) = y, 
\end{split}
\end{align}
see also Eq.~\eqref{OP_nonoise} with $f(X) = \|X \|_*$.
Let $X\in \mathbb{C}^{n_1 \times n_2}$ be a rank-$r$ matrix with singular value decomposition (SVD) $ X= U \Sigma V^* $. That is,
 $ \Sigma \in \mathbb{R}^{r \times r}  $ is a diagonal matrix with nonnegative entries and $U \in \mathbb{C}^{ n_1 \times r }$ and $V \in \mathbb{C}^{n_2 \times r}$ are isometries, i.e., $  U^*U  = V^* V = \Id_r$. 
The tangent space of the variety of rank-$r$ matrices at the point $X$ can be checked to be given by
\begin{align}\label{equ:deftangentspace}
  T_X := \left\{  UA^* + B V^*    : \ A \in \mathbb{C}^{n_2 \times r}, B \in \mathbb{C}^{n_1 \times r}     \right\}.
\end{align}
Denote by 
$ \mathcal{P}_{T_X}$ 
the (Hilbert-Schmidt) orthogonal projection onto the tangent space, and by
$\mathcal{P}_{T^{\perp}_X}$ the projection onto its ortho-complement.
The \emph{subdifferential} 
$\partial \Vert \cdot \Vert_{\ast} \left( X \right)$
of the nuclear norm at $X$ is the set of affine lower-bounds to the nuclear norm that coincide with the norm at $X$.
A simple application of the matrix H\"older inequality \cite{bhatia} shows that
\cite{watson_subdifferential}
\begin{align}\label{equ:charsubdifferential}
\partial \Vert \cdot \Vert_{\ast} \left( X \right) = \left\{  W \in \mathbb{C}^{n_1 \times n_2} :  \ \mathcal{P}_{T_X} W = UV^*, \   \Vert \mathcal{P}_{T^{\perp}_X} W \Vert\le 1         \right\}.
\end{align}
With these notions, it is straight-forward to see that a sufficient condition for $X_0$ being the minimizer of \eqref{SDP3} is given by the following lemma, first formulated in Ref.~\cite{candes2009exact}.

\begin{lemma}[\cite{candes2009exact}]\label{lemma:candesrechtmatrix}
	Let $X_0 \in \mathbb{C}^{n_1 \times n_2}$ be such that $\mathcal{A}\left( X_0 \right) = y \in \mathbb{C}^m$. 
	Suppose that the following two conditions hold:
	\begin{enumerate}
		\item There exists a vector $z \in \mathbb{C}^m$ such that $Y=\mathcal{A}^* \left(z\right) $ satisfies
		\begin{align*}
		\mathcal{P}_{T_{X_0}}  Y = UV^* \quad \text{and} \quad \Vert \mathcal{P}_{T^{\perp}_{X_0}} Y  \Vert < 1.
		\end{align*}
		\item The linear operator $ \mathcal{A}$ is injective when restricted to the tangent space $T_{X_0}$.
	\end{enumerate}
	Then $X_0$ is the unique minimizer of \eqref{SDP3}.
\end{lemma}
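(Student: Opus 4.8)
The plan is to establish the lemma via a standard convex-duality argument using the subdifferential characterization~\eqref{equ:charsubdifferential}. Let $\hat X$ be any minimizer of~\eqref{SDP3} and write $\hat X = X_0 + Z$ with $\mathcal{A}(Z) = 0$, since both $\hat X$ and $X_0$ are feasible. The goal is to show $Z = 0$. First I would use the subgradient inequality for the nuclear norm: for any $W \in \partial \Vert \cdot \Vert_* (X_0)$ we have $\Vert \hat X \Vert_* \ge \Vert X_0 \Vert_* + \real \langle W, Z \rangle$. The freedom we have is in choosing $W$; by the characterization~\eqref{equ:charsubdifferential}, any matrix with $\mathcal{P}_{T_{X_0}} W = UV^*$ and $\Vert \mathcal{P}_{T^\perp_{X_0}} W \Vert \le 1$ is admissible. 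The natural choice is $W = UV^* + \mathcal{P}_{T^\perp_{X_0}} W_1$, where $W_1$ is a matrix of spectral norm at most $1$ chosen so that $\mathcal{P}_{T^\perp_{X_0}} W_1$ aligns with $\mathcal{P}_{T^\perp_{X_0}} Z$; concretely, if $\mathcal{P}_{T^\perp_{X_0}} Z$ has SVD $\tilde U \tilde \Sigma \tilde V^*$, take $W_1 = \tilde U \tilde V^*$, so that $\langle \mathcal{P}_{T^\perp_{X_0}} W_1, Z \rangle = \langle W_1, \mathcal{P}_{T^\perp_{X_0}} Z \rangle = \Vert \mathcal{P}_{T^\perp_{X_0}} Z \Vert_*$.

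With this choice, the subgradient inequality becomes
\begin{align*}
\Vert \hat X \Vert_* \ge \Vert X_0 \Vert_* + \real \langle UV^*, Z \rangle + \Vert \mathcal{P}_{T^\perp_{X_0}} Z \Vert_*.
\end{align*}
Now I bring in the dual certificate $Y = \mathcal{A}^*(z)$ from condition~1. Since $\mathcal{A}(Z) = 0$, we have $\langle Y, Z \rangle = \langle \mathcal{A}^*(z), Z \rangle = \langle z, \mathcal{A}(Z) \rangle = 0$. Splitting $Y$ into its tangential and normal parts and using $\mathcal{P}_{T_{X_0}} Y = UV^*$, this gives $\langle UV^*, Z \rangle = \langle \mathcal{P}_{T_{X_0}} Y, Z \rangle = - \langle \mathcal{P}_{T^\perp_{X_0}} Y, Z \rangle = - \langle \mathcal{P}_{T^\perp_{X_0}} Y, \mathcal{P}_{T^\perp_{X_0}} Z \rangle$. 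Applying the matrix Hölder inequality and $\Vert \mathcal{P}_{T^\perp_{X_0}} Y \Vert < 1$, we get $\real \langle UV^*, Z \rangle \ge - \Vert \mathcal{P}_{T^\perp_{X_0}} Y \Vert \, \Vert \mathcal{P}_{T^\perp_{X_0}} Z \Vert_* > - \Vert \mathcal{P}_{T^\perp_{X_0}} Z \Vert_*$ whenever $\mathcal{P}_{T^\perp_{X_0}} Z \ne 0$. Combining with the previous display yields $\Vert \hat X \Vert_* > \Vert X_0 \Vert_*$, contradicting optimality of $\hat X$. Hence $\mathcal{P}_{T^\perp_{X_0}} Z = 0$, i.e.\ $Z \in T_{X_0}$.

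It remains to handle the tangential part. Since $Z \in T_{X_0}$ and $\mathcal{A}(Z) = 0$, condition~2 (injectivity of $\mathcal{A}$ on $T_{X_0}$) forces $Z = 0$, so $\hat X = X_0$ and the minimizer is unique. The main subtlety — and the step I would be most careful about — is the strictness bookkeeping: one must track that the \emph{strict} inequality $\Vert \mathcal{P}_{T^\perp_{X_0}} Y \Vert < 1$ is what rules out the case $\mathcal{P}_{T^\perp_{X_0}} Z \ne 0$, while the case $\mathcal{P}_{T^\perp_{X_0}} Z = 0$ is disposed of separately by condition~2. A minor technical point is justifying that one may indeed pick a subgradient $W$ of the desired form; this is exactly the content of~\eqref{equ:charsubdifferential}, so no extra work is needed. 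The argument is otherwise a routine application of convex duality and the Hölder inequality for Schatten norms.
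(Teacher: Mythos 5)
Your argument is correct and is exactly the standard dual-certificate argument from \cite{candes2009exact} that the paper cites without reproducing: align a subgradient's off-tangent part with $\mathcal{P}_{T^\perp_{X_0}}Z$, use $\langle Y,Z\rangle=0$ for $Z\in\ker(\mathcal{A})$ together with H\"older and $\Vert \mathcal{P}_{T^\perp_{X_0}}Y\Vert<1$ to force $\mathcal{P}_{T^\perp_{X_0}}Z=0$, and finish with injectivity on $T_{X_0}$. The strictness bookkeeping and the verification that $\tilde U\tilde V^*$ lies in $T^\perp_{X_0}$ are handled correctly, so nothing is missing.
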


In Ref.~\cite{candes2009exact} it was shown in the context of low-rank matrix completion from a sufficient number of uniformly sampled matrix elements, that such a dual certificate exists with high probability.
A refined (and fairly involved) analysis in Ref.~\cite{candes2010power} showed that the number of measurements can be reduced to the order of the information-theoretic limit, up to logarithmic factors. 

Reference~\cite{gross2011recovering} introduced a new approach -- the \emph{golfing scheme} -- for constructing dual certificates.
In the original paper, and commonly in works referring to it, the result is presented as being based on the observation that the conditions in Lemma~\ref{lemma:candesrechtmatrix} can be relaxed and that the existence of an \emph{approximate dual certificate} suffices to establish uniqueness.
Approximate dual certificates are easier to construct using randomized processes, which in their natural formulations will give results that are correct only approximately and up to a small probability of failure.

In this article, we aim to present the story from a different point of view.
Namely we will show that a minor tweak of the golfing scheme actually gives an explicit randomized construction for an \emph{exact} dual certificate, using no more measurements than the original argument.
In this sense, it is inaccurate to say that constructing exact certificates is harder than constructing approximate ones.
While this point of view does not seem to impact the headline result on recovery guarantees, we feel that it represents a conceptually clearer way of thinking about the argument.
To the best of our knowledge, this approach has not appeared elsewhere in the literature before.

\subsection{Golfing with precision}\label{sec:golfing_scheme}

Here, we recall the basic logic behind the golfing scheme, in preparation of presenting the \emph{putting proposition}, Proposition~\ref{prop:putt}.
We start with two definitions.

The analysis uses the fact that the measurement operator $\mathcal{A}$ is an approximate isometry when restricted to the subspace $T_X$.
The precise notion employed is this:
\begin{definition}\label{deltarip}
	Let $ X \in \mathbb{C}^{n_1 \times n_2}$. We say that $\mathcal{A}$ fulfills the $\delta$-restricted isometry property ($\delta$-RIP) on $T_{X}$,
	if for all matrices $Z \in T_{X} $ it holds that
	\begin{align*}
	\left(1-\delta\right) \Vert Z \Vert^2_F \le \Vert \mathcal{A} \left(Z\right) \Vert^2_2 \le \left(1+\delta\right) \Vert Z \Vert^2_F.
	\end{align*}
\end{definition}

As the name suggests, approximate dual certificates obey condition~1 in Lemma~\ref{lemma:candesrechtmatrix} approximately. This is captured by the following formal definition.

\begin{definition} \label{def:approx-dual}
  Given a measurement operator $\mathcal{A}: \mathbb{C}^{n_1\times n_2} \to \mathbb{C}^m$, 
  a vector $z\in\C^m$, giving rise to a matrix $Y=\mathcal{A}^* \left(z\right)$,
  is an \emph{approximate dual certificate}  at $X_0=U\Sigma V^*$ if it satisfies the following properties:
  \begin{align}
	&\Vert z \Vert_2 \le 2, 
	\label{eq:approx1}
	\\
	&\alpha = \Vert UV^* -  \mathcal{P}_{T_{X_0}} \mathcal{A}^*(z)\Vert_F \le \tfrac{1}{8 \Vert \mathcal{A} \Vert}, \label{eq:approx1.5}
	\\
	&\Big\Vert \mathcal{P}_{T^{\perp}_{X_0}} \left( \mathcal{A}^* \left(z\right) \right) \Big\Vert < \tfrac{1}{2}.
	\label{eq:approx3}
  \end{align}
\end{definition}

With these definitions, the central result reads:

\begin{proposition}\label{mainprop}\cite{gross2011recovering,candes2010matrix} Let $X_0 \in \C^{n_1 \times n_2}$ with SVD $X_0 = U\Sigma V^* $ and suppose that $y=\mathcal{A} \left(X_0\right) + e$ with $ \Vert e \Vert_2 \le \tau $. Suppose that the following two conditions hold.
	\begin{enumerate}
		\item There exists an
		  approximate dual certificate $Y=\mathcal{A}^*(z)$
		\item The measurement operator $\mathcal{A}$ satisfies the $\delta$-restricted isometry property on $T_{X_0}$ with constant $\delta=3/4$
	\end{enumerate}
	Then, every minimizer $ \hat{X}$ of \eqref{SDP3} satisfies
	\begin{align}\label{eq:errorbound}
	\Vert X_0 - \hat{X} \Vert_F \lesssim \Vert \mathcal{A} \Vert \tau.
	\end{align}
\end{proposition}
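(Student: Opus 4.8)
I want to show that under an approximate dual certificate plus the $\delta$-RIP on $T_{X_0}$, any minimizer $\hat X$ of \eqref{SDP3} is close to $X_0$. Write $\hat X = X_0 + Z$ with $Z\in\C^{n_1\times n_2}$. The two facts I get to use are: (i) $\hat X$ is feasible, so $\|\mathcal{A}(\hat X)-y\|_2\le\tau$, and combined with $y=\mathcal{A}(X_0)+e$, $\|e\|_2\le\tau$, this gives $\|\mathcal{A}(Z)\|_2\le 2\tau$; (ii) $\hat X$ is optimal, so $\|X_0+Z\|_*\le\|X_0\|_*$. The standard route is to split $Z = \mathcal{P}_{T_{X_0}}Z + \mathcal{P}_{T^\perp_{X_0}}Z =: Z_T + Z_{T^\perp}$ and to play the subgradient inequality for the nuclear norm against the near-orthogonality of $UV^*$ and $\mathcal{P}_{T^\perp}Z$.

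\textbf{Step 1: the nuclear-norm inequality.} Using the characterization \eqref{equ:charsubdifferential} of the subdifferential, pick $W\in\partial\|\cdot\|_*(X_0)$ with $\mathcal{P}_{T_{X_0}}W = UV^*$ and $\mathcal{P}_{T^\perp_{X_0}}W$ an arbitrary contraction; choosing it so that $\langle \mathcal{P}_{T^\perp_{X_0}}W, Z_{T^\perp}\rangle = \|Z_{T^\perp}\|_*$ gives
\begin{align*}
\|X_0\|_* \ge \|X_0+Z\|_* \ge \|X_0\|_* + \langle W, Z\rangle = \|X_0\|_* + \langle UV^*, Z_T\rangle + \|Z_{T^\perp}\|_*,
\end{align*}
hence $\|Z_{T^\perp}\|_* \le -\langle UV^*, Z_T\rangle = \langle UV^* - \mathcal{P}_{T_{X_0}}\mathcal{A}^*(z), Z_T\rangle - \langle \mathcal{A}^*(z), Z\rangle + \langle \mathcal{P}_{T^\perp_{X_0}}\mathcal{A}^*(z), Z_{T^\perp}\rangle$, where I have inserted $0 = -\langle \mathcal{A}^*(z),Z\rangle + \langle \mathcal{P}_{T_{X_0}}\mathcal{A}^*(z),Z_T\rangle + \langle \mathcal{P}_{T^\perp_{X_0}}\mathcal{A}^*(z),Z_{T^\perp}\rangle$. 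Now bound the three terms: the first by $\alpha\|Z_T\|_F \le \tfrac{1}{8\|\mathcal{A}\|}\|Z_T\|_F$ using \eqref{eq:approx1.5} and Cauchy–Schwarz; the second by $|\langle z, \mathcal{A}(Z)\rangle| \le \|z\|_2\|\mathcal{A}(Z)\|_2 \le 2\cdot 2\tau = 4\tau$ using \eqref{eq:approx1} and fact (i); the third by $\|\mathcal{P}_{T^\perp_{X_0}}\mathcal{A}^*(z)\|\,\|Z_{T^\perp}\|_* < \tfrac12\|Z_{T^\perp}\|_*$ using \eqref{eq:approx3} (operator-norm/nuclear-norm duality). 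Rearranging absorbs the $\tfrac12\|Z_{T^\perp}\|_*$ to the left, yielding
\begin{align*}
\|Z_{T^\perp}\|_* \lesssim \tfrac{1}{\|\mathcal{A}\|}\|Z_T\|_F + \tau.
\end{align*}

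\textbf{Step 2: controlling $Z_T$ via RIP, and conclusion.} Since $Z_T\in T_{X_0}$, the $\delta$-RIP with $\delta=3/4$ gives $\|Z_T\|_F \le \tfrac{1}{\sqrt{1-\delta}}\|\mathcal{A}(Z_T)\|_2 = 2\|\mathcal{A}(Z_T)\|_2 \le 2\|\mathcal{A}(Z)\|_2 + 2\|\mathcal{A}(Z_{T^\perp})\|_2 \le 4\tau + 2\|\mathcal{A}\|\,\|Z_{T^\perp}\|_F \le 4\tau + 2\|\mathcal{A}\|\,\|Z_{T^\perp}\|_*$, using $\|Z_{T^\perp}\|_F\le\|Z_{T^\perp}\|_*$. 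Feeding in the Step 1 bound $\|Z_{T^\perp}\|_* \lesssim \tfrac{1}{\|\mathcal{A}\|}\|Z_T\|_F + \tau$ gives $\|Z_T\|_F \le 4\tau + 2\|\mathcal{A}\|(C(\tfrac{1}{\|\mathcal{A}\|}\|Z_T\|_F + \tau))$; the constant $\tfrac{1}{8\|\mathcal{A}\|}$ in \eqref{eq:approx1.5} is tuned precisely so that the resulting coefficient of $\|Z_T\|_F$ on the right is $<1$, so it can be absorbed, leaving $\|Z_T\|_F \lesssim \|\mathcal{A}\|\tau$ (the $\|\mathcal{A}\|\ge 1$-type normalization makes $\tau \lesssim \|\mathcal{A}\|\tau$). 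Then $\|Z_{T^\perp}\|_F \le \|Z_{T^\perp}\|_* \lesssim \tfrac{1}{\|\mathcal{A}\|}\|Z_T\|_F + \tau \lesssim \tau$, and finally $\|X_0-\hat X\|_F = \|Z\|_F \le \|Z_T\|_F + \|Z_{T^\perp}\|_F \lesssim \|\mathcal{A}\|\tau$, which is \eqref{eq:errorbound}.

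\textbf{Main obstacle.} There is no deep difficulty here — the content is entirely in the bookkeeping of constants. The delicate point is making sure the two places where a term must be ``absorbed to the other side'' (the $\tfrac12\|Z_{T^\perp}\|_*$ in Step 1 and the $\|Z_T\|_F$ feedback in Step 2) actually close, which is exactly why the definition of approximate dual certificate hard-codes the thresholds $2$, $\tfrac{1}{8\|\mathcal{A}\|}$, $\tfrac12$ and why $\delta=3/4$ appears. I would also note at the outset that $\|\mathcal{A}\|\ge c>0$ (or rescale) so that stray additive $\tau$'s are dominated by $\|\mathcal{A}\|\tau$; this is implicit in the normalization of the measurement ensembles considered later. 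One should double-check the inner products are taken in the real inner product when $\mathcal{A}$ maps to $\R^m$, or the real part otherwise, but this does not affect any estimate.
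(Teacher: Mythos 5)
Your argument is correct, and it is the standard approximate-dual-certificate argument from the cited sources \cite{gross2011recovering,candes2010matrix}; note that the paper itself does not reproduce a proof of Proposition~\ref{mainprop} (it defers to those references and instead proves the related Proposition~\ref{prop:putt}, which upgrades the approximate certificate to an exact one but by itself only yields noiseless uniqueness via Lemma~\ref{lemma:candesrechtmatrix}, not the quantitative bound \eqref{eq:errorbound}). Your bookkeeping closes exactly as intended by the constants in Definition~\ref{def:approx-dual} and $\delta=3/4$: from Step~1 one gets $\Vert Z_{T^\perp}\Vert_* \le \tfrac{1}{4\Vert\mathcal{A}\Vert}\Vert Z_T\Vert_F + 8\tau$, and in Step~2 the feedback coefficient is $2\Vert\mathcal{A}\Vert\cdot\tfrac{1}{4\Vert\mathcal{A}\Vert}=\tfrac12<1$. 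Three small remarks. First, the inserted identity in Step~1 has a sign slip: the correct decomposition is $-\langle UV^*,Z_T\rangle = -\langle UV^*-\mathcal{P}_{T_{X_0}}\mathcal{A}^*(z),Z_T\rangle - \langle z,\mathcal{A}(Z)\rangle + \langle \mathcal{P}_{T^\perp_{X_0}}\mathcal{A}^*(z),Z_{T^\perp}\rangle$; since you bound every term in absolute value, this does not affect the estimates. Second, your flagged normalization concern ($\tau\lesssim\Vert\mathcal{A}\Vert\tau$) resolves itself: the lower RIP bound on the (nontrivial) tangent space forces $\Vert\mathcal{A}\Vert\ge\sqrt{1-\delta}=\tfrac12$, so no extra assumption or rescaling is needed (the degenerate case $X_0=0$ is trivial). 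Third, you implicitly read ``minimizer of \eqref{SDP3}'' as minimizer of the noise-constrained problem \eqref{SDP_MC} with tolerance $\tau$, which is the intended meaning (with the literal equality-constrained \eqref{SDP3} and noisy $y$, the ground truth $X_0$ need not be feasible and the optimality inequality $\Vert\hat X\Vert_*\le\Vert X_0\Vert_*$ would be unavailable); it would be worth stating this reading explicitly at the outset.
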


Here, $\|\mathcal A\| = \sup_{\|z\|_2 = 1} \|\mathcal{A}(z)\|_F$ is the operator norm of the measurement operator.
The requirement (\ref{eq:approx1}) on the norm of $z$ is only necessary in the noisy case $\|e\|_2>0$. 

The bound~\eqref{eq:errorbound} is not always tight. 
For example, let $\mathcal{A} $ be the Gaussian measurement operator defined in Section \ref{sub:generic-matrix-recovery}. 
For $m \ll n_1 n_2$, $ \Vert \mathcal{A}\Vert \asymp \sqrt{n_1 n_2}$ with high probability. 
This is larger than the optimal error scaling $\Vert X_0 - \hat{X} \Vert_F \propto \sqrt{m}$ for this regime and measurement model.

\begin{figure}
    \centering
\begin{tabular}{ccccc}
\begin{tikzpicture}[baseline,scale=0.7]
\coordinate (Y0) at (0,0);
\coordinate (Y1) at (0.5,2);
\coordinate (Y2) at (2.25,2);
\coordinate (Y3) at (2.5,3);
\coordinate (Yid) at (3,3);
\node[anchor=south] at (0,3) {\large (1)};
\fill[black] (Y0) circle (0.0625cm);
\node[anchor=north] at (Y0) {$Y_0$}; 
\fill[black] (Y1) circle (0.0625cm);
\node[anchor=east] at (Y1) {$Y_1$};
\fill[black] (Yid) circle (0.0625cm);
\node[anchor=south west] at (Yid) {$Y_{\mathrm{id}}$}; 
\draw[thick,dashed,->] (Y0) -- (Yid);
\draw[thick,->] (Y0) -- (Y1);
\end{tikzpicture}
& \hspace{1cm} &
\begin{tikzpicture}[baseline,scale=0.7]
\coordinate (Y0) at (0,0);
\coordinate (Y1) at (0.5,2);
\coordinate (Y2) at (2.25,2);
\coordinate (Y3) at (2.5,3);
\coordinate (Yid) at (3,3);
\node[anchor=south] at (0,3) {\large (2)};
\fill[black] (Y0) circle (0.0625cm);
\node[anchor=north] at (Y0) {$Y_0$}; 
\fill[black] (Y1) circle (0.0625cm);
\node[anchor=east] at (Y1) {$Y_1$};
\fill[black] (Y2) circle (0.0625cm);
\node[anchor=west] at (Y2) {$Y_2$};
\fill[black] (Yid) circle (0.0625cm);
\node[anchor=south west] at (Yid) {$Y_{\mathrm{id}}$}; 
\draw[thick,dashed,->] (Y1) -- (Yid);
\draw[thick,->] (Y1) -- (Y2);
\end{tikzpicture}
& \hspace{1cm} & 
\begin{tikzpicture}[baseline,scale=0.7]
\coordinate (Y0) at (0,0);
\coordinate (Y1) at (0.5,2);
\coordinate (Y2) at (2.25,2);
\coordinate (Y3) at (2.5,3);
\coordinate (Yid) at (3,3);
\node[anchor=south] at (0,3) {\large (3)};
\fill[black] (Y0) circle (0.0625cm);
\node[anchor=north] at (Y0) {$Y_0$}; 
\fill[black] (Y1) circle (0.0625cm);
\node[anchor=east] at (Y1) {$Y_1$};
\fill[black] (Y2) circle (0.0625cm);
\node[anchor=west] at (Y2) {$Y_2$};
\fill[black] (Yid) circle (0.0625cm);
\node[anchor=south west] at (Yid) {$Y_{\mathrm{id}}$}; 
\draw[thick,->] (Y0) -- (Y1);
\draw[thick,->] (Y1) -- (Y2);
\draw[thick,->] (Y2) -- (Y3);
\draw[thick,dotted,->] (Y3) -- (Yid);
\end{tikzpicture}
\end{tabular}
\captionsetup{width=0.85\linewidth}
\caption{
  \emph{Construction of an approximate dual certificate via golfing:}
  For the $p$-th leg, we start with a current best guess $Y_p$ (c.f.~panels~(1), (2) above).
  On the tangent space $T_{X_0}$, 
  we aim to express the difference $\Delta_p:=Y_{\mathrm{id}} - \mathcal{P}_{T_{X_o}} Y_p$ (dotted line) in terms of rows from the partial measurement matrix $\mathcal{A}^p$.
  Here, $Y_{\mathrm{id}}=UV^*$ is the ``ideal'' dual certificate, which is an element of the tangent space.
	If the rows of $\mathcal{A}^p$ were an orthonormal basis, then $\Delta_p = (\mathcal{A}^p)^* \mathcal{A}^p(Y_{\mathrm{id}})$ would give an exact solution.
  If $\mathcal{A}^p$ is subsampled from an ortho-normal basis, standard measure concentration results imply that on the tangent space, we will obtain a relatively decent approximation for $\Delta_p$ (solid line).
  In fact, if the number of rows in $\mathcal{A}^p$ is sufficient, one can easily show that the distance to the ideal certificate will be reduced by a constant factor with high probability.
  It is then natural to just iterate the scheme (panel (3)).
  This results in a random process which converges in Frobenius norm to the ideal certificate (on the tangent space) exponentially quickly.
  At the same time, on the space orthogonal to the tangent space, we have that $\mathbb{E}\big( \mathcal{P}_{T^{\perp}_{X_0}} \left( \mathcal{A}^* \left(z\right) \right) \big)=0$.
  Again using concentration of measure results, one can show that during the logarithmically many legs of the golfing procedure, the spectral norml of these terms remains small.
}
\label{fig:golfing}
\end{figure}
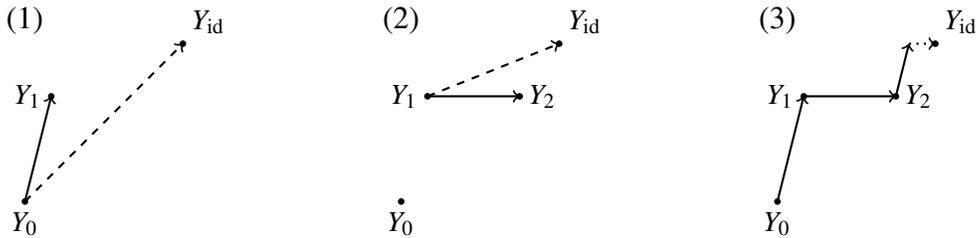

Before proving this statement,
we sketch the idea behind the \emph{golfing scheme} \cite{gross2011recovering} for the construction of an approximate dual certificate (c.f.~Figure~\ref{fig:golfing}). 

The ensemble of measurement vectors will often be \emph{isotropic} in the sense that $ \mathbb{E} \left[\mathcal{A}^* \mathcal{A}\right] = \Id $. 
This motivates the choice $\tilde{z}_1= \mathcal{A} \left(UV^*\right) $ and $\tilde{Y}_1 = \mathcal{A}^* \left(z\right) = \mathcal{A}^* \mathcal{A} \left(UV^*\right) $ for $z$ and $Y$, as it leads to the correct result
$ \mathbb{E} \left[ \mathcal{A}^* (\tilde{z}_1) \right]= UV^* $ \emph{in expectation}.
Consequently, one could then hope to show properties \eqref{eq:approx1}, \eqref{eq:approx1.5}, and \eqref{eq:approx3} using measure concentration around the mean. 
Unfortunately, this approach does not usually work directly.
One problem is that the operator norm $\Vert \mathcal{A} \Vert$
can be quite large (for blind deconvolution $\Vert \mathcal{A}  \Vert$ it is of the order $\sqrt{KN/L}$. 
This, in turn, means that $\Big\Vert UV^* - \mathcal{P}_{T_{X_0}} \mathcal{A}^* \left(z\right) \Big\Vert_F $ needs to be small, smaller than typical fluctuations.
The idea behind the golfing scheme is to iteratively refine this initial guess until condition \eqref{eq:approx1.5} is satisfied:
\begin{itemize}
	\item \textbf{Step 1:} Choose a partition of $\left[m\right]$ into $Q$ disjoint sets 
$\left\{\Gamma_1,\ldots,\Gamma_q \right\}$ of size roughly $\vert \Gamma_q \vert \approx m/Q $, such that $ Q \mathbb{E} \left[ (\mathcal{A}^q)^* \mathcal{A}^q \right]   \approx \Id$, where $\mathcal{A}^q := Q_{\Gamma_q} \mathcal{A} $. (Here, $Q_{\Gamma_q}: \mathbb{C}^m \rightarrow \mathbb{C}^m$ denotes the coordinate projection onto $\Gamma_q$.)
	\item \textbf{Step 2:} Set 
	\begin{align*}
	Y_0&=0 \quad \text{and }\\
	Y_{q}&= Y_{q-1} + Q_{\Gamma_q}   (\mathcal{A}^q)^* \mathcal{A}^q \left(UV^* - \mathcal{P}_{T_{X_0}} Y_{q-1} \right) \quad \text{ where $1 \leq q \leq Q$}.
	\end{align*}
	\noindent The corresponding $z\in \mathbb{C}^m$ is then given by 
	\begin{align*}
	z:= Q \sum_{q=1}^{Q}  \mathcal{A}^q \left( UV^* - \mathcal{P}_{T_{X_0}} Y_{q-1}  \right).
	\end{align*}
\end{itemize}
Note that a consequence of the sample splitting in Step 1 is that the golfing scheme is set up in a such a way that the distribution of $ \mathcal{A}^q $ is independent of $Y_{q-1}$.
This simplifies the analysis, but is not essential \cite{gross2010note}.

The precise convergence properties of this random process depend on the parameters (partition size, incoherence, etc. \cite{gross2011recovering}) and is, in any case, beyond the scope of this article.
Instead, we want to make precise the following new observation -- which, in keeping with the theme, we call the \emph{putting proposition}.
For more context, see the discussion at the end of Section~\ref{sec:dual}.

\begin{proposition}[``Putting Proposition'']\label{prop:putt}
  Assume that the approximate dual certificate properties (\ref{eq:approx1}) -- (\ref{eq:approx3}) hold and that $\mathcal{A}$ fulfills  the $\delta$-restricted isometry property on $T_{X_0}$ for $\delta<3/4$.
  Then there exists an exact dual certificate for $X_0$.
\end{proposition}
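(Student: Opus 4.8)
The plan is to start from the approximate dual certificate $Y = \mathcal{A}^*(z)$ and correct it on the tangent space to obtain an exact one, exploiting the $\delta$-RIP with $\delta < 3/4$. Recall that an exact certificate requires a vector $\tilde z$ with $\mathcal{P}_{T_{X_0}} \mathcal{A}^*(\tilde z) = UV^*$ exactly and $\|\mathcal{P}_{T_{X_0}^\perp}\mathcal{A}^*(\tilde z)\| < 1$. The defect of the approximate certificate is precisely the matrix $R := UV^* - \mathcal{P}_{T_{X_0}} \mathcal{A}^*(z) \in T_{X_0}$, which by \eqref{eq:approx1.5} satisfies $\|R\|_F = \alpha \le \tfrac{1}{8\|\mathcal{A}\|}$. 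First I would observe that the operator $\mathcal{P}_{T_{X_0}} \mathcal{A}^* \mathcal{A} \mathcal{P}_{T_{X_0}} : T_{X_0} \to T_{X_0}$ is invertible: by the $\delta$-RIP, for any $Z \in T_{X_0}$ we have $\langle Z, \mathcal{P}_{T_{X_0}}\mathcal{A}^*\mathcal{A}\mathcal{P}_{T_{X_0}} Z\rangle = \|\mathcal{A}(Z)\|_2^2 \ge (1-\delta)\|Z\|_F^2 > \tfrac14 \|Z\|_F^2$, so this map is positive definite on $T_{X_0}$ with inverse of norm at most $(1-\delta)^{-1} < 4$.

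The key step is then to define the correction. Set $W := (\mathcal{P}_{T_{X_0}}\mathcal{A}^*\mathcal{A}\mathcal{P}_{T_{X_0}})^{-1} R \in T_{X_0}$ and let $z' := \mathcal{A}(W) \in \mathbb{C}^m$, and finally $\tilde z := z + z'$. On the tangent space this is exact by construction: $\mathcal{P}_{T_{X_0}}\mathcal{A}^*(\tilde z) = \mathcal{P}_{T_{X_0}}\mathcal{A}^*(z) + \mathcal{P}_{T_{X_0}}\mathcal{A}^*\mathcal{A}\mathcal{P}_{T_{X_0}}(W) = (UV^* - R) + R = UV^*$, using $W \in T_{X_0}$. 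It remains to control the off-tangent part $\|\mathcal{P}_{T_{X_0}^\perp}\mathcal{A}^*(\tilde z)\|$. By the triangle inequality this is at most $\|\mathcal{P}_{T_{X_0}^\perp}\mathcal{A}^*(z)\| + \|\mathcal{P}_{T_{X_0}^\perp}\mathcal{A}^*\mathcal{A}(W)\|$; the first term is $< \tfrac12$ by \eqref{eq:approx3}. For the second, bound it crudely by $\|\mathcal{A}^*\mathcal{A}(W)\| \le \|\mathcal{A}^*\mathcal{A}(W)\|_F \le \|\mathcal{A}\|\,\|\mathcal{A}(W)\|_2 \le \|\mathcal{A}\|^2 \|W\|_F$, and then use $\|W\|_F \le (1-\delta)^{-1}\|R\|_F = (1-\delta)^{-1}\alpha < 4 \cdot \tfrac{1}{8\|\mathcal{A}\|}$. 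This gives $\|\mathcal{P}_{T_{X_0}^\perp}\mathcal{A}^*\mathcal{A}(W)\| \le \tfrac{\|\mathcal{A}\|^2}{2\|\mathcal{A}\|} = \tfrac{\|\mathcal{A}\|}{2}$, which is too weak.

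So the main obstacle is exactly this last estimate: the naive bound loses a factor of $\|\mathcal{A}\|$, and one needs to use that $\mathcal{A}(W)$, while possibly large in a worst-case sense, is the image of the small tangent-space matrix $W$. The fix is to use $\|\mathcal{A}(W)\|_2 \le \sqrt{1+\delta}\,\|W\|_F$ from the $\delta$-RIP (not $\|\mathcal{A}\|\,\|W\|_F$), giving $\|\mathcal{A}^*\mathcal{A}(W)\|_F$-type bounds that lose only one factor of $\|\mathcal{A}\|$ overall, combined with the fact that the precision in \eqref{eq:approx1.5} was deliberately set to scale as $\tfrac{1}{8\|\mathcal{A}\|}$. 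Concretely, $\|\mathcal{P}_{T_{X_0}^\perp}\mathcal{A}^*\mathcal{A}(W)\| \le \|\mathcal{A}^*\|\,\|\mathcal{A}(W)\|_2 = \|\mathcal{A}\|\sqrt{1+\delta}\,\|W\|_F \le \|\mathcal{A}\|\sqrt{1+\delta}\,(1-\delta)^{-1}\alpha$, and plugging in $\delta = 3/4$ (or tracking a general $\delta < 3/4$) together with $\alpha \le \tfrac{1}{8\|\mathcal{A}\|}$ yields a quantity strictly below $\tfrac12$, so that the total stays strictly below $1$. I would also double-check the norm bound $\|\tilde z\|_2 \le 2$ is not needed here — indeed the Putting Proposition only claims existence of an exact certificate, and boundedness of $\tilde z$ plays no role in Lemma~\ref{lemma:candesrechtmatrix}, so it can be omitted. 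Finally, one notes that injectivity of $\mathcal{A}$ on $T_{X_0}$ (condition 2 of Lemma~\ref{lemma:candesrechtmatrix}) is immediate from the $\delta$-RIP with $\delta < 1$, so the two conditions of that lemma are met and $X_0$ is the unique minimizer.
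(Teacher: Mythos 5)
Your construction is exactly the one the paper uses: correct $z$ by $z'=\mathcal{A}(W)$ with $W=(\mathcal{P}_{T_{X_0}}\mathcal{A}^*\mathcal{A}\mathcal{P}_{T_{X_0}})^{-1}R$, $R=UV^*-\mathcal{P}_{T_{X_0}}\mathcal{A}^*(z)$, verify exactness on $T_{X_0}$, and control the off-tangent part by $\tfrac12+\|\mathcal{A}\|\,\|z'\|_2$. The gap is in your final estimate of $\|z'\|_2$: chaining $\|\mathcal{A}(W)\|_2\le\sqrt{1+\delta}\,\|W\|_F$ with $\|W\|_F\le(1-\delta)^{-1}\alpha$ gives the perturbation bound $\tfrac{\sqrt{1+\delta}}{8(1-\delta)}$, and your claim that this is strictly below $\tfrac12$ for $\delta$ up to $3/4$ is false: as $\delta\to 3/4$ it tends to $\sqrt{7}/4\approx 0.66$, so the total bound exceeds $1$ and the argument breaks down. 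Solving $\sqrt{1+\delta}<4(1-\delta)$ shows your chain only works for $\delta\lesssim 0.68$, whereas the proposition asserts the conclusion for every $\delta<3/4$; since RIP with a larger constant is a weaker hypothesis, this is a genuine shortfall in the claimed range, not just a constant-tracking issue.

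The repair is to bound $\|z'\|_2$ directly rather than through $\|W\|_F$: since $W\in T_{X_0}$,
\begin{align*}
\|z'\|_2^2=\|\mathcal{A}(W)\|_2^2=\langle W,\mathcal{P}_{T_{X_0}}\mathcal{A}^*\mathcal{A}\mathcal{P}_{T_{X_0}}W\rangle=\langle W,R\rangle\le\|W\|_F\,\|R\|_F\le\tfrac{\alpha^2}{1-\delta},
\end{align*}
so $\|z'\|_2\le\tfrac{\alpha}{\sqrt{1-\delta}}\le\tfrac{1}{8\sqrt{1-\delta}\,\|\mathcal{A}\|}$, and hence $\|\mathcal{P}_{T_{X_0}^\perp}\mathcal{A}^*(z')\|\le\|\mathcal{A}\|\,\|z'\|_2\le\tfrac{1}{8\sqrt{1-\delta}}<\tfrac14$ for all $\delta<3/4$, giving a total strictly below $\tfrac34<1$. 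This single-square-root dependence on $(1-\delta)^{-1}$ is precisely how the paper's proof closes the argument; the rest of your proposal (exactness on the tangent space, dispensing with the bound $\|z\|_2\le 2$, and getting injectivity on $T_{X_0}$ from the RIP) is correct and matches the paper.
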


\begin{proof}
  Using the  variational characterization of the operator norm of a Hermitian linear map, as well as the definition of the $\delta$-RIP, we get 
  \begin{align*}
	\left\|
	  P_{T_{X_0}} 
	  \mathcal{A}^* \mathcal{A}
	  P_{T_{X_0}} 
	  -
	  P_{T_{X_0}} 
	\right\|
	&=
	\sup_{Z\in{T_{X_0}}, \|Z\|_F=1}
	\big|
	(Z,\mathcal{A}^* \mathcal{A} Z)
	  -
	 1  
	\big|
	 \\
	&=
	\sup_{Z\in{T_{X_0}}, \|Z\|_F=1}
	\big|
	\|\mathcal{A}Z\|_F^2
	  -
	 1 
	\big|
	\leq
	\delta.
  \end{align*}
  Hence, as a linear map on the tanget space,
  $P_{T_{X_0}} \mathcal{A}^* \mathcal{A} P_{T_{X_0}}$
  is invertible and satisfies
  \begin{align*}
	\big\| \big(P_{T_{X_0}} \mathcal{A}^* \mathcal{A} P_{T_{X_0}} \big)^{-1} \big\| 
	\leq
	\tfrac{1}{1-\delta}.
  \end{align*}

  Set
  \begin{align*}
	x 
	= 
	\big(\mathcal{A} P_{T_{X_0}} \big)^{-1} 
	\big(
	  UV^* -  
	  \mathcal{P}_{T_{X_0}} 
	  \mathcal{A}^*(z) 
	\big).
  \end{align*}
  Together with (\ref{eq:approx1.5}), this gives
  \begin{align*}
	\|x\|_2
	\leq 
	\tfrac{1}{\sqrt{1-\delta}}
	\big\| UV^* -  \mathcal{P}_{T_{X_0}} \mathcal{A}^*(z) \big\|_F
	\leq
	\tfrac{1}{8 \sqrt{1-\delta} \|\mathcal{A}\|}.
  \end{align*}
  But then, with $Y' = \mathcal{A}^*(z-x) = Y - \mathcal{A}^*(x) $, we have that
  \begin{align*}
	\mathcal{P}_{T_{X_0}}(Y') = UV^*
  \end{align*}
  and
  \begin{align*}
	\left\|
	  \mathcal{P}_{T_{X_0}^\perp}Y' 
	\right\|
	&\leq
	\left\|
	  \mathcal{P}_{T_{X_0}^\perp}Y 
	\right\|
	+
	\left\|
	\mathcal{P}_{T_{X_0}^\perp}\mathcal{A}^*(x)
	\right\| \\
	&\leq
	\tfrac12
	+
	\left\|
	\mathcal{P}_{T_{X_0}^\perp}\mathcal{A}^*(x)
	\right\|_F \\
	&\leq
	\tfrac12
	+
	\|\mathcal{A}\| \|x\|_2 \\
	&\leq
	\tfrac12
	+
	\tfrac{1}{8\sqrt{1-\delta}}
	<1.
  \end{align*}
{}
\end{proof}

\subsection{Application 3: matrix completion}
\label{sec:app matrix completion}

Using dual certificate-based proof techniques, the nuclear norm minimization
approach to matrix completion has been studied extensively
\cite{candes2009exact,candes2010power,gross2011recovering,recht2011simpler,chen_coherenceoptimal}.
A typical result for the noiseless case ($\tau=0$) reads as follows:
\begin{theorem}[\cite{chen_coherenceoptimal}]\label{theorem:grossMC}Assume that $n_1 \ge n_2$. Consider measurements of the form $ y = \mathcal{A} \left(X_0\right)$, where $X_0\in \mathbb{R}^{n_1 \times n_2}$  is a rank-$r$ matrix and $\mathcal{A}$ is given by (\ref{equ:MCopdefinition}). Assume  that
	\begin{align*}
	m \ge C  \max \left\{ \mu^2 \left(U\right), \mu^2 \left(V\right)   \right\}  r n_1 \log^2 n_1.
	\end{align*}
	Then with high probability,  the matrix $X_0$ is the unique minimizer of 
	SDP~\eqref{SDP3} (see also SDP~\eqref{SDP_MC} with $\tau=0$).
\end{theorem}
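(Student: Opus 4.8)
The plan is to instantiate the golfing machinery of Section~\ref{sec:golfing} for the matrix completion operator \eqref{equ:MCopdefinition} with $\tau=0$. Concretely, I would verify the two hypotheses of Proposition~\ref{mainprop} (equivalently, of Lemma~\ref{lemma:candesrechtmatrix} together with the Putting Proposition~\ref{prop:putt}): (i) $\mathcal{A}$ obeys the $\delta$-restricted isometry property on the tangent space $T_{X_0}$ of Definition~\ref{deltarip} for some $\delta\le 1/2$, which in particular makes $\mathcal{A}$ injective on $T_{X_0}$; and (ii) an approximate dual certificate $Y=\mathcal{A}^*(z)$ in the sense of Definition~\ref{def:approx-dual} exists with high probability. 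Because $\tau=0$, the bound \eqref{eq:approx1} on $\|z\|_2$ is irrelevant, so only \eqref{eq:approx1.5} and \eqref{eq:approx3} must be produced. Given (i) and (ii), Proposition~\ref{prop:putt} converts the approximate certificate into an exact one and Lemma~\ref{lemma:candesrechtmatrix} then identifies $X_0$ as the unique minimizer.

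\textbf{Step 1: near-isometry on the tangent space.} Since $(a_i,b_i)$ is uniform, $\mathcal{A}^*\mathcal{A}=\tfrac{n_1 n_2}{m}\sum_{i=1}^m \langle e_{a_i}e_{b_i}^*,\cdot\rangle\, e_{a_i}e_{b_i}^*$ has expectation $\Id$, so $\mathcal{P}_{T_{X_0}}\mathcal{A}^*\mathcal{A}\mathcal{P}_{T_{X_0}}-\mathcal{P}_{T_{X_0}}$ is a sum of $m$ i.i.d.\ centered self-adjoint operators; I would bound its operator norm by the matrix Bernstein inequality \cite{tropp2012user-friendly}. Both the uniform bound and the variance proxy reduce to estimating $\|\mathcal{P}_{T_{X_0}}(e_a e_b^*)\|_F^2$, and the definitions of $\mu(U),\mu(V)$ give $\|\mathcal{P}_{T_{X_0}}(e_a e_b^*)\|_F^2\lesssim \mu^2 r\,(n_1+n_2)/(n_1n_2)$ with $\mu^2:=\max\{\mu^2(U),\mu^2(V)\}$ — this is the only place incoherence is used in this step. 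Matrix Bernstein then yields $\delta\le 1/2$ with failure probability $\le n_1^{-\beta}$ as soon as $m\gtrsim \mu^2 r n_1\log n_1$, comfortably within the stated budget.

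\textbf{Step 2: building the certificate by golfing.} Following Section~\ref{sec:golfing_scheme}, split $[m]$ into $Q=\Theta(\log n_1)$ batches $\Gamma_1,\dots,\Gamma_Q$ with $|\Gamma_q|\approx m/Q\gtrsim \mu^2 r n_1\log n_1$ and rescaled partial operators $\mathcal{A}^q$ satisfying $Q\,\mathbb{E}[(\mathcal{A}^q)^*\mathcal{A}^q]=\Id$, run the recursion of Step~2 there, and set $W_q:=UV^*-\mathcal{P}_{T_{X_0}}Y_q$ (so $W_0=UV^*$ and $W_q\in T_{X_0}$). Then $W_q=\big(\mathcal{P}_{T_{X_0}}-Q\,\mathcal{P}_{T_{X_0}}(\mathcal{A}^q)^*\mathcal{A}^q\mathcal{P}_{T_{X_0}}\big)W_{q-1}$, and each batch is an independent copy of the operator estimated in Step~1, whence $\|W_q\|_F\le\tfrac12\|W_{q-1}\|_F$ and $\|W_Q\|_F\le 2^{-Q}\sqrt r$. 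Since $\|\mathcal{A}\|\le\sqrt{n_1n_2}$ for this model, choosing $Q\gtrsim\log(\sqrt r\,\|\mathcal{A}\|)=O(\log n_1)$ secures \eqref{eq:approx1.5}. For \eqref{eq:approx3}, write $\mathcal{P}_{T_{X_0}^\perp}Y_Q=Q\sum_{q=1}^Q \mathcal{P}_{T_{X_0}^\perp}(\mathcal{A}^q)^*\mathcal{A}^q W_{q-1}$; each term has mean zero because $W_{q-1}\in T_{X_0}$, and matrix Bernstein bounds its operator norm by a combination of $\|W_{q-1}\|_\infty$ and the row/column norm $\|W_{q-1}\|_{2,\infty}$. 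Propagating geometric decay of these two auxiliary norms through the golfing recursion and summing the geometric series keeps $\|\mathcal{P}_{T_{X_0}^\perp}Y_Q\|<\tfrac12$. A union bound over Steps~1--2 and the $Q$ legs finishes the argument.

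\textbf{Main obstacle.} The delicate point — and the source of the second logarithm in $m\gtrsim\mu^2 r n_1\log^2 n_1$ — is controlling the off-tangent operator norm in Step~2 using \emph{only} $\mu(U)$ and $\mu(V)$, i.e.\ without imposing a separate ``joint incoherence'' hypothesis of the form $\|UV^*\|_\infty\lesssim \sqrt{r}/\sqrt{n_1n_2}$: the bound $\|UV^*\|_\infty\le\mu(U)\mu(V)r/\sqrt{n_1n_2}$ that Cauchy--Schwarz provides loses a factor $\sqrt r$, so the naive golfing bookkeeping would give $\mu^2 r^2$. Recovering the optimal $r$-dependence forces one to carry the $\ell_{2,\infty}$ (not just the $\ell_\infty$) norm of $W_q$ through the induction, to show the golfing step contracts it, and to over-sample each batch by a $\log n_1$ factor so that the attendant concentration estimates hold with room for the $Q=\Theta(\log n_1)$-fold union bound. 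Everything else — fixing the constants relating $\delta=1/2$, the contraction factor $1/2$, and the thresholds $1/(8\|\mathcal{A}\|)$ and $1/2$ — is routine.
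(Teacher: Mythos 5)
Your proposal takes essentially the route the paper intends: the paper does not prove Theorem~\ref{theorem:grossMC} itself but cites \cite{chen_coherenceoptimal} and presents exactly the machinery you instantiate --- approximate dual certificates (Definition~\ref{def:approx-dual}, Proposition~\ref{mainprop}, Proposition~\ref{prop:putt}) combined with the golfing construction of Section~\ref{sec:golfing_scheme} --- specialized to the sampling operator \eqref{equ:MCopdefinition}. Your two steps (tangent-space RIP via matrix Bernstein, then golfing with the $\ell_{2,\infty}$ bookkeeping that removes any joint-incoherence assumption and accounts for the second $\log n_1$ factor) are the same argument as in the cited work, so the approach matches.
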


Further variants have been studied in the literature.
For example, if the noise term is drawn randomly instead of adversarially, improved results can be given, see Refs.~\cite{koltchinskii2011nuclear,klopp} for subexponential, and 
Ref.~\cite{chenconvex} for subgaussian noise.
Non-convex algorithms with rigorous performance guarantees can be found in Refs.~\cite{irlsfornasier,irlskuemmerle,optspace1,Jain_alternatingminimization,sun2016guaranteed,ma2017implicit, ge2016matrix,optspace2}).

\subsection{Application 4: simultaneous demixing and blind deconvolution}\label{sec:blind_demixing}

Simultaneous blind deconvolution and demixing is a generalization of the blind deconvolution problem introduced in Section~\ref{section:blinddeconv}.
It is motivated by wireless communication scenarios that involve multiple senders, but only one receiver. Each sender wants to transmit a signal $m_i$ using a linear encoder $C_i$. The encoded signal $x_i=C_{i} m_i$ is sent through an unknown convolution channel $w_i$ to the receiver. Because there are multiple senders, the receiver obtains the superposition of $r$ convolutions, where the goal is to reconstruct all messages $\left\{ m_i \right\}_{i=1}^r$.

In mathematical terms, this leads to an inverse problem of the form
\begin{align}\label{ydef}
y= \sum_{i=1}^{r} w_i \ast x_i +e \in \mathbb{C}^L,
\end{align}
where $*$ denotes the (circular) convolution introduced in Eq.~\eqref{eq:convolution}. The goal is to simultaneously reconstruct all signals $x_i$, as well as all channel descriptions $w_i$.
 As in the randomized blind deconvolution framework, we have to use some prior knowledge on $w_i$ and $x_i$ in order to be able to reconstruct these signals. We are going to adopt the framework introduced in Ref.~\cite{ling2017blind}.
Assume that $w_i$ and $x_i$ are elements of known subspaces. Hence, we can write $w_i = B h_i$ and $x_i= C_i m_i$ for all $ i \in \left[r\right] $, where $B \in \mathbb{C}^{L \times K} $ and $C_i \in \C^{L \times N} $. We assume that $B^* B = \Id$ and, moreover, that for each $i \in \left[ r \right] $ the entries of the matrix $C_i$ are i.i.d. samples from the complex normal distribution $\mathcal{CN} \left(0,1\right)$.
\begin{figure}[ht]
	\centering
	\includegraphics[width=.5\linewidth]{./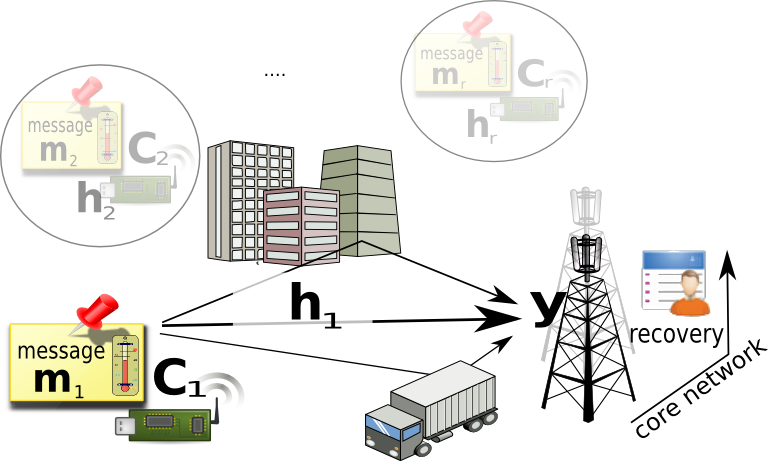}
	\captionsetup{width=0.85\linewidth}
	\caption{\emph{A multi-user wireless (uplink) communication scenario:}
          wireless devices $i=1\dots r$ simultaneously transmit
          messages $m_i$ to the basestation which are individually
          encoded with a linear code $C_i$ and experience individual
          convolutional channels $h_i$.}
	\label{fig1}
\end{figure}

Similar to the the randomized blind deconvolution setting, we note that for each $i \in \left[ r \right]$ there is a unique linear operator $\mathcal{A}_i: \mathbb{C}^{K \times N} \rightarrow \mathbb{C}^L$ such that for all $ u \in \C^{K}$ and $ v \in \C^{N}$ it holds that 
\begin{align*}
\mathcal{A}_{i} \left(uv^*\right)=B u\ast C_i \overline{v}.
\end{align*}
Hence, we can rewrite equation \eqref{ydef} as
\begin{align*}
y= \sum_{i=1}^{r} \mathcal{A}_i \left( h_i m^*_i \right) +e.
\end{align*}
We emphasize that each outer product $h_i m_i^*$ comes with a unique linear operator $\mathcal{A}_i$. This allows us to recast Eq.~\eqref{ydef} as a low-rank matrix recovery problem on a larger space. 
Our goal is to recover the block-diagonal rank-$r$ matrix
\begin{align*}
  X_0= h_1 m_1^* \oplus h_2 m_2^* \oplus \cdots \oplus h_r m_r^* 
\end{align*}
from a linear measurement operator that decomposes accordingly ($\mathcal{A}(
Z_1 \oplus \cdots \oplus Z_r) = \sum_{i=1}^r \mathcal{A}_i (Z_i)$). 
Adapting SDP~\eqref{SDP_MC} to this problem structure yields
\begin{align}\label{ineq:help5}
\begin{split}
\underset{X_1,\ldots,X_r \in \mathbb{C}^{K \times N}}{\text{minimize}} \quad & \sum_{i=1}^{r} \Vert X_i \Vert_{\ast}\\
\text{subject to} \quad  &\Vert y -  \sum_{i=1}^{r}  \mathcal{A}_i \left(X_i\right) \Vert_2 \le \tau,\end{split}
\end{align}
see \cite{ling2017blind}.
Furthermore, denote by $\mu^2_{\max} $ and $\mu^2_{h}  $ the coherence parameters, which are similar to the ones defined in Section \ref{section:blinddeconv}. (For a precise definition we refer to \cite{jung2018blind}.)
In \cite{ling2017blind} it has been shown that if
\begin{align}\label{ineq:help7}
L \gtrsim r^2 \left(K\mu^2_{\max}+N \mu^2_{h} \right) \log^3 L
\end{align}
holds, then in the noiseless scenario, i.e., $e=0$, the convex relaxation \eqref{ineq:help5} recovers the ground truth matrix $X_0$ with high probability. 

However, we observe that the number of degrees of freedom in this problem is $ r \left(K + N -1\right) $, which raises the question, whether the quadratic dependence on $r$ in \eqref{ineq:help7} is necessary. Indeed, numerical experiments in \cite{ling2017blind} indicate that the true dependence of the sample complexity in $r$ should rather be linear (see \cite[Section IV]{ling2017blind} as well as \cite[Section III]{jung2018blind}).

The main result in \cite{jung2018blind} shows that the required simple complexity is indeed linear in $r$. Hence, nuclear norm-minimization can recover the ground truth signal $X_0$ at near-optimal sample complexity.

\begin{theorem}(\cite{jung2018blind}, see also \cite{stoger2016blind,stoger2017blindspie,stoger2017blind}) \label{theorem:mainwithnoise}
	Let $y \in \mathbb{C}^L$ be given by (\ref{ydef}) with $ \Vert e \Vert_2 \le \tau $. Assume that 
	\begin{align*}
	L/\log^3 L \gtrsim r \left(  K \mu^2_{\max} \log \left( K \mu^2_{\max} \right) + N \mu^2_{h} \right).
	\end{align*}
	Then, with high probability, every minimizer
	$ \hat{\mathsf{X}} =  \hat{X}_1 \oplus \ldots \oplus \hat{X}_r $ of SDP \eqref{ineq:help5} satisfies
	\begin{align*}
	\sqrt{ \sum_{i=1}^{r} \Big\Vert \hat{X}_i  - h_i m^*_i \Big\Vert^2_{F}} \lesssim \tau  \sqrt{ rN }  .
	\end{align*}
\end{theorem}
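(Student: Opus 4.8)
The plan is to reduce the theorem to the golfing machinery of Section~\ref{sec:golfing}. Write $X_0 = h_1 m_1^* \oplus \cdots \oplus h_r m_r^*$ and observe that, since the nuclear norm of a block-diagonal matrix is the sum of the nuclear norms of its blocks, the program~\eqref{ineq:help5} is exactly the constrained nuclear norm minimization~\eqref{SDP3} on the (complex) inner product space of block-diagonal matrices. Moreover the tangent space at $X_0$ decomposes blockwise, $T_{X_0} = \bigoplus_{i=1}^r T_i$ with $T_i = \{h_i a^* + b m_i^* : a \in \mathbb{C}^N,\ b \in \mathbb{C}^K\}$, and the ideal certificate is $Y_{\mathrm{id}} = \bigoplus_{i=1}^r u_i v_i^*$ with $u_i = h_i/\|h_i\|_2$, $v_i = m_i/\|m_i\|_2$. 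By Proposition~\ref{mainprop} it therefore suffices to establish, with high probability in the stated sampling regime, that (i) $\mathcal{A}$ satisfies the $\delta$-RIP on $T_{X_0}$ with, say, $\delta = 1/2$, and (ii) there exists an approximate dual certificate $Y = \mathcal{A}^*(z)$ in the sense of Definition~\ref{def:approx-dual}. The quantitative error bound then comes from a separate estimate $\|\mathcal{A}\| \lesssim \sqrt{rN}$ (up to logarithmic factors absorbed by the hypothesis), since Proposition~\ref{mainprop} gives $\|\hat{\mathsf{X}} - X_0\|_F \lesssim \|\mathcal{A}\|\tau \lesssim \tau\sqrt{rN}$, and the left-hand side equals $\big(\sum_{i=1}^r \|\hat{X}_i - h_i m_i^*\|_F^2\big)^{1/2}$.

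For the RIP and the operator-norm bound, the natural tool is the matrix Bernstein inequality applied to $\mathcal{P}_{T_{X_0}}\mathcal{A}^*\mathcal{A}\mathcal{P}_{T_{X_0}} - \mathcal{P}_{T_{X_0}}$ (respectively to $\mathcal{A}^*\mathcal{A}$), after passing to the Fourier domain so that the $\ell$-th measurement matrix of block $i$ is $b_\ell c_{i,\ell}^*$ with $b_\ell$ deterministic and $\{c_{i,\ell}\}$ jointly i.i.d.\ $\mathcal{CN}(0,I_N)$. The coherence parameters $\mu_{\max}^2$ and $\mu_h^2$ enter precisely here: they bound $\|b_\ell\|_2^2$ and the incoherence of $u_i, v_i$ with the $b_\ell$'s, and hence control the per-summand operator norms and variance proxies. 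The block-diagonal structure of $T_{X_0}$ makes the within-block (diagonal) contribution behave like $r$ independent single-user instances, while the off-block-diagonal contributions are mean-zero and concentrate.

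For the certificate I would run the golfing scheme of Section~\ref{sec:golfing_scheme}: partition $[L]$ into $Q \asymp \log L$ batches, set $Y_0 = 0$ and iterate $Y_q = Y_{q-1} + (\mathcal{A}^q)^*\mathcal{A}^q\big(Y_{\mathrm{id}} - \mathcal{P}_{T_{X_0}} Y_{q-1}\big)$. Two properties must be propagated. First, a contraction on the tangent space, $\|Y_{\mathrm{id}} - \mathcal{P}_{T_{X_0}} Y_q\|_F \le \tfrac12 \|Y_{\mathrm{id}} - \mathcal{P}_{T_{X_0}} Y_{q-1}\|_F$, which follows from a local-isometry estimate for each batch restricted to $T_{X_0}$ and forces each batch to carry $\gtrsim r\big(K\mu_{\max}^2\log(K\mu_{\max}^2) + N\mu_h^2\big)\,\mathrm{polylog}(L)$ rows; over $Q \asymp \log L$ batches this accumulates the $\log^3 L$ factor of the hypothesis and drives the residual below $1/(8\|\mathcal{A}\|)$, giving~\eqref{eq:approx1.5}. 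Second, control of $\|\mathcal{P}_{T_{X_0}^\perp} Y_q\|$: since $\mathbb{E}\,\mathcal{P}_{T_{X_0}^\perp}(\mathcal{A}^q)^*\mathcal{A}^q(W) = 0$ for $W \in T_{X_0}$, each batch adds a mean-zero spectral perturbation bounded by matrix Bernstein (again in terms of $\mu_{\max},\mu_h$), and summing the geometrically decaying contributions over all batches keeps the total below $1/2$, giving~\eqref{eq:approx3}; the bound $\|z\|_2 \le 2$ in~\eqref{eq:approx1} follows by summing the norms of the batch increments $\mathcal{A}^q(Y_{\mathrm{id}} - \mathcal{P}_{T_{X_0}}Y_{q-1})$.

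The genuinely hard part — and the step that goes beyond the $r^2$-dependent analysis of~\cite{ling2017blind} — is obtaining sample complexity \emph{linear} in $r$. The obstruction is that the $r$ blocks are probed by a single shared pool of $L$ measurements $y_\ell = \sum_i \langle b_\ell c_{i,\ell}^*, X_i\rangle$, so one cannot simply allocate $L/r$ measurements per block. The resolution is to treat, in both the contraction step and the spectral-norm step, the contribution of the other $r-1$ blocks as mean-zero ``interference'' that is independent across blocks (the $c_{i,\ell}$ being independent in $i$); a careful variance computation then shows this interference contributes a variance proxy that scales only linearly in $r$ — each interferer adds a variance proportional to its own degrees of freedom — so that $L \gtrsim r(\cdots)\,\mathrm{polylog}$ already suffices. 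Carrying this out requires Bernstein-type bounds whose variance parameters are computed sharply enough to see the linear-in-$r$ behaviour rather than the naive union bound over blocks; this bookkeeping, rather than any single inequality, is the crux of the argument.
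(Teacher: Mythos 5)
Your overall architecture matches what the paper describes: recast \eqref{ineq:help5} as nuclear norm minimization over block-diagonal matrices, note that the tangent space decomposes as $\tilde{T}=\bigoplus_i T_i$, and combine an approximate dual certificate built by golfing with a local isometry property on $\tilde{T}$, feeding the result into a Proposition~\ref{mainprop}-type error bound. Up to that level of detail you and the paper agree.

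The gap is exactly at the step you yourself flag as the crux: why the sample complexity is \emph{linear} in $r$. You assert that a ``careful variance computation'' in matrix-Bernstein-type bounds, treating the other blocks as mean-zero interference, yields variance proxies linear in $r$ — but you never carry this out, and it is not a routine bookkeeping matter. The quantity to control is
\begin{align*}
\sup_{X_1\oplus\cdots\oplus X_r\in\hat{T}}\ \Big\vert\, \Big\Vert \sum_{i=1}^{r}\mathcal{A}_i(X_i)\Big\Vert_2^2-\mathbb{E}\,\Big\Vert \sum_{i=1}^{r}\mathcal{A}_i(X_i)\Big\Vert_2^2\,\Big\vert ,
\end{align*}
a supremum over an infinite set of second-order (chaos) expressions in the Gaussian codebooks, with deterministic coherence-constrained vectors $b_\ell$ mixed in; the summands are heavy-tailed and the supremum requires a chaining argument, not a pointwise Bernstein bound with a union over blocks. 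The paper's route (following \cite{jung2018blind}) is precisely to avoid the per-block RIP plus near-orthogonality strategy of \cite{ling2017blind} (which is what forces $r^2$) and instead to establish the restricted isometry \eqref{ineq:deltalocalisometryoutline} \emph{directly} on $\hat{T}$ by interpreting the displayed deviation as a supremum of a chaos process and invoking the empirical-process results of \cite{krahmer2014suprema}; the linear-in-$r$ scaling and the specific $K\mu^2_{\max}\log(K\mu^2_{\max})+N\mu^2_h$ structure come out of the $\gamma_2$/entropy estimates there. So your proposal is not wrong in outline, but at the decisive step it substitutes an unproven claim for the actual technical mechanism, and a plain variance computation of the kind you sketch is not known to deliver the stated rate. (Minor additional point: your bound $\Vert\mathcal{A}\Vert\lesssim\sqrt{rN}$, which you need to convert Proposition~\ref{mainprop} into the stated error $\tau\sqrt{rN}$, is also asserted without proof and should be checked against the scaling $\Vert\mathcal{A}\Vert\asymp\sqrt{KN/L}$ quoted in Section~\ref{sec:golfing_scheme} for the single-user case.)
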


In the following, we are going to describe the main technical ingredient, which allowed for linear scaling in $r$. In both \cite{ling2017blind} and \cite{jung2018blind}, the proofs establish the existence of an approximate dual certificate with high probability. One ingredient is to show that the measurement operator acts as an approximate isometry operator on the tangent space of $X_0$, see Definition~\ref{deltarip}. To make this precise in the blind demixing scenario, define, for $ i \in \left[r\right] $, the tangent space $T_i$ of rank-$1$-matrices at $h_im^*_i$:
\begin{align*}
T_i = \left\{ h_i u_i^* + v_i m^*_i : \ u_i \in \mathbb{C}^{K}, v_i \in \mathbb{C}^{N}   \right\}
\end{align*}
Then we can define the tangent space at $X_0$ by
\begin{align*}
\tilde{T} := \left\{  X_1 \oplus \ldots \oplus X_r: \ X_i \in T_i \quad \text{for all } i \in \lbrack r \rbrack   \right\}.
\end{align*} 
In both \cite{ling2017blind} and \cite{jung2018blind}, one part of the proof consists in showing that, with high probability, the collection of measurement operators $\left\{ \mathcal{A}_i  \right\}^r_{i=1} $ fulfill a local isometry property on $\tilde{T}$. That is, for a sufficiently small $ \delta > 0 $
	\begin{align}\label{ineq:deltalocalisometryoutline}
	\left(1 - \delta \right)  \Vert X \Vert^2_F \le \Big\Vert \sum^r_{i=1} \mathcal{A}_i \left( X_i\right) \Big\Vert_2^2 \le \left(1 + \delta \right) \Vert X \Vert^2_F \quad \text{for all $X = X_1 \oplus \ldots \oplus X_r \in \tilde{T}$.}
	\end{align}
In   \cite{ling2017blind}, the restricted isometry property is first shown individually on each $T_i$ and after that is shown that the images of the subspaces $T_i$ under the operator $\mathcal{A}_i$ are sufficiently near-orthogonal to each other. Combining these two properties yields \eqref{ineq:deltalocalisometryoutline}. However, the second step requires that $L$ scales quadratically in $r$.

In contrast, our analysis establishes the restricted isometry property directly on $\hat{T}$. For that, we define 
\begin{align*}
\hat{T} := \left\{  X_1 \oplus \ldots \oplus X_r \in \tilde{T}: \sum_{i=1}^r \Vert X_i \Vert_F^2 =1    \right\}.
\end{align*}
Next, we observe that \eqref{ineq:deltalocalisometryoutline} is equivalent to
\begin{align*}
\delta & \ge \underset{ X_1 \oplus \ldots \oplus X_r \in \hat{T} }{\sup} \Big\vert \ \Big\Vert \sum_{i=1}^{r} \mathcal{A}_i \left(X_i\right) \Big\Vert^2_{2}  - \sum_{i=1}^{r} \Vert X_i \Vert^2_F  \Big\vert \\
&=  \underset{ X_1 \oplus \ldots \oplus X_r \in \hat{T}  }{\sup} \Big\vert \  \Big\Vert \sum_{i=1}^{r} \mathcal{A}_i \left(X_i\right) \Big\Vert^2_{2}  -   \mathbb{E} \Big\lbrack \Big\Vert \sum_{i=1}^{r} \mathcal{A}_i \left(X_i\right) \Big\Vert^2_{2} \Big\rbrack     \Big\vert.
\end{align*}
The key idea is that the last expression can be interpreted as a suprema of chaos processes and we can use deep results from empirical process theory \cite{krahmer2014suprema} to bound this expression with high probability.

\subsection{Phase retrieval with incoherence}\label{sec_PR_incoherence}
Recall that in the phase retrieval problem we are interested in reconstructing a signal $x_0$ from measurements of the form
\begin{align}\label{phase_subgaussian_measurements}
y_k = \vert \langle a_k, x_0 \rangle \vert^2 +e_k.    
\end{align}
We have seen in Section \ref{sub:phase-retrieval} that this problem can be solved not only for Gaussian measurement vectors $\left\{ a_i \right\}$, but also for  measurement vectors that are less generic. Nevertheless, the required assumptions are somewhat more restrictive than for example in compressive sensing. In particular, for measurement vectors with unimodular entries
the problem does not even have a unique solution. 

To see that, assume that for all $k$ the entries of the vector $a_k$ have all the same modulus, i.e.
\begin{align}\label{equ:same_mod}
\vert \left( a_k \right)_1 \vert = \vert \left( a_k \right)_2 \vert = \ldots = \vert \left( a_k \right)_n \vert. 
\end{align}
In this case, both the vectors
\begin{align}\label{phase_retrieval_counterexample}
\begin{split}
x_1 &:= \left( 1,0, \ldots, 0 \right) \in \R^n \\   
x_2 &:= \left( 0,1, \ldots, 0 \right) \in \R^n  
\end{split}
\end{align}
lead to the same measurements, i.e.
\begin{align*}
 \vert \langle a_{i}, x_1 \rangle \vert^2 = \vert \langle a_{i}, x_2 \rangle \vert^2 \quad \text{for all $ i \in \left[ m \right]$.}
\end{align*}
 Hence, $x_1$ and $x_2$ cannot be distinguished based on phaseless measurements alone. We want to stress that condition \eqref{equ:same_mod} holds for several interesting classes of measurement vectors. For example, this condition is fulfilled if the entries $a_k$ are Rademacher random variables, i.e., $\left(a_k \right)_i$ is either $1$ or $-1$, each with probability $1/2$. Moreover, if each entry $\left(a_k \right)_i$ is a random variable with uniform distribution over $ S^1 \subset \C $ this condition would also be fulfilled. 

Another example, which is important for certain applications, is given by random masks \cite{candes2015phase,gross2017improved}. That is, the measurement vector $a_k$ are of the form
\begin{align*}
a_k = \diag \left( \epsilon_k \right) f_{l_k},    
\end{align*}
where $\epsilon_k \in \left\{ -1, 1 \right\}^{n}$ is a Rademacher vector and $f_{l_k}$ is the $l_k$-th column of the DFT matrix $F\in \C^{n \times n}$.

A first step to address these issues was taken in \cite{krahmer2018phase}. The key idea is to  impose an incoherence condition of the form
\begin{align}\label{phase:incoherence} 
\tfrac{\Vert x_0 \Vert_{\infty}}{\Vert x_0 \Vert_2} \le \mu < 1,
\end{align}
which prevents counterexamples of the form \eqref{phase_retrieval_counterexample}. Under such incoherence condition, one can obtain recovery guarantees for all centered random vectors with i.i.d. real-valued subgaussian entries of unit variance, including the case of Rademacher random vectors that was previously excluded. More precisely, \cite[Theorem V.1]{krahmer2018phase} yields that with high probability, all signals satisfying \eqref{phase:incoherence} for $\mu=\tfrac{1}{\sqrt{8}}$ can be recovered via \eqref{opt:noSDP} from an order-optimal number of measurements. The proof combines the golfing scheme with stability bounds of \cite{eldar_mendelson}, confirming that the golfing scheme is well-suited to deal with incoherence.
 
We note that that the incoherence condition \eqref{phase:incoherence} is much weaker than the incoherence conditions in matrix completion and blind deconvolution because it is dimension-free.  
At the same time, this approach is limited to the real case, as the underlying stability results from \cite{eldar_mendelson} exploit that the phase factors to be recovered are actually signs and hence belong to a finite candidate set. Thus for the complex case, one needs different tools, which will be discussed in Section~\ref{sec:phase_incoherence2} below.

\section{More refined descent cone analysis}\label{sec:descent_cone_refined}
\label{sec:refined}
\subsection{Application 5: blind deconvolution}\label{sec:blinddeconv_refined}
In Section \ref{sub:limitations}, we have discussed why the descent cone analysis framework described in Section \ref{section:DCA} cannot be directly applied to the matrix completion and blind deconvolution scenario. In the following, we want to outline how one can refine those methods to obtain novel insights into low-rank matrix recovery problems. For that, we are going to revisit the blind deconvolution setting, see Section \ref{section:blinddeconv}, and demonstrate how to combine a descent cone analysis with incoherence constraints to prove near-optimal bounds in settings which are relevant in practice. This improves over existing error bounds (see, e.g. \cite{ahmed2014blind}), which depend polynomially on $K$ and $N$, and hence are quite pessimistic.

More precisely, recall from Section \ref{sub:limitations} that we only expect to obtain reasonable bounds for matrices with low incoherence, as described by the set
\begin{align*}
\mathcal{H_{\mu}}:= \left\{ h_0 \in \C^K: \ \ \sqrt{L} \vert \innerproduct{ b_{\ell},h_0 } \vert \le \mu \Vert h_0 \Vert_2 \text{ for all } \ell \in \left[L\right]   \right\}.
\end{align*}

Even if the signal is contained in the set, not all principal components of a descent direction need to be incoherent as well. The key observation underlying the following theorem is that these ``coherent'' descent directions only allow for very small decrements and will hence only play a significant role for very small noise levels. Thus even under mild lower bounds on the noise level, one obtains near-optimal recovery guarantees.

\begin{theorem}\cite[Theorem 3.7]{krahmer2019convex}\label{thm:stabilityBD}
	Let $ \alpha>0 $ and $ B \in \C^{L\times K} $ such that $B^* B= \Id $. Assume that
	\begin{align*}
	L \gtrsim   \tfrac{\mu^2}{\alpha^2}  \left(K+N\right) \ \log^2 L.
	\end{align*}
	Then with high probability the following statement holds for all $ h_0 \in \mathcal{H}_{\mu} \setminus \left\{ 0 \right\} $, all $m_0 \in \C^N \setminus \left\{ 0 \right\} $, all $ \tau > 0 $, and all $ e \in \C^L $ with $ \Vert e \Vert_2 \le \tau$ :\\
	Any $\hat{X}$ minimizing the nuclear norm subject to a data fidelity term of at most $\tau$ satisfies
	\begin{align}\label{equ:noisebound}
	\Vert \hat{X} - h_0 m^*_0 \Vert_F \lesssim \tfrac{  \mu^{2/3}  \log^{2/3} L}{\alpha^{2/3}} \max \left\{  \tau  , \alpha \Vert h_0 m^*_0 \Vert_F  \right\}.
	\end{align}
\end{theorem}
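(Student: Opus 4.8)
The plan is to run a descent-cone analysis in the spirit of Lemma~\ref{lem:chandrasekaran}, but refined exactly as foreshadowed in Section~\ref{sub:limitations}: since the descent cone $\mathcal{D}(\Vert\cdot\Vert_{\ast},h_0m_0^*)$ contains ``coherent'' directions $Z$ along which $\Vert\mathcal{A}(Z)\Vert_2/\Vert Z\Vert_F$ degenerates, $\lambda_{\min}$ cannot be bounded below on the whole cone, so one must split an arbitrary descent direction into a benign part -- controlled probabilistically -- and a coherent part -- controlled via the observation that such directions admit only very small objective decrements. I would first reduce to conic geometry. Let $\hat X$ be any minimizer and $Z:=\hat X-h_0m_0^*$. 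Since $h_0m_0^*$ is feasible ($\Vert\mathcal{A}(h_0m_0^*)-y\Vert_2=\Vert e\Vert_2\le\tau$), optimality gives $\Vert\hat X\Vert_{\ast}\le\Vert h_0m_0^*\Vert_{\ast}$ and feasibility of both points gives $\Vert\mathcal{A}(Z)\Vert_2\le 2\tau$. With $T:=T_{h_0m_0^*}$ the rank-one tangent space~\eqref{equ:deftangentspace} and the subdifferential characterization~\eqref{equ:charsubdifferential} (with $r=1$), the optimality inequality yields the cone condition $\Vert\mathcal{P}_{T^{\perp}}Z\Vert_{\ast}\le\Vert\mathcal{P}_{T}Z\Vert_F$, hence $Z\in\mathcal{D}(\Vert\cdot\Vert_{\ast},h_0m_0^*)$ and $\Vert Z\Vert_F\le\sqrt{2}\,\Vert\mathcal{P}_TZ\Vert_F$. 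It therefore suffices to bound $\Vert Z\Vert_F$, uniformly over all admissible $(h_0\in\mathcal{H}_{\mu},m_0,e,\tau)$, for $Z$ in this cone with $\Vert\mathcal{A}(Z)\Vert_2\le 2\tau$.

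Next I would split the cone. Introduce a coherence functional on directions, such as $\kappa(Z):=\sqrt L\,\max_{\ell\in[L]}\Vert Z^*b_{\ell}\Vert_2/\Vert Z\Vert_F$, which measures how far $Z$ is from the flat directions; note $\sum_{\ell}b_{\ell}b_{\ell}^*=\Id$ forces $\mathbb{E}\Vert\mathcal{A}(Z)\Vert_2^2=\Vert Z\Vert_F^2$, so $\kappa$ quantifies the only obstruction to concentration of the empirical operator. Fix a threshold $\beta>0$, write $Z=Z_{\mathrm{good}}+Z_{\mathrm{bad}}$ with $Z_{\mathrm{good}}$ the component supported on directions of coherence at most $\beta$, and estimate $\Vert\mathcal{A}(Z)\Vert_2\ge\Vert\mathcal{A}(Z_{\mathrm{good}})\Vert_2-\Vert\mathcal{A}(Z_{\mathrm{bad}})\Vert_2$. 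On the benign part I would prove, uniformly over the anchor points and with high probability, $\Vert\mathcal{A}(Z_{\mathrm{good}})\Vert_2\gtrsim\Vert Z_{\mathrm{good}}\Vert_F$ provided $L\gtrsim\beta^2(K+N)\log^2 L$, via Mendelson's small-ball method (Theorem~\ref{thm:mendelson}) together with a suprema-of-chaos bound on the associated empirical width, precisely along the lines of the blind-demixing analysis in Section~\ref{sec:blind_demixing} and based on~\cite{krahmer2014suprema}; the coherence truncation $\kappa\le\beta$ is what keeps the chaos process bounded, at the price of the factor $\beta^2$ in the sample complexity. One also records the crude deterministic bound $\Vert\mathcal{A}(Z_{\mathrm{bad}})\Vert_2\le\Vert\mathcal{A}\Vert\,\Vert Z_{\mathrm{bad}}\Vert_F$ with $\Vert\mathcal{A}\Vert\asymp\sqrt{KN/L}$ (cf.\ Section~\ref{sec:dual}).

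The crux is the coherent part. Using that $\tfrac{h_0m_0^*}{\Vert h_0\Vert_2\Vert m_0\Vert_2}$ is incoherent for $h_0\in\mathcal{H}_{\mu}$ (that is, $\sqrt L\,|\langle b_{\ell},h_0\rangle|\le\mu\Vert h_0\Vert_2$), together with the cone condition and the tangent-space decomposition $\mathcal{P}_TZ=h_0a^*+b\,m_0^*$, I would show that a descent direction with a large coherent component has only a small decrement rate -- a coherent direction cannot align well with the (incoherent) subgradient $UV^*$. Feeding this into the comparison $\Vert h_0m_0^*+Z\Vert_{\ast}\ge\Vert h_0m_0^*\Vert_{\ast}+\mathrm{Re}\langle UV^*,Z\rangle+\Vert\mathcal{P}_{T^{\perp}}Z\Vert_{\ast}$ together with the optimality inequality and $\Vert\mathcal{A}(Z)\Vert_2\le 2\tau$, one concludes that $Z_{\mathrm{bad}}$ can only be sizeable if $\tau$ is itself small relative to $\mu\,\Vert h_0m_0^*\Vert_F$; concretely one aims at an estimate of the schematic form $\Vert Z_{\mathrm{bad}}\Vert_F\lesssim (\beta/\mu)\,\tau+(\mu/\beta)\,\Vert h_0m_0^*\Vert_F$, the precise dependence on $\beta,\mu$ being the content of this step.

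Finally I would balance the two parts. Combining the previous estimates through the triangle inequality, $\Vert\mathcal{A}(Z)\Vert_2\le 2\tau$, and $\Vert Z\Vert_F\le\Vert Z_{\mathrm{good}}\Vert_F+\Vert Z_{\mathrm{bad}}\Vert_F$, one obtains a family of bounds on $\Vert Z\Vert_F$ indexed by $\beta$: larger $\beta$ sharpens the benign estimate but, via $L\gtrsim\beta^2(K+N)\log^2 L$, is admissible only up to $\beta\asymp\mu/\alpha$ (up to logarithmic factors) under the hypothesis on $L$, while the coherent estimate improves as $\beta$ decreases; an elementary optimisation over $\beta$ then produces the stated prefactor $\mu^{2/3}\alpha^{-2/3}\log^{2/3}L$ and the $\max\{\tau,\alpha\Vert h_0m_0^*\Vert_F\}$. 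I expect the main obstacle to be the coherent-part step: one must turn the informal statement ``coherent descent directions allow only small decrements'' into a genuine quantitative bound on $\Vert Z_{\mathrm{bad}}\Vert_F$, which requires carefully relating the $b_{\ell}$-coherence of a matrix to the geometry of the nuclear-norm subdifferential at the incoherent point $h_0m_0^*$ and to the tangent/normal decomposition -- and doing so \emph{uniformly} over all $h_0\in\mathcal{H}_{\mu}$, all $m_0$, all noise vectors and all $\tau>0$, which forces the probabilistic estimates on the benign part to hold simultaneously for all anchor points (a net/union-bound argument layered on top of the chaos-process machinery). The benign estimate itself is technically heavy but follows the established template of~\cite{krahmer2014suprema} already used in Section~\ref{sec:blind_demixing}; the reduction and the final optimisation are routine.
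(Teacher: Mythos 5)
Your reduction to the descent cone and the overall philosophy (split the cone, treat the well-conditioned part with Mendelson's small-ball method plus chaos-process bounds, treat the remainder by a ``small decrement'' argument) matches the spirit of the paper, but the way you implement the split contains a genuine gap. First, your decomposition $Z=Z_{\mathrm{good}}+Z_{\mathrm{bad}}$ ``supported on directions of coherence at most $\beta$'' is not well defined: $\kappa(Z)=\sqrt{L}\max_{\ell}\Vert Z^*b_{\ell}\Vert_2/\Vert Z\Vert_F$ is not linear and the set $\{\kappa\le\beta\}$ is not a subspace, so there is no projection producing this additive splitting, and $Z_{\mathrm{bad}}$ need not lie in the descent cone, so the optimality inequality cannot be applied to it separately. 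Moreover, the crude bound $\Vert\mathcal{A}(Z_{\mathrm{bad}})\Vert_2\le\Vert\mathcal{A}\Vert\,\Vert Z_{\mathrm{bad}}\Vert_F$ with $\Vert\mathcal{A}\Vert\asymp\sqrt{KN/L}$ reintroduces polynomial factors in $K,N$ that can only be absorbed if $\Vert Z_{\mathrm{bad}}\Vert_F$ is correspondingly tiny --- and that is precisely what your ``crux'' step is supposed to deliver but never does: the schematic estimate $\Vert Z_{\mathrm{bad}}\Vert_F\lesssim(\beta/\mu)\tau+(\mu/\beta)\Vert h_0m_0^*\Vert_F$ is asserted, not derived, and the proposed mechanism (a coherent direction cannot align with the subgradient $UV^*$) is not the one that makes the argument work.

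The paper's route is different at exactly this point. The cone is partitioned into two subcones by the \emph{angle to the ground truth}: $\mathcal{K}_1$ consists of descent directions nearly orthogonal to $h_0m_0^*$ and $\mathcal{K}_2$ of the remaining directions; this is a case distinction on $Z$, not an additive decomposition of each $Z$. The key structural fact is that only near-orthogonal directions can be poorly conditioned, so every $Z\in\mathcal{K}_2$ inherits enough incoherence from $h_0\in\mathcal{H}_{\mu}$ that Mendelson's small-ball method gives $\lambda_{\min}(\mathcal{A},\mathcal{K}_2)$ of constant order (up to logarithms and the $\mu$-dependence), and Lemma~\ref{lem:chandrasekaran} then controls the error for such $Z$. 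For $Z\in\mathcal{K}_1$ the measurement operator is not used at all: since the nuclear-norm ball around $h_0m_0^*$ behaves locally like a Euclidean ball in near-orthogonal directions, only a short segment along such a direction stays below the nuclear-norm level set, which directly bounds $\Vert Z\Vert_F$ in terms of $\tau$ and $\alpha\Vert h_0m_0^*\Vert_F$ and produces the $\max\{\tau,\alpha\Vert h_0m_0^*\Vert_F\}$ term. To repair your argument you would need to replace your additive coherence splitting by such a case distinction and supply a genuine quantitative version of the locally-Euclidean (short descent segment) argument for the near-orthogonal cone; as written, the central estimate of your proof is missing.
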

Note that the error estimate in \eqref{equ:noisebound} depends only logarithmically on $L$.  
To illustrate this result, assume that the noise level $\tau = \epsilon \mu^{-2} \log^{-2} L$ for some $\epsilon>\epsilon_0$. Then, by setting $\alpha \asymp \epsilon_0 \mu^{-2} \log^{-2} L$, we obtain near-linear error bounds with a required sample complexity at the order of
	\begin{align*}
L \ge C_1  \tfrac{\mu^6}{\epsilon_0^2}  \left(K+N\right) \ \log^6 L.
\end{align*}
This improves over existing noise bounds as in \cite{ahmed2014blind} and shows that for large enough noise near-optimal recovery bounds are possible.
\begin{sproof}
As discussed in Section \ref{sub:limitations}, the minimum conic singular value of the descent cone at the point $h_0 m_0^* $ is ill-conditioned, i.e., there exists a matrix $Z\in \C^{K \times N}$ such that $ \tfrac{\Vert \mathcal{A} \left(Z\right) \Vert_2}{\Vert Z \Vert_F}$ is small. The key observation in the proof is that only matrices $Z$, which are near-orthogonal to the ground truth, can be poorly conditioned. This observation gives rise to the following proof strategy. Namely, we partition the descent cone of the nuclear norm at the point $h_0 m_0^*$ into two cones $\mathcal{K}_1$ and $\mathcal{K}_2$, where the cone $\mathcal{K}_1$ contains all the directions, which are almost orthogonal to the ground truth matrix $h_0 m^*_0 $. The cone $\mathcal{K}_2 $ contains all the remaining directions.
It turns out that matrices in the descent cone $\mathcal{K}_2$ inherit certain coherence properties from the matrices $h_0 m_0^*$, which allows us to apply Mendelson's small ball method to obtain a lower bound for the minimum conic singular value $ \lambda_{\min} \left( \mathcal{A},  \mathcal{K}_2\right) $, which is at the order of a constant (up to $\log$-factors and ignoring the $\mu$-dependence). Then, using Lemma~
\ref{lem:chandrasekaran}, we can control the error, which arises from the directions contained in the cone $\mathcal{K}_2$.
In order to control the error, which can arise from directions in $\mathcal{K}_1$,
we use the observation that for those directions, the nuclear norm ball around $h_0 m_0^*$ behaves locally like a euclidean ball. In particular, if the noise level $\tau$ is small, only a short segment in this direction will have smaller nuclear norm than $ h_0 m_0^* $. Hence, only a small error can occur from these near-orthogonal directions $Z \in \mathcal{K}_1 $.
\end{sproof}

\subsection{Application 6: phase retrieval with incoherence}\label{sec:phase_incoherence2}

The strategy of splitting the descent cone into two parts can also be applied to the phase retrieval problem with measurements consisting  of  arbitrary i.i.d.~subgaussian entries as introduced in Section \ref{sec_PR_incoherence}, allowing to generalize the results of \cite{krahmer2018phase} to complex-valued measurements.
This is a key step towards understanding real-world applications such as ptychography, which typically do not give rise to real-valued measurements. 
For complex-valued measurements of real-valued signals,  one obtains recovery guarantees exactly analogous to those discussed in Section \ref{sec_PR_incoherence}, where this time one requires the incoherence constraint \eqref{phase:incoherence} with parameter $\mu=\tfrac{1}{81}$, see \cite[Theorem 2]{conic_subgaussian}.

The proof of these guarantees proceeds via a descent cone analysis of the cone of all admissible directions
\begin{align*}
	\mathcal{M}_{\mu} := \text{cone} \left\{ Z \in \mathcal{S}^n  : \exists x_0 \in \mathcal{X}_{\mu}\text{ such that } x_0x^*_0 +  Z\in \mathcal{S}^n_{+}   \right\},
\end{align*}
where
\begin{align*}
\mathcal{X}_{\mu}   := \left\{ x_0 \in \R^n \setminus \left\{ 0\right\} :  \Vert x_0 \Vert_{\infty} \le \mu  \Vert x_0 \Vert_2    \right\}.
\end{align*}
In order to observe how incoherence is useful, it is instructive to consider the signal $x_0= e_1=  (1, 0, \ldots, 0) \in \R^n$. Note that the matrix $Z=e_2 e_2^T -e_1 e_1^T $ is an admissible direction, that is, $ x_0x^*_0 +  t Z\in \mathcal{S}^n_{+}  $ for a sufficiently small $t>0$. However, if the measurement-vector $a_k$ satisfies 
\begin{align*}
\vert \left( a_k \right)_1 \vert = \vert \left( a_k \right)_2 \vert = \ldots = \vert \left( a_k \right)_n \vert. 
\end{align*}
we have $\trace (a_k^* Z a_k) =0 $. The problem here is that all the mass of $Z$ is concentrated on  its diagonal. The proof in \cite{conic_subgaussian} shows that this cannot be the case, if $x_0$ is incoherent.

To extend the recovery guarantees 
one needs to address an additional difficulty. Namely, the phases of the entries of the measurement vector must be well distributed on the unit circle in $\mathbb{C}$. To see this, consider real measurements of a complex signal $x$. Then $x$ and 
$\bar{x}$
give rise to the same phaseless measurements and hence cannot be distinguished. Such ambiguities can be addressed by an additional constraints on the measurements; then the proof techniques sketched above carries over. We refer the interested reader to \cite{conic_subgaussian} for details.

\section{Conclusion}
Although many inverse problems admit a reformulation as a low-rank matrix recovery problem, as we have seen, even for the benchmark reconstruction approach via nuclear norm minimization, the structure imposed by the applications can make a significant difference.
This is true both in terms of how to analyze the reconstruction performance, as well as in terms of  the robustness results that can be expected. A key concept in this context is the role of incoherence 
that distinguishes problems with comparable performance for different signals from problems where for some signals the solution is not even unique. The golfing scheme has proven to be a useful tool to derive signal-dependent recovery guarantees for incoherent signals, but has several shortcomings such as limited geometric interpretations.
Some of these shortcomings can be addressed by a refined descent cone analysis that partitions the descent cone into multiple parts that can be analyzed separately. 
To date, however, this approach has only been applied to very few scenarios, in all of which the underlying signal is of rank one. 
Generalizing this analysis to higher rank and also precisely analyzing the performance in the small noise regime would be of great importance for generating a more comprehensive understanding of the potential and limitation of low-rank matrix recovery via nuclear norm minimization. 

\section{Acknowlegments}
This work was prepared as part of the Priority Programme \emph{Compressed Sensing in Information Processing} (SPP 1798) of the German Research Foundation (DFG).

\clearpage

\begin{appendices}

\section{Appendix: Descent cone elements are effectively low rank}
 \label{sec:effective-rank}

\textbf{Lemma 2.2} \ \textit{Suppose that $Z \in \mathbb{C}^{n_1 \times n_2}$  is contained in the nuclear norm descent cone of a rank-$r$ matrix $X \in \mathbb{C}^{n_1 \times n_2}$. Then,}
\begin{align*}
	\|Z\|_* \leq \left(1+\sqrt{2}\right) \sqrt{r} \|Z\|_F.
\end{align*}

The constant $1+\sqrt{2}$ is not optimal and could be further improved by a more refined analysis. 
The argument presented here is novel and inspired by dual certificate arguments reviewed in Section~\ref{sec:golfing}. It also requires a rectangular generalization of the pinching inequality for Hermitian matrices, see e.g.\ \cite[Problem~II.5.4]{bhatia}

\addtocounter{theorem}{6}
\begin{theorem}[(Hermitian) pinching inequality] \label{thm:hermitian-pinching}
Let $P_1,\ldots,P_L \subset \mathbb{H}_n$ be a resolution of the identity ($P_l^2=P_l$ and $\sum_l P_l = \Id$). Then,
\begin{align*}
\| X \|_* \geq \sum_{l=1}^L \left\| P_l X P_l \right\|_* \quad \text{for every $X \in \mathbb{H}_n$.}
\end{align*}
\end{theorem}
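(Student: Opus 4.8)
The plan is to realize the pinching map $\mathcal{C}(X) := \sum_{l=1}^L P_l X P_l$ as an average of conjugations of $X$ by unitaries. Since the nuclear norm is unitarily invariant, the triangle inequality then forces $\|\mathcal{C}(X)\|_* \le \|X\|_*$, while a block-structure argument identifies $\|\mathcal{C}(X)\|_*$ with $\sum_{l=1}^L \|P_l X P_l\|_*$, and combining the two gives the claim. (The case $L=1$ is trivial, so assume $L \ge 2$.)

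First I would record that the hypotheses force the projections to be \emph{mutually} orthogonal, $P_l P_{l'} = \delta_{l l'} P_l$: from $P_l = P_l \Id P_l = \sum_{l'} P_l P_{l'} P_l$ one gets $\sum_{l' \neq l} P_l P_{l'} P_l = 0$, and since each summand equals $(P_{l'} P_l)^*(P_{l'} P_l)$, which is positive semidefinite, a vanishing sum of positive semidefinite matrices is termwise zero, forcing $P_{l'} P_l = 0$ for $l' \neq l$. Next, fixing a primitive $L$-th root of unity $\omega = \mathrm{e}^{2\pi \mathrm{i}/L}$, for each multi-index $\mathbf{k} = (k_1,\dots,k_L) \in \{0,\dots,L-1\}^L$ the matrix $U_{\mathbf{k}} := \sum_{l=1}^L \omega^{k_l} P_l$ is unitary, and a character-sum computation gives
\begin{align*}
  \frac{1}{L^L} \sum_{\mathbf{k}} U_{\mathbf{k}} X U_{\mathbf{k}}^{*}
  &= \sum_{l,l'} \Big( \tfrac{1}{L^L} \sum_{\mathbf{k}} \omega^{k_l - k_{l'}} \Big) P_l X P_{l'} \\
  &= \sum_{l=1}^L P_l X P_l = \mathcal{C}(X),
\end{align*}
because the bracketed scalar equals $1$ when $l = l'$ and vanishes otherwise (being a product of full geometric sums over nontrivial characters). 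Unitary invariance of $\|\cdot\|_*$ together with the triangle inequality then yields $\|\mathcal{C}(X)\|_* \le \tfrac{1}{L^L} \sum_{\mathbf{k}} \|U_{\mathbf{k}} X U_{\mathbf{k}}^{*}\|_* = \|X\|_*$. Finally, since the ranges of $P_1,\dots,P_L$ are pairwise orthogonal and span $\mathbb{C}^n$, the matrix $\mathcal{C}(X)$ is block diagonal with diagonal blocks $P_l X P_l$; hence its singular value multiset is the union of those of the blocks and $\|\mathcal{C}(X)\|_* = \sum_{l=1}^L \|P_l X P_l\|_*$.

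I do not expect a serious obstacle here: the only points needing care are the mutual-orthogonality step and checking that passing to the block decomposition neither creates nor destroys singular values. If one prefers to avoid the unitary-averaging identity, there is an equivalent route via duality of the nuclear norm: for each $l$ pick a Hermitian contraction $W_l$ with $\langle W_l, P_l X P_l\rangle = \|P_l X P_l\|_*$, replace $W_l$ by $P_l W_l P_l$ (which leaves the pairing unchanged and does not increase the operator norm), and set $W := \sum_l P_l W_l P_l$. Pairwise orthogonality of the blocks gives $\|W\| \le 1$, and since $X$ is Hermitian, $\sum_{l=1}^L \|P_l X P_l\|_* = \langle W, X\rangle \le \|X\|_*$.
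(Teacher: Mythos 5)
Your proposal is correct, but note that the paper does not actually prove this statement: it is quoted as a known result and attributed to Bhatia (Problem~II.5.4), with the paper's own work starting only at the rectangular generalization (Corollary~\ref{cor:pinching}) and Lemma~\ref{lem:effective-rank}. What you have written is a sound, self-contained reconstruction of the classical argument: the pinching map $\mathcal{C}(X)=\sum_l P_l X P_l$ is an average of unitary conjugations $U X U^*$ with $U=\sum_l \omega^{k_l}P_l$, so unitary invariance of $\|\cdot\|_*$ and the triangle inequality give $\|\mathcal{C}(X)\|_*\le\|X\|_*$, while block-diagonality of $\mathcal{C}(X)$ with respect to the orthogonal decomposition $\mathbb{C}^n=\bigoplus_l \operatorname{ran}(P_l)$ identifies $\|\mathcal{C}(X)\|_*$ with $\sum_l\|P_lXP_l\|_*$ (only the nonzero singular values matter, so embedding each compression back into $\mathbb{C}^{n\times n}$ is harmless). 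Two points in your write-up deserve praise: you correctly observe that the stated hypotheses ($P_l\in\mathbb{H}_n$, $P_l^2=P_l$, $\sum_l P_l=\Id$) do not explicitly include mutual orthogonality, and your PSD-sum argument deriving $P_{l'}P_l=0$ for $l\neq l'$ is exactly the right way to close that gap; and your alternative via nuclear-norm duality (choosing dual contractions $W_l$, compressing them to $P_lW_lP_l$, and using that the block-diagonal sum has operator norm at most one) is also correct, arguably shorter, and extends verbatim to non-Hermitian $X$ if one takes $W_l$ to be the sign matrix of $P_lXP_l$. A cosmetic remark: averaging over all $L^L$ multi-indices is more redundant than needed; the $L$ unitaries $U_j=\sum_l\omega^{jl}P_l$, $j=0,\dots,L-1$, already satisfy $\tfrac1L\sum_j U_jXU_j^*=\mathcal{C}(X)$, but your version is equally valid.
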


We can extend pinching to general rectangluar matrices by embedding them within a larger block matrix. The \emph{self-adjoint dilation} of $Z \in \mathbb{C}^{n_1 \times n_2}$ is
\begin{align*}
\mathcal{T}(Z) = \left(
\begin{array}{cc}
0 & Z \\
Z^* & 0 
\end{array}
\right) \in \mathbb{H}_{n_1 + n_2}.
\end{align*}
Dilations preserve spectral information. In particular,
\begin{align}
\| \mathcal{T}(Z) \|_* 
=&\mathrm{tr} \left( \sqrt{\mathcal{T}(Z)^* \mathcal{T}(Z)} \right)
= \mathrm{tr} \left( 
\begin{array}{cc}
\sqrt{ZZ^*} & 0 \nonumber \\
0 & \sqrt{Z^* Z} 
\end{array}
\right) \\
=& 
\mathrm{tr} (\sqrt{Z Z^*}) + \mathrm{tr}(\sqrt{Z^* Z}) = 2 \| Z \|_*. \label{eq:dilation-nuclear-norm}
\end{align}
For simplicity, we only formulate and prove our generalization of the Hermitian pinching inequality for 
identity resolutions with two elements each. Statement and proof do, however, readily extend to more general resolutions with compatible dimensions.

\addtocounter{corollary}{1}
\begin{corollary}[Pinching for non-symmetric matrices] \label{cor:pinching}
Let $P,P^\perp \in \mathbb{H}_{n_1}$ and $Q,Q^\perp \in \mathbb{H}_{n_2}$ be two resolutions of the identity. 
Then,
\begin{align*}
\|X \|_* \geq \|P X Q \|_* + \left\| P^\perp X Q^\perp \right\|_* \quad \text{for all $X \in \mathbb{C}^{n_1 \times n_2}$.}
\end{align*}
\end{corollary}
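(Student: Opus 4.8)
The plan is to deduce the rectangular pinching bound directly from its Hermitian counterpart, Theorem~\ref{thm:hermitian-pinching}, by lifting everything to self-adjoint dilations. The first step is to package the two given resolutions into a single resolution of the identity on the dilation space $\mathbb{H}_{n_1+n_2}$. Concretely, I would set
\begin{align*}
R_1 := \begin{pmatrix} P & 0 \\ 0 & Q \end{pmatrix}, \qquad R_2 := \begin{pmatrix} P^\perp & 0 \\ 0 & Q^\perp \end{pmatrix} \in \mathbb{H}_{n_1+n_2}.
\end{align*}
Since $P, P^\perp, Q, Q^\perp$ are Hermitian idempotents with $P + P^\perp = \Id_{n_1}$ and $Q + Q^\perp = \Id_{n_2}$, a one-line block computation gives $R_i^2 = R_i$ and $R_1 + R_2 = \Id_{n_1+n_2}$, so $\{R_1, R_2\}$ is a resolution of the identity in the sense required by Theorem~\ref{thm:hermitian-pinching}.

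Next I would compute the pinched blocks of the dilation $\mathcal{T}(X)$. Multiplying out the $2\times 2$ block matrices yields
\begin{align*}
R_1\, \mathcal{T}(X)\, R_1 = \begin{pmatrix} 0 & PXQ \\ (PXQ)^* & 0 \end{pmatrix} = \mathcal{T}(PXQ), \qquad R_2\, \mathcal{T}(X)\, R_2 = \mathcal{T}(P^\perp X Q^\perp),
\end{align*}
where I used $P^*=P$ and $Q^*=Q$ to identify the lower-left block $QX^*P$ with $(PXQ)^*$. Applying Theorem~\ref{thm:hermitian-pinching} to the Hermitian matrix $\mathcal{T}(X)$ together with the resolution $\{R_1,R_2\}$ then gives
\begin{align*}
\|\mathcal{T}(X)\|_* \ \geq\ \|\mathcal{T}(PXQ)\|_* + \|\mathcal{T}(P^\perp X Q^\perp)\|_*.
\end{align*}

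Finally I would translate this back to ordinary nuclear norms via the dilation identity $\|\mathcal{T}(Z)\|_* = 2\|Z\|_*$ from \eqref{eq:dilation-nuclear-norm}, applied with $Z \in \{X,\ PXQ,\ P^\perp X Q^\perp\}$. This turns the previous display into $2\|X\|_* \geq 2\|PXQ\|_* + 2\|P^\perp X Q^\perp\|_*$, and cancelling the factor $2$ yields the claim. I do not expect a genuine obstacle here: the argument is essentially bookkeeping, and the only points that warrant care are checking that block-diagonal Hermitian idempotents again form a resolution of the identity (immediate from $R_i^2=R_i$ and $R_1+R_2=\Id$) and getting the $2\times2$ block products right, in particular that the off-diagonal blocks of $R_i\mathcal{T}(X)R_i$ remain adjoints of each other so that the result is again a dilation. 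The general case, with resolutions consisting of more than two orthogonal projections, is handled verbatim using the block-diagonal projections $R_k = P_k \oplus Q_k$ on $\mathbb{H}_{n_1+n_2}$, for which the identical computation gives $R_k\mathcal{T}(X)R_k = \mathcal{T}(P_k X Q_k)$.
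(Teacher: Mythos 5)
Your proposal is correct and follows essentially the same route as the paper: block-diagonal projections $P\oplus Q$ and $P^\perp\oplus Q^\perp$ forming a resolution of the identity on $\mathbb{H}_{n_1+n_2}$, Hermitian pinching applied to the self-adjoint dilation $\mathcal{T}(X)$, and the identity $\|\mathcal{T}(Z)\|_*=2\|Z\|_*$ to translate back. No gaps; the bookkeeping steps you flag are exactly the ones the paper verifies.
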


\begin{proof}[Corollary~\ref{cor:pinching}]
Use Eq.~\eqref{eq:dilation-nuclear-norm} to relate the nuclear norm of $X$ to the nuclear norm of its self-adjoint dilation:
\begin{align*}
2 \|X \|_* = \| \mathcal{T}(X) \|_*
= \left\| 
\left(
\begin{array}{cc}
0 & X \\
X^* & 0
\end{array}
\right)
\right\|_*.
\end{align*}
Next, we combine $P,P^\perp \in \mathbb{H}_{n_1}$ and $Q,Q^\perp \in \mathbb{H}_{n_2}$ to obtain a resolution of the identity with compatible dimension:
\begin{align*}
\left(
\begin{array}{cc}
P & 0 \\
0 & Q
\end{array}
\right), \;
\left(
\begin{array}{cc}
P^\perp & 0 \\
0 & Q^\perp
\end{array}
\right) \in \mathbb{H}_{n_1+n_2}
\end{align*}
Since everything is Hermitian,
we can apply Theorem~\ref{thm:hermitian-pinching} (original pinching) with respect to this resolution of the identity to the nuclear norm of the s.a.~dilation:
\begin{align*}
\left\|
\left(
\begin{array}{cc}
0 & X \\
X^* & 0
\end{array}
\right)
\right\|_*
\geq & 
\left\|
\left(
\begin{array}{cc}
P & 0 \\
0 & Q 
\end{array}
\right)
\left(
\begin{array}{cc}
0 & X \\
X^* & 0
\end{array}
\right)
\left(
\begin{array}{cc}
P & 0 \\
0 & Q 
\end{array}
\right)
\right\|_*
+ 
\left\|
\left(
\begin{array}{cc}
P^\perp & 0 \\
0 & Q^\perp 
\end{array}
\right)
\left(
\begin{array}{cc}
0 & X \\
X^* & 0
\end{array}
\right)
\left(
\begin{array}{cc}
P^\perp & 0 \\
0 & Q^\perp 
\end{array}
\right)
\right\|_*\\
= & \left\|
\left(
\begin{array}{cc}
0 & PXQ \\
QX^*P & 0
\end{array}
\right)
\right\|_* + \left\|
\left(
\begin{array}{cc}
0 & P^{\perp}XQ^{\perp} \\
Q^{\perp}X^*P^{\perp} & 0
\end{array}
\right)
\right\|_*.
\end{align*}\\
We can now recognize self-adjoint dilations of two rectangular matrices. Using Eq.~\eqref{eq:dilation-nuclear-norm} implies
\begin{align*}
\| \mathcal{T}(X) \|_*
\geq & \| \mathcal{T}(PXQ) \|_* + \| \mathcal{T}(P^\perp X Q^\perp) \|_*
= 2 \| PXQ \|_* + 2 \|P^\perp X Q^\perp \|_*
\end{align*}
\
\end{proof}

Next, the concept of sign functions of real numbers is extendable
to non-Hermitian matrices. Let $X \in \mathbb{C}^{n_1 \times n_2}$ be a rectangular matrix with SVD $X = U \Sigma V^*$. 
We define its sign matrix to be $\mathrm{sign}(X) = U V^* \in \mathbb{C}^{n_1 \times n_2}$. Note that this sign matrix is unitary and obeys
\begin{align*}
\langle \mathrm{sign}(X),X \rangle_F = \mathrm{tr} \left( (UV^*)^* U \Sigma V^* \right) = \mathrm{tr}(\Sigma) = \|X \|_*.
\end{align*}
The last ingredient is the dual formulation of the nuclear norm:
\begin{align*}
\| X \|_* = \max_{\|U\| \leq 1}\left| \langle U,X \rangle \right| = \max_{U \mathrm{ unitary}} \left| \langle U, X \rangle \right|.
\end{align*}

\begin{proof}[Lemma~\ref{lem:effective-rank}]
By assumption, $Z \in \mathbb{C}^{n_1 \times n_2}$ is contained in the descent cone of a rank-$r$ matrix $X$. This implies that there exists $\tau>0$ such that $\|X \|_* \geq \| X + \tau Z \|_*$. 
Apply a SVD $X=U\Sigma V^*$ and use it to define $r$-dimensional orthoprojectors $P=UU^* \in \mathbb{H}_{n_1}$, $Q=VV^* \in \mathbb{H}_{n_2}$, as well as their orthocomplements $P^\perp =\Id-P$ and $Q^\perp = \Id-Q$.
Use them to define the matrix-valued projections
\begin{align*}
\mathcal{P}_{T_X}^\perp:\; Z \mapsto P^\perp Z Q^\perp \quad \text{and} \quad \mathcal{P}_{T_X}:\; \mapsto Z-\mathcal{P}_{T_X}^\perp (Z) = PZ + ZQ - PZQ
\end{align*}
such that $Z=\mathcal{P}_{T_X}^\perp (Z) + \mathcal{P}_{T_X} (Z) = Z_{T_X}^\perp + Z_{T_X}$ and, in particular, $X_{T_X}^\perp =0$ and $X_{T_X} = X$.
In words, $\mathcal{P}_{T_X}$ projects $\mathbb{C}^{n_1 \times n_2}$ onto a subspace whose compression to the kernel of $X$ vanishes identically, namely the tangent space of $X$ (as defined in (\ref{equ:deftangentspace})).
Moreover, for every $Z \in \mathbb{C}^{n_1 \times n_2}$
\begin{align}
\mathrm{rk} \left( Z_{T_X} \right)=& \mathrm{rk} \left( PZ+(P+P^\perp) ZQ-PZQ \right)
= \mathrm{rk} \left( PZ + P^\perp ZQ \right) \nonumber \\
\leq & \mathrm{rk} \left( PZ \right) + \mathrm{rk} \left( P^\perp Z Q \right) 
\leq \mathrm{rk}(P) + \mathrm{rk}(Q) =2r,
\label{eq:projection-rank}
\end{align}
because matrix rank is subadditive and cannot increase under matrix products, for every $Z \in \mathbb{C}^{n_1 \times n_2}$.
Corollary~\ref{cor:pinching} (pinching) -- with respect to $P$ and $Q$ --  and the descent cone property of $Z$ together imply 
\begin{align*}
\| X \|_* \geq & \left\| X + \tau Z \right\|_* \geq \left\| P (X+\tau Z) Q \right\|_* + \left\| P^\perp (X + \tau Z) Q^\perp \right\|_* \\
=& \left\| X + \tau P Z Q \right\|_* + \tau \left\| P^\perp Z Q^\perp \right\|_* \\
=& \left| \langle \mathrm{sign}(X+\tau PZQ), X + \tau PZQ \rangle_F \right| + \tau \left\| P^\perp Z Q^\perp \right\|_* \\
\geq & \left| \langle \mathrm{sign} (X), X \rangle_F + \tau \langle \mathrm{sign}(X),P ZQ \rangle_F \right| + \tau \|P^\perp Z Q^\perp \|_* \\
\geq &\| X \|_* + \tau \left(- \left| \langle \mathrm{sign}(X),PZQ\rangle_F \right|+ \left\| P^\perp Z Q^\perp \right\|_* \right).
\end{align*}
Since $\tau >0$, this chain of inequalities can only be valid if 
\begin{align*}
\left\| Z_{T_X}^\perp \right\|_* = \left\| P^\perp Z Q^\perp \right\|_* \leq \left| \langle \mathrm{sign}(X),P Z Q \rangle_F \right| \leq \| \mathrm{sign}(X) \| \| PZQ \|_* \leq \sqrt{r} \| PZQ \|_F,
\end{align*}
because both $P$ and $Q$ are rank-$r$ projectors. We can combine this with a decomposition $Z=Z_{T_X}^\perp + Z_{T_X}$ and Rel.~\eqref{eq:projection-rank} to conclude
\begin{align*}
\| Z \|_* \leq & \left\| Z_{T_X}^\perp \right\|_* + \left\| Z_{T_X} \right\|_*
\leq \sqrt{r} \|P Z Q \|_F + \sqrt{\mathrm{rank}(Z_{T_X})} \|Z_{T_X} \|_F \\
\leq & \sqrt{r} \|Z \|_F + \sqrt{2r} \|Z \|_F = \left(1+\sqrt{2}\right) \sqrt{r} \|Z \|_F,
\end{align*}
because both $Z \mapsto PZQ$ and $Z \mapsto Z_{T_X}$ are contractions with respect to the Frobenius norm. 
\end{proof}
\end{appendices}
	
\clearpage
\bibliographystyle{amsalpha}
\bibliography{main-arxiv}
	
\end{document}